\documentclass[%
reprint,
nofootinbib,
amsmath,amssymb,
aps,
pra,
]{revtex4-2}
\pdfoutput=1
\usepackage[utf8]{inputenc}
\usepackage[english]{babel}
\usepackage[T1]{fontenc}
\usepackage{amsmath}
\usepackage{hyperref}

\usepackage{tikz}
\usepackage{lipsum}

\usepackage{dsfont}
\usepackage[normalem]{ulem}
\usepackage{mathtools}
\usepackage[makeroom]{cancel}
\usepackage{bbm}
\usepackage{enumerate}

\usepackage{xcolor}
\usepackage{colortbl}
\usepackage{makecell}

\usepackage{graphicx}
\usepackage{amsthm}

\definecolor{ms}{rgb}{0,.4,1}

\definecolor{colTab}{RGB}{75,170,129}
\definecolor{err}{rgb}{0.72, 0.07, 0.2}

\newcommand{\mb}[1]{\mathbb{#1}}

\newcommand{\Tr}[1]{\mathrm{Tr}\left[ #1\right]} 
\newcommand{\supp}{\mathrm{supp}}

\newcommand{\N}{\mb{N}}

\newcommand{\norm}[1]{\left\Vert #1 \right\Vert}

\newcommand{\ket}[1]{\left.\left|{#1}\right.\right\rangle}

\newcommand{\bra}[1]{\left.\left\langle{#1}\right.\right|}

\newcommand{\ketbra}[2]{\ket{#1} \!\! \bra{#2}}

\newcommand{\sandwich}[3]
{\left\langle  #1 \right| #2 \left| #3 \right\rangle}

\newcommand{\average}[2]{\left \langle #1 \right \rangle_{#2}}
\newcommand{\rom}[1]{\uppercase\expandafter{\romannumeral #1\relax}}
\newcommand{\norbra}[1]{\left( #1\right)}
\newcommand{\sqrbra}[1]{\left[ #1\right]}

\usepackage{amssymb}             

\newcommand{\de}{{\rm d}}

\newcommand{\idO}{\mathds{I}}

\newtheorem{theorem}{Theorem}

\newtheorem{lemma}{Lemma}

\usepackage{tcolorbox}
\tcbuselibrary{breakable}
\tcbuselibrary{skins}
\definecolor{my-green}{RGB}{0,144,81}
\definecolor{my-red}{RGB}{255,113,91}
\tcbset{enhanced jigsaw, colback=ms!10!white,colframe=ms!30!black,fonttitle=\bfseries}

\makeatletter 

\renewcommand\onecolumngrid{
	\do@columngrid{one}{\@ne}%
	\def\set@footnotewidth{\onecolumngrid}
	\def\footnoterule{\kern-6pt\hrule width 1.5in\kern6pt}%
}

\renewcommand\twocolumngrid{
	\def\footnoterule{
		\dimen@\skip\footins\divide\dimen@\thr@@
		\kern-\dimen@\hrule width.5in\kern\dimen@}
	\do@columngrid{mlt}{\tw@}
}%

\makeatother

\begin{document}
    
	\title{A Berry-Esseen Bound for Quantum Lattice Systems}

\author{\begingroup
\hypersetup{urlcolor=navyblue}
\href{}{Marcus Cramer
\endgroup}
}
 \affiliation{MCBH UG, Kronberg, Germany}

\author{\begingroup
\hypersetup{urlcolor=navyblue}
\href{}{Fernando G.S.L. Brand\~{a}o
\endgroup}
}
 \affiliation{Amazon Center for Quantum Computing, Pasadena, California 91106, USA}
 \affiliation{Institute for Quantum Information and Matter,
California Institute of Technology, Pasadena, California 91125, USA}

\author{\begingroup
\hypersetup{urlcolor=navyblue}
\href{}{M\u{a}d\u{a}lin Gu\c{t}\u{a}
\endgroup}
}
 \affiliation{School of Mathematical Sciences, University of Nottingham, Nottingham, NG7 2RD, United Kingdom}
  \affiliation{Centre for the Mathematics and Theoretical Physics of Quantum Non-equilibrium Systems, University of Nottingham,
Nottingham, NG7 2RD, United Kingdom}

\author{\begingroup
\hypersetup{urlcolor=navyblue}
\href{}{\'Alvaro M. Alhambra
\endgroup}
}
 \affiliation{Instituto de F\'{i}sica T\'{e}orica UAM/CSIC, C. Nicol\'{a}s Cabrera 13-15, Cantoblanco, 28049 Madrid, Spain}

	\author{\begingroup
\hypersetup{urlcolor=navyblue}
\href{}{Matteo Scandi }
\endgroup}
\affiliation{Instituto de F\'{i}sica T\'{e}orica UAM/CSIC, C. Nicol\'{a}s Cabrera 13-15, Cantoblanco, 28049 Madrid, Spain}

    

	
\begin{abstract}
It is expected that the statistical fluctuations of local observables in large quantum systems obey the central limit theorem, and approximate a normal distribution as their size grows. Here, we prove a version of the Berry-Esseen theorem for quantum lattice systems, which strengthens that central limit theorem by providing a rigorous convergence estimate towards the normal distribution for large but finite system size. Given a local quantum Hamiltonian on $N$ particles and a quantum state with a finite correlation
length, the result states that the measurement of local observables such as the energy follows a normal distribution,
up to an error scaling as $\mathcal{O}\left(N^{-\frac{1}{2}} \text{polylog}(N)\right)$, which is optimal up to logarithmic factors.
	\end{abstract}
	
	\maketitle

	\onecolumngrid
	
	\section{Introduction}
	
Quantum many-body systems are typically characterized via Hamiltonians with local interactions among the individual particles. 
This includes numerous spin models, which can describe a wealth of physical and computational scenarios of wide interest. However, the fact that their Hilbert space dimension grows exponentially with their size prevents us from having exact descriptions as soon as they reach just a few dozens of particles.  Still, what physically relevant models have in common is the locality of the interactions between the individual particles, so that quantum Hamiltonians are a very particular class of matrices within that Hilbert space. It is thus of great interest to understand which physical features can be inferred directly from locality alone. Prominent examples include the approximate light-cone structure of the dynamics through Lieb-Robinson bounds \cite{Lieb1972} or many properties at equilibrium such as bounds on their correlation structure \cite{Hastings_2006,Nachtergaele2006,Hastings_2007,Wolf_2008,Eisert_2010,Kliesch_2014}.

Another important series of features has to do with the statistical fluctuations of macroscopic observables, such as the energy or magnetization. Intuitively, locality implies that distant particles are roughly uncorrelated, and as such, approximately satisfy the assumptions behind the \emph{central limit theorem}. One can thus expect that the statistics of measuring local observables approach increasingly peaked normal distributions as the system size grows. This has been shown to be the case in the thermodynamic limit in previous works, under different sets of conditions (see \cite{Hartmann_2004, Erd_s_2014,Keating_2015} for some examples). However, in this context, it is also important to understand how these normal distributions emerge in large but finite systems, and what is the most general set of conditions under which they do. 

In statistics, for finite but large sample sizes of independent identically distributed (i.i.d.) random variables, the Berry-Esseen theorem \cite{berry1941accuracy,esseen1942liapounoff} provides a refinement of the central limit theorem, with an explicit rate of convergence towards the normal distribution as the sample size grows (see \cite{feller1991introduction,vershynin2026friendlyproofberryesseentheorem} for more pedagogical proofs). Specifically, the result states that cumulative distribution functions approach the corresponding error function in Kolmogorov distance as $\mathcal{O}(N^{-\frac{1}{2}})$. This has also been generalized to settings of weakly correlated classical variables \cite{tikhomirov1981convergence, sunklodas1984rate}. 

In this manuscript, we show that this theorem also applies to the statistical fluctuations of finite but large interacting many-body systems, even when the underlying state $\rho$ has fast-decaying spatial correlations. The result thus significantly strengthens previous works  \cite{Hartmann_2004, Erd_s_2014,Keating_2015} in two concrete ways: by considering the finite-size scaling with system size, and by considering states in arbitrary spatial dimension with decaying but non-zero correlations. Other previous related works largely focus on the concentration around the average of observable statistics in similar settings \cite{Anshu_2016,Kuwahara_2016,Kuwahara_2020,DePalma2022,Anshu_2023,Wild_2023}, rather than on the approach to Gaussianity. 

This manuscript is based on the previously unpublished results in~\cite{brandao2015berry}, and we here extend the proof to weaker notions of correlation decay, explicitly including algebraic decay. The proof closely follows classical works on the statistics of weakly correlated random variables~\cite{tikhomirov1981convergence, sunklodas1984rate}, since it also relies on showing that the corresponding characteristic function is close to a Gaussian function. It also involves intermediate lemmas controlling exponentials of non-commuting local operators, which might be of independent interest.

The result deals with universal features of the statistics of macroscopic quantum observables in a wide variety of physically relevant situations. As such, we expect it to have a range of implications in any context in which the observable statistics of interacting quantum systems are relevant, including statistical physics, many-body and complexity theory, classical and quantum algorithms, or quantum statistics and estimation theory \cite{hayashi2006quantumestimationquantumcentral,Gu__2007,Kahn2009}. 

Since the initial reporting of the main result here in \cite{brandao2015berry} (Theorem \ref{thm:BE}), the statement has already been used in technical proofs of several follow-up works on a variety of topics in quantum many-body and statistical physics. A (possibly non-exhaustive) summary of direct existing applications is the following:
\begin{itemize}
    \item It is one of key ingredients of the proof of the equivalence of the canonical and microcanonical ensembles in \cite{brandao2015equivalencestatisticalmechanicalensembles}, which holds under the only assumption of correlation decay in the canonical ensemble. See \cite{Tasaki_2018,Kuwahara_2020,KuwaharaETH} for later related results. These arguments were later extended in \cite{Bertoni_2025} to prove thermalization of low-entangled states under random energy smoothings.
    \item It can be used to upper bound the effective dimension of diagonal ensembles, thus bounding the average fluctuations of a quantum system around equilibrium and thermalization \cite{Farrelly_2017} (see also \cite{Hovhannisyan_2020} for subsequent applications in quantum thermodynamics).
    \item It is also a key technical ingredient of the feature that many-body dynamics with periodic revivals contain anomalous ``scarred" eigenstates in their spectrum \cite{ScarsProof}, since the Gaussianity implies that the initial state of the dynamics has well distributed support within those eigenstates (see \cite{Schecter_2019} for an example of a model for which the bound is tight).
    \item It is used in the proof of the  efficiency bounds of coarse-grained many-body thermometry \cite{Thermometry}, to compute the relevant Fisher information using the approximate Gaussianity of the distribution.
    \item The approximate Gaussianity of the wavefunction of product states is also leveraged in \cite{Rai_2024} to give rigorous bounds on the efficiency of energy filtering algorithms (both classical and quantum) for probing non-equilibrium dynamics \cite{LuQuantum,Schuckert2023,YilunClassical,Irmejs2024}.
    \item The Gaussianity of the density of states (when setting $\rho= \mathbb{I}/d$) is also used in \cite{Huang_2024} to characterize the entanglement of eigenstates of chaotic Hamiltonians.
\end{itemize}

The paper is structured as follows. In Sec. \ref{sec:prelim} we first explain the setting, main assumptions, and technical preliminaries. In Sec. \ref{sec:prod} we then prove the main result for the particular case of the states being product, for illustrative purposes. In Sec. \ref{sec:main}  we state and give the proof of the main result, except for that of the main technical Lemma, which we refer to Sec. \ref{app:lemmaDiff}. We conclude in Sec. \ref{sec:conclusion} with some open questions. In the Appendix we show the full proof of some of the more technical lemmas, including a proof of Esseen's original inequality and the results on series approximations to matrix exponentials.


	\section{Preliminaries}\label{sec:prelim}
	
	We consider $\mathcal{X}$ to be a lattice of $N$ sites, and let $\mathcal{H}:= \bigotimes_{i\in\mathcal{X}}\;\mathcal{H}_i$ be the associated Hilbert space. Assume $\mathcal{X}$ is equipped with a metric $d: \mathcal{X}\times\mathcal{X}\rightarrow \mathbb{N}$. This induces a canonical notion of semidistance between subsets $\mathcal{A},\, \mathcal{B}\subset\mathcal{X}$ given by:
	\begin{align}
		d(\mathcal{A},\mathcal{B}) := \min_{\substack{i\in\mathcal{A}\\j\in\mathcal{B}}}\,d(i,j)\,.
	\end{align}
	We also define the dimension $D$ of $\mathcal{X}$ to be the smallest $D\geq1$ such that there exists a constant $c_D>0$ satisfying:
	\begin{align}
		\left|\{i\in \mathcal{X}| d(i,j) = \ell\}\right|\leq c_D \, \ell^{D-1}\label{eq:dimDef}
	\end{align}
	for all $\ell$. 
	
	Then, let $\rho$ be a state acting on $\mathcal{H}$. We use the notation $\average{X}{\rho}:=\Tr{X\rho}$ to denote the averages with respect to $\rho$. We say that $\rho$ satisfies decay of correlations with correlation function $\alpha: \N^+\rightarrow\mathbb{R}^+$, if for all pairs of operators $A := \hat{A}\otimes\idO_{\mathcal{X}\setminus \mathcal{A}}$ and $B := \hat{B}\otimes\idO_{\mathcal{X}\setminus \mathcal{B}}$ it holds that:
	\begin{align}
		\left|\average{AB}{\rho}-\average{A}{\rho}\average{B}{\rho}\right|\leq\norm{A}\norm{B} \min\{|\mathcal{A}|,|\mathcal{B}|\}\,\alpha\norbra{d(\mathcal{A},\mathcal{B})}\,.\label{eq:decayOfCorrelations}
	\end{align}
    
This condition with $\alpha(l)$ exponentially decaying is known to hold in a variety of settings, including gapped ground states \cite{Hastings_2006,Nachtergaele2006}, matrix product and finitely correlated states \cite{Fannes1992,perezgarcia2007matrixproductstaterepresentations}, thermal states in 1D \cite{araki1969gibbs,Bluhm_2022,bergamaschi2026fastmixingquantumspin} and at high temperatures \cite{Kliesch_2014} and fixed points of rapidly mixing Lindbladians \cite{Kastoryano_2013,Roon_2024} among others. In most of these situations, it even holds without the pre-factor $\min\{|\mathcal{A}|,|\mathcal{B}|\}$. In long-range systems, the condition is also expected to hold if $\alpha(l)$ is a suitably fast-decaying polynomial \cite{kimura2025clustering,Kim_2025,mobusLongRange}. A similar polynomial decay is also expected to hold in critical systems at zero temperature.

In order to assess the effect of decaying terms, we also introduce the function:
	\begin{align}
		C_\alpha(\ell) := c_D\,\sum_{r=\ell}^\infty \, \alpha(r)\, {(r+1)^{D-1}}\,,\label{eq:decayOfCorrConst}
	\end{align}
	which we assume to be well defined for all $\ell\geq1$. We will consider the case of $\rho$ being a product state separately.
	
	We focus on the statistics of a Hamiltonian defined as:
	\begin{align}
		H := \sum_{j\in\mathcal{X}} \; h_j = \sum_{n} \; e_n \ketbra{n}{n}\,,
	\end{align}
	where $h_i$ are $R$-local (in the sense that they act trivially on sites that are at a distance larger than $R$), and bounded as $\|h_i\|\leq E$. Then, we define the cumulative distribution as:
	\begin{align}
		F_{H}(y) := \sum_{e_n\leq y} \; \sandwich{n}{\rho}{n}\,.
	\end{align}
	This corresponds to a probability distribution with average $\mu_H:=\average{H}{\rho}$ and variance $\sigma^2_H := \average{(H-\mu)^2}{\rho}$. A central assumption in the derivation is the following:
	\begin{align}
		\sigma_H^2 \geq c_0 E^2N\,,\label{eq:varianceScaling}
	\end{align}
	where $c_0$ is some arbitrary $\Theta(1)$ constant. 
    This assumption on the variance of the state is necessary for the normal distribution to appear. It guarantees that we are away from exceptional cases such as $\rho$ being an eigenstate of $H$ (with zero variance), or the fluctuations being confined to a small spatial region such that $\sigma_H =o(N)$. It is automatically satisfied in i.i.d. scenarios, it follows from translational invariance, and it can be efficiently checked given some standard classical description of $\rho$ (e.g. as a product state or an MPS) and $H$. Finally, let us also note that since the states we consider are short-range correlated we have that $\sigma^2_H = \mathcal{O}(N)$, so this is the fastest scaling possible too.
    
    It is also useful to introduce the re-normalized Hamiltonian $\widehat{H}$ given by:
	\begin{align}\label{eq:hhat}
		\widehat{H} := \frac{1}{\sigma_H}\,\sum_{j\in\mathcal{X}}\; \widehat{h}_j\,,
	\end{align}
	where $\widehat{h}_i:= h_i-\average{h_i}{\rho}$. This transformation allows us to have a standardized distribution with $\mu_{\widehat{H}}=0$ and $\sigma^2_{\widehat{H}}=1$. The cumulative distribution of $\widehat{H}$ is related to the one of the original Hamiltonian as:
	\begin{align}
		F_{\widehat{H}}(y) = F_H\norbra{\sigma_H\,y+\mu_H}\,.
	\end{align}
	
	Finally, let us introduce the cumulative distribution of a standard Gaussian $G(y)$ as:
	\begin{align}
		G(y) := \frac{1}{\sqrt{2\pi}}\int_{-\infty}^{y}\de x\; e^{-\frac{x^2}{2}}\,.\label{eq:GaussianCumulative}
	\end{align}
	This is what ultimately allows us to define a notion of distance between two cumulative distributions. In particular, let us define our figure of merit $\Delta$ to be:
	\begin{align}
		\Delta := \sup_y \, \left|F_{\widehat{H}}(y)-G(y)\right|\,.
	\end{align}
	
	An important Lemma we will use is given by Esseen's inequality~\cite{esseen1945fourier}:
	\begin{align}
		\Delta = \sup_y \,\left|F_{\widehat{H}}(y)-G(y)\right|\leq \frac{C}{\Omega} +\frac{2}{\pi}\int_{0}^{\Omega}\de \omega \,\,\frac{|\varphi_{\widehat{H}}({\omega})-e^{-\frac{\omega^2}{2}}|}{ {|\omega|}}\,,\label{eq:esseenInequality}
	\end{align}
	where $C$ is some numerical constant. 
	This allows us to estimate the convergence of $F_{\widehat{H}}(y)$ to $G(y)$  as the size of the lattice $N:=|\mathcal{X}|$ grows, and the main results of the paper will be stated as upper bounds on $\Delta$ in terms of $N$. We refer to App.~\ref{app:esseenInequality} for a precise statement and a proof of this Lemma. 
	
\section{Product states}\label{sec:prod}

  Before moving to the main result, we cover a narrower case with a significantly simpler proof for illustrative purposes: that in which the correlations are exactly zero and the state is product.
	
	\begin{theorem}\label{thm:BEprod}
		Let $\rho=\bigotimes_i^{ N }  \rho_i $ be a product state among all sites, and $H$ be a $R-$local Hamiltonian as above. Then
		\begin{equation}
			\Delta \le  \frac{K}{N^{\frac{1}{2}}},
		\end{equation}
		where $K= \mathcal{O}(1)$.
	\end{theorem}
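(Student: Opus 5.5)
Our plan is to use Esseen's inequality \eqref{eq:esseenInequality} with a cutoff $\Omega = c\sqrt{N}$ for a small constant $c$. The first term then gives $C/\Omega=\mathcal{O}(N^{-1/2})$. It remains to show that, for $|\omega|\le c\sqrt N$,
\begin{align}
|\varphi_{\widehat H}(\omega)-e^{-\omega^2/2}|\lesssim \frac{|\omega|^3}{\sqrt N}\,e^{-\omega^2/4}+\ldots ,
\end{align}
since this makes the integral in \eqref{eq:esseenInequality} $\mathcal{O}(N^{-1/2})$. Here $\varphi_{\widehat H}(\omega)=\average{e^{i\omega \widehat H}}{\rho}$.

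The key difficulty is that the $h_j$ do not commute. Hence $e^{i\omega\widehat H}$ does not factorize, even though $\rho$ is a product state. We will not factorize the exponential. Instead, following the Stein/Tikhomirov differential-equation approach, we study the derivative
\begin{align}
\varphi'(\omega)=\frac{i}{\sigma_H}\sum_j \average{\widehat h_j\, e^{i\omega\widehat H}}{\rho}.
\end{align}
For each $j$ we split $\widehat H=\widehat H_{B_j}+\widehat H_{B_j^c}$. Here $B_j$ contains the terms whose support lies within distance $2R$ of $\supp\, h_j$, so that $\widehat h_j$ and $\widehat H_{B_j^c}$ act on disjoint sites. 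We then expand $e^{i\omega\widehat H}$ around $e^{i\omega\widehat H_{B_j^c}}$ using the Duhamel/Dyson series to second order in $\omega\widehat H_{B_j}$. The localized piece $\widehat H_{B_j}$ has norm $\mathcal{O}(E/\sigma_H)=\mathcal{O}(N^{-1/2})$.

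\begin{itemize}
\item The zeroth-order term is $\average{\widehat h_j e^{i\omega \widehat H_{B_j^c}}}{\rho}$. To first approximation it factorizes as $\average{\widehat h_j}{\rho}\,\average{e^{i\omega\widehat H_{B_j^c}}}{\rho}=0$, because the operators sit on disjoint regions and $\rho$ is a product state. The subtlety is that the Dyson expansion conjugates $\widehat h_j$ by $e^{i s\omega \widehat H}$, which spreads its support. The cleaner route is to expand $e^{i\omega\widehat H}$ in a symmetric (Trotter-like) product $e^{i\omega\widehat H_{B_j^c}/2}\,e^{i\omega\widehat H_{B_j}}\,e^{i\omega \widehat H_{B_j^c}/2}$, up to commutator corrections. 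These corrections involve only $\mathcal{O}(1)$ terms at the boundary and are of order $\omega^2/\sigma_H^2$, or $\omega^3/\sigma_H^3$ after symmetrization.
\item The first-order term reproduces $-\omega\,\varphi(\omega)\,\sigma_H^{-2}\sum_j\average{\widehat h_j\widehat H_{B_j}\sigma_H}{\rho}$. Since $\rho$ is a product state, $\average{\widehat h_j \widehat h_k}{\rho}=0$ whenever $h_j$ and $h_k$ have disjoint supports. Hence $\sum_j\average{\widehat h_j\widehat H_{B_j}}{\rho}=\sigma_H^2$, which gives $-\omega\varphi(\omega)$ exactly, up to symmetrization errors.
\item The remaining terms are bounded by $N\cdot E^3\omega^2/\sigma_H^3=\mathcal{O}(\omega^2 N^{-1/2})$, using $\sigma_H^2\ge c_0E^2N$ from \eqref{eq:varianceScaling}. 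We will need them in the relative form $\mathcal{O}(\omega^2N^{-1/2})\,|\varphi|$ plus a smaller remainder, which requires the re-factorization step described next.
\end{itemize}

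This leads to $\varphi'(\omega)=-\omega\varphi(\omega)+r(\omega)$. Integrating gives $\varphi(\omega)-e^{-\omega^2/2}=e^{-\omega^2/2}\int_0^\omega e^{s^2/2}r(s)\,\de s$. To close the argument we must ensure that $r(s)$ carries a factor close to $|\varphi|$, or at least some Gaussian-type decay; otherwise $e^{s^2/2}$ blows up. We expect this to be the main obstacle. We will handle it by writing the error terms as $\average{X_j\,e^{i\omega\widehat H_{B_j'^c}}}{\rho}$, with $X_j$ local, and factorizing them again using the product structure. Each such term becomes $\average{X_j}{\rho}\,\varphi_{B_j'^c}(\omega)$. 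We then compare $\varphi_{B'^c}$ with $\varphi$ through a first-order estimate $|\varphi_{B'^c}-\varphi|\lesssim|\omega|N^{-1/2}$.

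A cleaner alternative, if that comparison is too lossy, is the following. Partition the lattice into $\Theta(N)$ blocks separated by buffers of width $R$. Conditioned on nothing, the block Hamiltonians act on disjoint sites, so they commute and are independent under $\rho$. We then bound the buffer (coupling) contribution to $e^{i\omega\widehat H}$ perturbatively with relative error $\mathcal{O}(|\omega|^{?})$. Either way, we will aim for $|r(s)|\lesssim (s^2+|s|^3)N^{-1/2}(|\varphi(s)|+e^{-s^2/4})$ for $|s|\le c\sqrt N$. A Grönwall argument then gives the required bound, and integrating $|\omega|^{-1}$ times it over $[0,c\sqrt N]$ yields $\Delta\le K N^{-1/2}$.
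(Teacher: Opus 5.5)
The step you flag as ``the main obstacle'' is where the whole proof lives, and the mechanisms you propose for it do not close it.

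\textbf{Why additive errors are fatal.} In the route $\varphi'=-\omega\varphi+r$, any part of $r$ not proportional to $\varphi$ is expensive. An additive error $|r(s)|\le a\,s^k$ gives $|\varphi(\omega)-e^{-\omega^2/2}|\lesssim a\,\omega^{k-1}$ for $\omega\gtrsim1$. That contributes order $a\,\Omega^{k-1}$ to Esseen's integral, so with $\Omega\sim\sqrt N$ you need $a\lesssim N^{-k/2}$. This is the $8c_5\Omega/\pi$ term of Lemma~\ref{lemma:DeltaEst1}, which forces $c_5\lesssim 1/N$.

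\textbf{Your first-order term.} It does not reduce to $-\omega\varphi(\omega)$ as you assert, because $\widehat H_{B_j}$ overlaps the support of $\widehat H_{B_j^c}$. Re-factorizing yields $-\omega$ times a weighted average (weights summing to one) of the $\varphi_{B_j'^c}(\omega)$. Replacing these by $\varphi$ via $|\varphi_{B'^c}-\varphi|\lesssim|\omega|N^{-1/2}$ leaves an additive $\omega^2/\sqrt N$. That contributes $\sim\Omega/\sqrt N=\mathcal{O}(1)$ to $\Delta$, and optimizing $\Omega$ gives only $N^{-1/4}$.

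\textbf{The symmetrized splitting.} A fixed-order splitting leaves a per-site operator error $\sim\omega^3E^3/\sigma_H^3$. After applying $\frac{1}{\sigma_H}\sum_j\widehat h_j(\cdot)$ this becomes an additive $\sim\omega^3/N$, contributing $\sim\Omega^2/N=\mathcal{O}(1)$.

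\textbf{Your target bound.} It does not rescue this. The factor $e^{-s^2/4}$ on the additive part presupposes a priori Gaussian decay of $|\varphi_{B'^c}(s)|$ up to $|s|\sim\sqrt N$, which is essentially what is being proved. Its multiplicative part $|s|^3N^{-1/2}|\varphi|$ is too large for Gr\"onwall: $\exp(\int_0^\omega s^3N^{-1/2}\,ds)$ beats $e^{-\omega^2/2}$ once $\omega\gg N^{1/4}$. You need $\mathcal{O}(s^2/\sqrt N)$ there, like the paper's $c_2\omega^2$.

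\textbf{The block/buffer alternative.} It has no small parameter. With $\Theta(N)$ blocks, the buffer terms are a constant fraction of all terms. Their normalized sum has norm $\Theta(\sqrt N)$ and generically an order-one share of the variance, so it cannot be treated perturbatively even at $\omega=\mathcal{O}(1)$.

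\textbf{The missing idea.} What the paper uses for product states is \emph{multiplicative} control of $\varphi_{\widehat H}$ through its logarithm. The cluster expansion of \cite{Wild_2023} bounds all cumulants of $H$ in a product state. Hence $\log\varphi_{\widehat H}$ is analytic for $|\omega|\le\omega^*\sigma_H/2$, with $|\log\varphi_{\widehat H}(\omega)+\omega^2/2|\le\tilde B_1\omega^3/\sqrt N$. Then $|e^z-1|\le|z|e^{|z|}$ gives $|\varphi_{\widehat H}(\omega)-e^{-\omega^2/2}|\le\tilde B_1\omega^3N^{-1/2}e^{-\omega^2/4}$ for $0\le\omega\le C'\sqrt N$. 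This is exactly the Gaussian damping your Gr\"onwall step lacks. Esseen's inequality with $\Omega=C'\sqrt N$ then finishes, with no non-commutativity bookkeeping at all.

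\textbf{Staying with the differential equation.} If you want the Stein/Tikhomirov route, you need the devices of Sec.~\ref{app:lemmaDiff}.
First, write $\langle e^{i\omega z_j^\ell(m+1)}\rangle_\rho=\langle 1+\gamma_j^{m+1}\rangle_\rho\,\varphi_{\widehat H}(\omega)+\langle(\gamma_j^{m+1}-\langle\gamma_j^{m+1}\rangle_\rho)e^{i\omega\widehat H}\rangle_\rho+\langle\Gamma_j^{m+1}\rangle_\rho$.
Second, bound the $j$-sum of the middle terms by Cauchy--Schwarz. The centered sum $\sum_j a_j(\gamma_j^{m+1}-\langle\gamma_j^{m+1}\rangle_\rho)$ has second moment of order $N$ rather than $N^2$ in units of a single term. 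This recovers the missing $\sqrt N$ and makes the additive error $\mathcal{O}(\omega^2/N)$.
Third, combine this with the order-$M$ truncated expansions of Lemma~\ref{lemma:clusterExp} and $K$ layers.
This forces $K$ and $M$ (hence $\ell$) to grow with $N$. It reproduces the polylogarithmic losses of Theorem~\ref{thm:BE}, not the $\mathcal{O}(1)$ constant claimed here.
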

	\begin{proof}
		The proof, previously sketched in \cite{Rai_2024}, is simplified by using the results from \cite{Wild_2023} bounding the cumulant series via the cluster expansion. 
		Let $\varphi_{\widehat{H}}({\omega}) := \langle e^{i\omega \widehat{H}}\rangle_{\rho} $ with $\widehat{H}$ as in Eq. \eqref{eq:hhat}, and let  $\omega^* =\frac{1}{2  e^2 R^{2D}(R^{2D}+1)E}=\mathcal{O}(1)$. It follows from \cite[Theorem 10]{Wild_2023} that, for $\omega \le \frac{\omega^* \sigma_H}{2}$, 
		\begin{align}
			\left \vert \log \varphi_{\widehat{H}}({\omega})  + \frac{\omega^2}{2}\right \vert \le  \frac{2N}{\sigma_H^3} \frac{\omega^3}{(\omega^*)^3}\leq \tilde{B}_1\,\frac{\omega^3}{N^{\frac{1}{2}}}\,,\label{eq:20}
		\end{align}
		where we implicitly used the bound $\sigma_H^2 \ge c_0 E^2 N$ and we defined $\tilde{B_1} := (c_0^{\frac{1}{2}} \omega^*E)^{-3}$.  Eq.~\eqref{eq:20} follows from the fact that the Taylor series of $ \log \varphi_{\widehat{H}}({\omega})$ has a convergence radius given by $\omega^*$, as it can be by directly bounding the coefficients of the series (i.e. the cumulants of the distribution). Applying $\vert e^z -1 \vert \le \vert z \vert e^{\vert z \vert}$ to Eq.~\eqref{eq:20}, and assuming that $\omega \le\min \{ \frac{\omega^* \sigma_H}{2} ,\frac{\sqrt{N}}{4 \tilde{B}_1} \}$, we have
		\begin{align}
			\left \vert \varphi_{\widehat{H}}({\omega}) - e^{-\frac{\omega^2}{2}}  \right \vert &=e^{-\frac{\omega^2}{2}} \left \vert e^{-(\log\varphi_{\widehat{H}}({\omega})+\frac{\omega^2}{2})} - 1 \right \vert\le \tilde{B}_1\,\frac{\omega^3}{N^{\frac{1}{2}}}\, e^{-\frac{\omega^2}{2}(1-(2\tilde{B}_1 N^{-1/2})\,\omega)} \le \tilde{B}_1\,\frac{\omega^3}{N^{\frac{1}{2}}}\, e^{-\frac{\omega^2}{4}}\,.
		\end{align}
		
		We can now combine this with Esseen's inequality Eq. \ref{eq:esseenInequality} as follows (see Appendix \ref{app:esseenInequality} or \cite{Kolassa1997}). Let $\Omega \le \min \{ \frac{\omega^* \sigma_H}{2} ,\frac{\sqrt{N}}{4 \tilde{B}_1} \}$,
		\begin{align}
			\left|F_{\widehat{H}}(y)-G(y)\right|&\leq \frac{C}{\Omega}+\frac{2}{\pi}\int_{0}^{\Omega}\de \omega \,\,\frac{|\varphi_{\widehat{H}}({\omega})-e^{-\frac{\omega^2}{2}}|}{ {|\omega|}} \le  \frac{C}{\Omega}+ \frac{2\tilde{B}_1}{\pi \,N^{\frac{1}{2}}} \int_{0}^{\Omega} \vert \omega \vert ^2 e^{-\frac{\omega^2}{4}}\de \omega \le  \frac{C}{\Omega}+ \frac{4\tilde{B}_1}{ N ^{\frac{1}{2}}  }\,.
		\end{align}
		Then, choosing  $\Omega = C' \sqrt{N} $ with   $ C'= \min \{ \frac{\omega^* \sigma_H}{2\sqrt{N}} ,\frac{1}{4 \tilde{B}_1} \}$, we obtain $\Delta \le \frac{C+4\tilde{B}_1}{C' \sqrt{N}}$.
		
	\end{proof}
    Here, the decay is as the inverse of a square root, without polylogarithmic factors, which matches (up to constant factors) the scaling of the standard case of classical i.i.d. variables.
	This result for product states also extends to Hamiltonians with polynomially decaying interactions \cite{Sanchez_Segovia_2025}, for which the cumulant series converges in the same way as in Eq.~\eqref{eq:20}.  

	\section{Main results}\label{sec:main}
    
    Our main technical result is an upper bound on $\Delta$ for the cumulative distribution function of states under the weaker assumption of correlation decay. It is stated as follows:
	\begin{theorem}\label{thm:BE}
		Let $\rho$ be a state on a $D$-dimensional lattice $\mathcal{X}$ of size $|\mathcal{X}|=N$ and $D\geq 1$, satisfying decay of correlations of the form in Eq.~\eqref{eq:decayOfCorrelations}
		with function $\alpha(\ell)$, such that $C_\alpha(\ell)$ is well-defined for all $\ell\geq1$. Let $H$ be the sum of $R$-local terms $h_i$ satisfying $\|h_i\|\leq E$, and such that $\sigma_H^2 \geq c_0 E^2N$. 
		
		Then, the distance between the cumulative distribution $F_{\widehat{H}}(y)$ and the Gaussian one satisfies:
		\begin{enumerate}
			\item let $\alpha(\ell):= L_0e^{-\ell/\xi}$, and $\xi>0$. For all $N> \max\{e^2, \,4\xi\,\frac{(\log N)^2}{\log 2}\}$ we have:\label{it:expDecay}
			\begin{align}
				\Delta\leq  f_1(N)\,\frac{(\log N)^{2D}}{\sqrt{N}} \,,\label{eq:dEst1}
			\end{align}
			for some explicit function $f_1(N)$ (see Eq.~\eqref{eq:delExp}) which can be upper-bounded by a constant. 
			\item let $\alpha(\ell):=L_0\ell^{-(D+\beta)}$, and $\beta>D+1$. For any $\varepsilon\in (0, \frac{\beta-D}{2(\beta +3D-1)})$ and $N> \max\{\lfloor 2^{\beta+3D-1}\rfloor,\,\frac{4RN^{\frac{1}{\beta+3D-1}+\frac{\varepsilon}{2D}} (\log N)}{\log 2} \}$ we have:\label{it:algDecay}
			\begin{align}
				\Delta\leq  f_2(N)\,\frac{1}{N^{\frac{1}{2}(\beta-D)/(\beta+3D-1)-\varepsilon}}\,,\label{eq:dEst2}
			\end{align}
			for some explicit function $f_2(N)$ (see Eq.~\eqref{eq:delExp2}) which can be upper-bounded by a constant. Moreover, in the limit $\varepsilon\rightarrow0$ we obtain:
			\begin{align}
				\Delta\leq  f_2(N)\,\frac{\log N}{N^{\frac{1}{2}(\beta-D)/(\beta+3D-1)}}\,.\label{eq:dEst3}
			\end{align}
			\item let $\alpha(\ell):=L_0\ell^{-(D+\beta)}$, and $\beta>D$. Assume the decay rate to satisfy the condition stronger than Eq. \eqref{eq:decayOfCorrelations}:
			\begin{align}
				\left|\average{AB}{\rho}-\average{A}{\rho}\average{B}{\rho}\right|\leq\norm{A}\norm{B} \alpha\norbra{d(\mathcal{A},\mathcal{B})}\,.\label{eq:decayOfCorrelations2}
			\end{align}
			Then, for any $\varepsilon\in (0, \frac{\beta-D}{2(\beta +3D)})$ and $N> \max\{\lfloor 2^{\beta+3D}\rfloor,\,\frac{4RN^{\frac{1}{\beta+3D}+\frac{\varepsilon}{2D}} (\log N)}{\log 2} \}$ we have:\label{it:algDecay2}
			\begin{align}
				\Delta\leq  \tilde{f}_2(N)\,\frac{1}{N^{\frac{1}{2}(\beta-D)/(\beta+3D)-\varepsilon}}\,,\label{eq:dEst22}
			\end{align}
			for some explicit function $\tilde{f}_2(N)$ (see Eq.~\eqref{eq:delExp3}) which can be upper-bounded by a constant. Moreover, in the limit $\varepsilon\rightarrow0$ we obtain:
			\begin{align}
				\Delta\leq  \tilde{f}_2(N)\,\frac{\log N}{N^{\frac{1}{2}(\beta-D)/(\beta+3D)}}\,.\label{eq:dEst32}
			\end{align}
		\end{enumerate}
	\end{theorem}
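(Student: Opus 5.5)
The plan is to follow the Tikhomirov/Sunklodas strategy for weakly dependent variables: derive an approximate differential equation for the characteristic function $\varphi(\omega):=\varphi_{\widehat{H}}(\omega)=\langle e^{i\omega\widehat{H}}\rangle_\rho$, integrate it to show $\varphi(\omega)$ is close to $e^{-\omega^2/2}$ on $|\omega|\le\Omega$, and then insert this into Esseen's inequality \eqref{eq:esseenInequality}. The starting identity is
\begin{align}
\varphi'(\omega)=\frac{i}{\sigma_H}\sum_{j\in\mathcal{X}}\langle \widehat{h}_j\, e^{i\omega\widehat{H}}\rangle_\rho .
\end{align}
For each $j$ we split $\widehat{H}=\widehat{H}_{j,\ell}+\widehat{H}_{j,\ell}^c$. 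Here $\widehat{H}_{j,\ell}$ collects the terms supported within distance $\ell$ of $j$; by \eqref{eq:dimDef} it has $\mathcal{O}(\ell^D)$ terms and norm $\mathcal{O}(\ell^D E/\sigma_H)$. The locality scale $\ell$ will be chosen as $\ell\sim\xi\log N$ in the exponential case, and as a power $N^{1/(\beta+3D-1)+\dots}$ in the algebraic cases.

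The first step is an operator lemma for non-commuting exponentials. We want to write $e^{i\omega\widehat{H}}$ as $e^{i\omega\widehat{H}^c_{j,\ell}}$ times a correction expanded to second order in $\omega\widehat{H}_{j,\ell}$, using a Dyson/Duhamel expansion. The error is controlled by $\mathcal{O}(|\omega|^3\|\widehat{H}_{j,\ell}\|^3)$ together with commutator terms $[\widehat{H}_{j,\ell},\widehat{H}^c_{j,\ell}]$. The commutator terms live only on the boundary shell and are again local, so they can be reexpanded. This yields
\begin{align}
\langle \widehat{h}_j e^{i\omega\widehat{H}}\rangle=\langle \widehat{h}_j\rangle\langle e^{i\omega\widehat{H}^c_{j,\ell}}\rangle+i\omega\langle \widehat{h}_j\widehat{H}_{j,\ell}\rangle\,\varphi(\omega)+\text{errors}.
\end{align}
The first term vanishes because $\langle\widehat{h}_j\rangle=0$ up to decoupling. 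The nonlocal factor $e^{i\omega\widehat{H}^c_{j,\ell}}$ has unbounded support, so it is handled by further splitting it into shells at distance $r$. Applying \eqref{eq:decayOfCorrelations} shell by shell, the prefactor $\min\{|\mathcal{A}|,|\mathcal{B}|\}$ is the small local side, and summing over shells produces $C_\alpha(\ell)$ from \eqref{eq:decayOfCorrConst}. In case 3, the stronger condition \eqref{eq:decayOfCorrelations2} removes that size factor, which explains the improved exponent $\beta+3D$ in place of $\beta+3D-1$. Summing over $j$, and using that $\sigma_H^{-2}\sum_j\langle\widehat{h}_j\widehat{H}_{j,\ell}\rangle=1-\mathcal{O}(N C_\alpha(\ell)/\sigma_H^2)$, we aim for
\begin{align}
\varphi'(\omega)=-\omega\varphi(\omega)+\delta(\omega),\qquad |\delta(\omega)|\lesssim \frac{N}{\sigma_H^3}\,\ell^{2D}(1+\omega^2)\,E^3+\frac{N}{\sigma_H}\,C_\alpha(\ell)\,E\,(1+|\omega|).
\end{align}

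The next step is to integrate this ODE, which gives $\varphi(\omega)-e^{-\omega^2/2}=e^{-\omega^2/2}\int_0^\omega e^{s^2/2}\delta(s)\,\de s$. The difficulty is that this only closes if $\delta(s)$ is itself proportional to $|\varphi(s)|$, or if $\omega$ is small. Following Tikhomirov, we therefore arrange the errors to carry a factor of $\varphi$-type quantities: we bound $|\langle X e^{i\omega\widehat{H}^c}\rangle|$ by $|\varphi|$ plus correlation terms. We then prove $|\varphi(\omega)|\le 2e^{-\omega^2/4}$ on $|\omega|\le\Omega\sim\sqrt{N}/\ell^{2D}$ via a Grönwall/continuity argument. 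Plugging this into \eqref{eq:esseenInequality} gives $\Delta\lesssim \ell^{2D}/\sqrt{N}+\sqrt{N}\,C_\alpha(\ell)\log\Omega$.

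Finally we optimize $\ell$. For exponential decay, $\ell\sim\xi\log N$ makes $C_\alpha(\ell)\lesssim N^{-1}$ and yields \eqref{eq:dEst1}; the hypothesis $N>4\xi(\log N)^2/\log 2$ ensures $\ell$ fits in the lattice. For algebraic decay, $C_\alpha(\ell)\sim\ell^{-\beta+1}$ (or $\ell^{-\beta}$ in case 3). Balancing $\ell^{2D}$-type terms with $N\ell^{-\beta}$, including the extra $\ell^{D}$ from the correlation prefactor, gives the exponents $(\beta-D)/(2(\beta+3D-1))$ and $(\beta-D)/(2(\beta+3D))$; the $\varepsilon$ absorbs the logarithms, and $\varepsilon\to0$ yields \eqref{eq:dEst3}, \eqref{eq:dEst32}. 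The main obstacle is the middle step: controlling $e^{i\omega\widehat{H}}$ when $\widehat{H}_{j,\ell}$ and its complement do not commute, while keeping errors of order $|\varphi|$ uniformly up to $\Omega\sim\sqrt{N}/\mathrm{polylog}$. To do this we would first try a nested Duhamel expansion with Lieb–Robinson-free locality counting, using only that commutators of local terms remain local.
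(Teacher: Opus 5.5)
Your architecture (an approximate linear ODE for $\varphi_{\widehat H}$, integration, Esseen's inequality, then optimizing $\ell$) is the paper's. The gap is in the step you yourself flag as the main obstacle: that is where the proof actually lives, and what you propose there does not close.

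\textbf{The additive $\omega^2$ error.} Your target bound leaves an additive error $\frac{NE^3\ell^{2D}}{\sigma_H^3}\omega^2\sim\ell^{2D}\omega^2/\sqrt N$. Integrating $\varphi'=-\omega\varphi+\delta$ then only gives $|\varphi_{\widehat H}(\omega)-e^{-\omega^2/2}|\lesssim\ell^{2D}\omega/\sqrt N$ at large $\omega$. This contributes $\ell^{2D}\Omega/\sqrt N=O(1)$ to Esseen's integral at $\Omega\sim\sqrt N/\ell^{2D}$; optimizing $\Omega$ instead yields only $\Delta\lesssim\ell^{D}N^{-1/4}$. The paper arranges $\varphi'=(-\omega+\eta)\varphi+\nu$ so that every $O(\ell^{2D}\omega^2/\sqrt N)$ contribution sits in $\eta$, multiplying $\varphi$. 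The purely additive $\nu$ then has $\omega^2$-coefficient $c_5=O(\ell^{5D/2}/N)$, plus a $2^{-M}$-small non-commutative piece, which is a full factor $N^{-1/2}$ better.

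Your per-$j$ remedy, bounding $|\average{Xe^{i\omega\widehat H^c}}{\rho}|$ by $|\varphi|$ plus correlation terms, cannot achieve this. After decoupling you must compare $\average{e^{i\omega\widehat H^c_{j,\ell}}}{\rho}$ with $\varphi$. The residual $\average{(Y_j-\average{Y_j}{\rho})e^{i\omega\widehat H}}{\rho}$, with $Y_j\approx e^{-i\omega\widehat H_{j,\ell}}$ (the paper's $\idO+\gamma_j^{m+1}$), is neither proportional to $\varphi$ nor a correlation error. Bounded separately for each $j$ it is $O(\omega\ell^D/\sqrt N)$. Multiplied by the $O(\omega\ell^D/\sqrt N)$ prefactor and summed over $j$, it reproduces the fatal $\ell^{2D}\omega^2/\sqrt N$.

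The missing idea is the paper's bound on $\nu_5$. Sum over $j$ inside the expectation and set $A=\sum_j a_j(\gamma_j-\average{\gamma_j}{\rho})$. Use $|\average{Ae^{i\omega\widehat H}}{\rho}|\le\average{AA^\dagger}{\rho}^{1/2}$. Then apply decay of correlations a second time to get $\average{AA^\dagger}{\rho}=O(N\,\mathrm{poly}(\ell))\max_j|a_j|^2\norm{\gamma_j}^2$ instead of the trivial $N^2$.

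For the same reason, your a priori bound $|\varphi(\omega)|\le 2e^{-\omega^2/4}$ up to $\Omega$ is not available from these estimates. Irreducible additive errors, such as $c_3\sim\sqrt N\,\ell\,\alpha(\Theta(\ell))$ and $c_5\omega^2$, are only polynomially small. The paper never needs such a bound: it integrates the linear ODE directly, keeping $|\int_0^\omega\eta|\le\omega^2/3$ via $c_1<\frac12$ and $\omega\le\Omega_3$.

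\textbf{Decoupling.} A single split $\widehat H=\widehat H_{j,\ell}+\widehat H^c_{j,\ell}$ leaves the first-order term $\average{\widehat h_j(e^{i\omega\widehat H_{j,\ell}}-\idO)e^{i\omega\widehat H^c_{j,\ell}}}{\rho}$ with adjacent supports, so it cannot be decoupled. ``Splitting $e^{i\omega\widehat H^c_{j,\ell}}$ into shells'' works only for additive quantities such as $\average{\widehat h_j z_j^\ell(1)}{\rho}$ (the paper's $\eta_2$), not for an exponential. The paper peels $K=\log_2 N$ shells of width $2R\ell$. Each product $\widehat h_j\xi_j^1\cdots\xi_j^m$ gains a factor $\le\frac12$ per shell for $\omega\le\Omega_2$ and is separated from $e^{i\omega z_j^\ell(m+1)}$ by a full shell; the remainder is $O(2^{-K})$.

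\textbf{Non-commutativity.} BCH-type commutator corrections cannot be left as additive errors: the second-order one alone gives $\omega^2\ell^{D-1}/\sqrt N$ after summation. The paper absorbs them into the local factors $R^M_{m,j},S^M_{m,j}$, truncating at $M\approx\ell/2$. The discarded tail is then $O(2^{-M})$, while supports grow only by $2RM$, so the buffer survives.

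\textbf{Exponent bookkeeping.} Your exponents do not follow from your own bound. For $\alpha(\ell)=L_0\ell^{-(D+\beta)}$, $C_\alpha(\ell)\sim\ell^{-\beta}$ irrespective of which correlation condition holds. Balancing $\ell^{2D}/\sqrt N$ against $\sqrt N C_\alpha(\ell)$ gives $\ell=N^{1/(\beta+2D)}$, not the claimed exponents. In the paper the relevant error is $c_3\sim\sqrt N\,\ell\,\alpha(\Theta(\ell))\sim\sqrt N\,\ell^{1-D-\beta}$. The factor $\ell$ comes from the support-size prefactor $\min\{|\mathcal A|,|\mathcal B|\}$ and disappears under the stronger condition. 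Balancing $c_3$ against $c_2\sim\ell^{2D}/\sqrt N\sim 1/\Omega$ gives $\ell=N^{1/(\beta+3D-1)}$, resp.\ $N^{1/(\beta+3D)}$.

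Carried out literally, this yields the case-2 rate $N^{-(\beta-D-1)/(2(\beta+3D-1))}$, up to the $N^{\varepsilon}$ or $\log N$ factor. That is consistent with the paper's requirement $\beta>D+1$ and with the $\varepsilon$-range written in its proof. So the numerator $\beta-D$ printed in case 2 of the statement is not what this route, or the paper's, actually delivers.
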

	Notably, the scaling in Eq.~\eqref{eq:dEst22} matches the one obtained in~\cite{tikhomirov1981convergence} for the one-dimensional classical case. The proof does not apply straightforwardly to the case where $H$ is long-range in the same way as Thm \ref{thm:BE} as shown in \cite{Sanchez_Segovia_2025}. However, we expect that generalization to also hold, with the proof likely requiring a more refined technical analysis of the effect of the long-ranged tails.

	The proof of Thm~\ref{thm:BE} is much more involved than that of Thm~\ref{thm:BEprod}. It is based on the following three lemmas that build on each other. The first considers a differential equation that the characteristic function must obey. It closely resembles that of a Gaussian function, with additional error terms that the result bounds. 
	
	\begin{lemma}\label{lemma:diffEq}
		Let $\varphi_{\widehat{H}}({\omega}) := \langle e^{i\omega \widehat{H}}\rangle_{\rho} $. This  satisfies the differential equation:
		\begin{align}
			\begin{cases}
				\varphi_{\widehat{H}}'({\omega}) =(-\omega + \eta(\omega))\varphi_{\widehat{H}}(\omega) + \nu(\omega)\\
				\varphi_{\widehat{H}}(0) =1
			\end{cases}\label{eq:diffEqPhi}\,,
		\end{align}
		for some explicit functions $\eta(\omega)$ and $\nu(\omega)$. 
		
		Let $K,\ell,M\in \mathbb{N}^+$ be arbitrary constants satisfying $K,\ell>1$, $2R \ell K \leq N$, and $\ell-M\geq1$. Let $\omega$ satisfy $\omega\in\sqrbra{0,\min\{\Omega_1,\,\Omega_2\}}$ (see Table~\ref{tab:constants} for the definition of these constants). Then, it holds that $|\eta(\omega)|\leq c_1 \omega + c_2 \omega^2$ and $|\nu(\omega)|\leq c_3 +c_4\omega + c_5 \omega^2$, with:
		\begin{align}
			c_1 = \norbra{\frac{(2R)^{D}}{c_0}}C_\alpha(2R(\ell-1))\,;
			\qquad\qquad\;\;&\qquad\qquad
			c_2 = \frac{B_4 }{c_0^{3/2}}\norbra{\frac{\ell^{2D}}{ \sqrt{N}}}\,;\label{eq:etaConstants}
			\\
			c_3 =\frac{8R}{\sqrt{c_0}}\norbra{\sqrt{N}\,\ell \alpha(2R(\ell - M-1))} \,;
			\qquad\qquad&\qquad\quad\;\;\;\;
			c_4 =\frac{B_2}{c_0}\norbra{\frac{\,\ell^DK^{D-1}}{2^{K-1}}}\,;\nonumber
			\\
			c_5 =\frac{ 1}{c_0^{\frac{3}{2}}}\Bigg(B_5\norbra{\frac{\ell^{2D+\frac{D}{2}}}{N}}+B_6&\norbra{\frac{1}{2^{M}\ell^{\frac{D}{2}(M-3)}\sqrt{N}}}\Bigg)\,,\label{eq:nuConstants}
		\end{align}
		where $B_2,\,B_4,\,B_5$ and $B_6$ are numerical constants, whose explicit expression is given in Table~\ref{tab:constants}. Moreover, assuming the decay rate in Eq.~\eqref{eq:decayOfCorrelations2}, we can replace $c_3$ with $\tilde{c}_3 := c_3/(4R\ell)$.
	\end{lemma}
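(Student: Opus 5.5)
We follow the Tikhomirov/Sunklodas strategy, adapted to non-commuting operators. Differentiating the characteristic function gives
\begin{align}
\varphi_{\widehat{H}}'(\omega)=\frac{i}{\sigma_H}\sum_{j}\average{\widehat{h}_j e^{i\omega\widehat{H}}}{\rho},
\end{align}
with the operator ordering fixed by $\frac{\de}{\de\omega}e^{i\omega \widehat H}=i\widehat H e^{i\omega\widehat H}$. For each $j$ we use a family of nested neighbourhoods. Let $\mathcal{B}_j^{(k)}$ denote the set of sites within distance $2Rk$ of $j$, for $k=0,\dots,\ell$. Split $\widehat{H}=\widehat{H}_{j,k}+\widehat{H}^{c}_{j,k}$, where $\widehat{H}_{j,k}$ collects the terms supported near $\mathcal{B}_j^{(k)}$. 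This yields a telescoping identity
\begin{align}
\average{\widehat{h}_j e^{i\omega\widehat H}}{\rho}=\sum_{k}\average{\widehat{h}_j\big(e^{i\omega \widehat H^{c}_{j,k}}-e^{i\omega \widehat H^{c}_{j,k+1}}\big)\,\cdot}{\rho}+\dots
\end{align}
This is the quantum analogue of Stein/Tikhomirov's decomposition. The steps are as follows.

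First, at the outermost shell ($k=\ell$) the operator $\widehat h_j$ and $e^{i\omega\widehat H^c_{j,\ell}}$ are separated by distance about $2R(\ell-M-1)$. We factorize the expectation using the decay of correlations in Eq.~\eqref{eq:decayOfCorrelations}, and $\average{\widehat h_j}{\rho}=0$ kills the leading term. The remainder gives $\nu$'s constant part $c_3$. The factor $\sqrt N\,\ell$ comes from summing over $j$ (giving $N/\sigma_H\sim\sqrt N$) and from the support factor $\min\{|\mathcal A|,|\mathcal B|\}$, which is of order $\ell$ in the relevant terms. Under Eq.~\eqref{eq:decayOfCorrelations2} the support factor disappears, giving $\tilde c_3$.

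Second, the inner shells are expanded to first order in $\omega$. We write
\begin{align}
e^{i\omega \widehat H}=e^{i\omega\widehat H^c_{j,k}}+i\omega\int_0^1 e^{is\omega\widehat H}\,\widehat H_{j,k}\,e^{i(1-s)\omega \widehat H^c_{j,k}}\,\de s
\end{align}
(Duhamel). The main first-order term is $\frac{i\omega}{\sigma_H^2}\sum_{j}\sum_{i\text{ near }j}\average{\widehat h_j\widehat h_i e^{i\omega\widehat H}}{\rho}$. We decouple the local product $\widehat h_j\widehat h_i$ from the far exponential using decay of correlations, producing
\begin{align}
-\omega\,\frac{1}{\sigma_H^2}\sum_{d(i,j)\le 2R\ell}\average{\widehat h_j\widehat h_i}{\rho}\,\varphi_{\widehat H}(\omega).
\end{align}
Adding back the pairs with $d(i,j)>2R\ell$ reconstructs $\sigma_H^2/\sigma_H^2=1$, i.e.\ the $-\omega\varphi$ term. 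The missing pairs are bounded by the tail sum $C_\alpha(2R(\ell-1))$ times $(2R)^D/c_0$, which gives $c_1\omega$ inside $\eta$. Second-order Duhamel remainders give $\omega^2\ell^{2D}/\sqrt N$ (the $c_2$ term), since each involves $\omega^2|\mathcal B|^2/\sigma_H^3$ summed over $N$ sites. They also give the $c_5$ contributions to $\nu$.

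Third, the non-commutativity must be handled. The exponential $e^{i\omega\widehat H}$ does not factor into $e^{i\omega \widehat H_{j,k}}e^{i\omega\widehat H^c_{j,k}}$. We control this with a series/product expansion of matrix exponentials, in the spirit of the appendix lemmas on exponentials of non-commuting local operators. Commutators between a local block and its complement are localized near the boundary, so nested commutators of depth $m$ involve boundary regions of size roughly $(\ell K)^{D-1}$. Truncating this series at order $K$ (respectively $M$) produces the $2^{-(K-1)}\ell^DK^{D-1}$ term ($c_4$) and the $2^{-M}\ell^{-\frac D2(M-3)}$ term ($c_6$-like, inside $c_5$). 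The truncation converges provided $\omega\le\min\{\Omega_1,\Omega_2\}$, which is where these constants enter; the condition $2R\ell K\le N$ ensures the blocks fit in the lattice.

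We expect this third step to be the main obstacle. It requires bounding norms of truncated Baker–Campbell–Hausdorff / Dyson-type series for $e^{i\omega(A+B)}$ with $A,B$ local, uniformly in $\omega$ up to order $\sigma_H/\ell^{D}$, and tracking all combinatorial factors. The first two steps are bookkeeping once the factorization error is in hand. Finally, we collect all non-$\varphi$ terms into $\nu$ and the terms proportional to $\varphi$ into $\eta$. The initial condition $\varphi(0)=1$ is trivial.
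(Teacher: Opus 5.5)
\textbf{There is a genuine gap, and it sits in your second step.}

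Decay of correlations cannot turn $\langle \widehat h_j\widehat h_i e^{i\omega\widehat H}\rangle_\rho$ into $\langle\widehat h_j\widehat h_i\rangle_\rho\,\varphi_{\widehat H}(\omega)$. The operator $e^{i\omega\widehat H}$ is not far from $j$. In your Duhamel term the global $e^{is\omega\widehat H}$ even sits between $\widehat h_j$ and $\widehat H_{j,k}$. The most clustering can give is $\langle \mathrm{local}\rangle_\rho\,\langle e^{i\omega z}\rangle_\rho$, with $z$ the part of $\widehat H$ beyond a buffer zone, and $\langle e^{i\omega z}\rangle_\rho\neq\varphi_{\widehat H}(\omega)$.

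The paper needs a separate re-coupling step for this. It writes $e^{i\omega z_j^\ell(m)}=(\idO+\gamma_j^m(\omega))e^{i\omega\widehat H}+\Gamma_j^m(\omega)$, with $\gamma_j^m$ quasi-local (built from $S^M_{m,j}$). This gives
$\langle e^{i\omega z_j^\ell(m)}\rangle_\rho=\langle\idO+\gamma_j^m\rangle_\rho\,\varphi_{\widehat H}(\omega)+\langle(\gamma_j^m-\langle\gamma_j^m\rangle_\rho)e^{i\omega\widehat H}\rangle_\rho+\langle\Gamma_j^m\rangle_\rho$.
The fluctuation term cannot be bounded site by site. It is bounded only after summing over $j$: Cauchy--Schwarz against $e^{i\omega\widehat H}$, combined with a variance estimate for $\sum_j c_j(\gamma_j^m-\langle\gamma_j^m\rangle_\rho)$ from decay of correlations, gains a factor $\sqrt N$. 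That is exactly where the $B_5\ell^{5D/2}/N$ part of $c_5$ comes from.

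Without this argument, any remainder not of the form (scalar)$\times\varphi_{\widehat H}(\omega)$ cannot be placed in $\eta$. Your second-order Duhamel remainders are of this kind and must go into $\nu$, at size $\omega^2\ell^{2D}/\sqrt N$. That is the size of $c_2$, not of the stated $c_5$. This is not cosmetic. Terms in $\nu$ are not damped by $\varphi_{\widehat H}$, and in Lemma~\ref{lemma:DeltaEst1} $c_5$ is multiplied by $\Omega\sim\sqrt N/\ell^{2D}$, so a $c_5$ of that size gives an $O(1)$ error.

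\textbf{The second missing ingredient is the multi-layer iteration.} A single neighbourhood with a first-order expansion leaves nothing that can be decoupled: the first-order local factor sits directly against the complement exponential, with no buffer. The paper iterates over $K$ layers $\widehat H_j^\ell(m)$ of width $2R\ell$, using $e^{i\omega z_j^\ell(m-1)}=(\idO+\xi_j^m(\omega))e^{i\omega z_j^\ell(m)}+\Xi_j^m(\omega)$. This has two consequences:
\begin{enumerate}
\item Every term $\langle\widehat h_j\xi_j^1\cdots\xi_j^m e^{i\omega z_j^\ell(m+1)}\rangle_\rho$ has a buffer of width $2R(\ell-M-1)$. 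The $M$ there comes from the support spread of $R^M_{m,j}$, and $c_3$ arises by summing these decoupling errors over all layers, not only the outermost one.
\item The leftover $\langle\widehat h_j\xi_j^1\cdots\xi_j^K e^{i\omega z_j^\ell(K)}\rangle_\rho$ is a $K$-fold product of factors of norm at most about $1/2$ for $\omega\le\Omega_2$. This, not a BCH truncation at order $K$, is what produces $c_4\propto\ell^DK^{D-1}2^{-(K-1)}$ and the condition $2R\ell K\le N$.
\end{enumerate}

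Your order-$M$ truncation does match the paper's $R^M_{m,j},S^M_{m,j}$ and the $B_6$ part of $c_5$. But without the layered peeling and the $\sqrt N$ re-coupling argument, the proposal does not give the stated bounds on $\eta$ and $\nu$.
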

	Notice that if $\nu(\omega)=\eta(\omega)=0$ then $\varphi_{\widehat{H}}({\omega})=e^{-\frac{\omega^2}{2}}$ is exactly Gaussian.
	The second lemma bounds the effect of the error terms in the differential equation, and how the characteristic function resembles a Gaussian for small $\omega$.
	
	\begin{lemma}\label{lemma:boundPhi} 
		In the same language of Lemma~\ref{lemma:diffEq}, 
		assume $\ell$ big enough so that $c_1<\frac{1}{2}$, and let $\omega\in[0,\min\{\Omega_1,\,\Omega_2,\,\Omega_3\}]$, where $\Omega_3:=\frac{c_0 \sigma_H}{4B_4E \ell^{2D}}$. Then, it holds that:
		\begin{align}
			|\varphi_{\widehat{H}}({\omega}) - e^{-\frac{\omega^2}{2}}|&\leq \,e^{-\frac{\omega^2}{6}}\omega^2\norbra{\frac{c_1}{2} +\frac{c_2}{3}\omega}+4\norbra{c_4 + c_5 \omega}\norbra{1-e^{-\frac{\omega^2}{4}}}+c_3 \min\bigg\{\frac{8}{ \omega},\,\omega\bigg\}\,,\label{eq:diffPhiGau}
		\end{align}
		where $c_1,\,c_2,\,c_3,\,c_4,\, c_5$ are the constants defined in Lemma~\ref{lemma:diffEq}.
	\end{lemma}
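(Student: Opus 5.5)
The plan is to integrate the linear ODE of Lemma~\ref{lemma:diffEq} exactly by variation of constants and then bound each resulting error term separately. Writing $\varphi:=\varphi_{\widehat{H}}$ and $\Lambda(\omega):=\int_0^\omega \eta(s)\,\de s$, Eq.~\eqref{eq:diffEqPhi} with $\varphi(0)=1$ gives
\begin{align}
\varphi(\omega)= e^{-\frac{\omega^2}{2}+\Lambda(\omega)}+\int_0^\omega \de s\; \nu(s)\, e^{-\frac{\omega^2-s^2}{2}+\Lambda(\omega)-\Lambda(s)}\,.
\end{align}
Hence $|\varphi(\omega)-e^{-\omega^2/2}|\leq e^{-\omega^2/2}|e^{\Lambda(\omega)}-1| + \int_0^\omega |\nu(s)|\,e^{-\frac{\omega^2-s^2}{2}+|\Lambda(\omega)-\Lambda(s)|}\de s$. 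The first term produces the $c_1,c_2$ contribution, and the integral produces the $c_3,c_4,c_5$ contributions.

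For the first term, $|\eta|\leq c_1 s+c_2 s^2$ gives $|\Lambda(\omega)|\leq \frac{c_1}{2}\omega^2+\frac{c_2}{3}\omega^3$. We then use $|e^z-1|\leq |z|e^{|z|}$. The hypothesis $c_1<\frac12$ bounds $\frac{c_1}{2}\omega^2$ by $\frac{\omega^2}{4}$. The restriction $\omega\leq\Omega_3$ is designed so that $\frac{c_2}{3}\omega\leq\frac{1}{12}$: indeed $c_2\omega\leq \frac{B_4\ell^{2D}}{c_0^{3/2}\sqrt N}\cdot\frac{c_0\sigma_H}{4B_4E\ell^{2D}}$, which is $O(1)$ because $\sigma_H\sim E\sqrt{N}$ up to constants. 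I will check that this constant works out (using $\sigma_H^2\leq$ the natural $O(E^2N)$ scale, or else absorbing it), so that $|\Lambda(\omega)|\leq \frac{\omega^2}{3}$. This yields $e^{-\omega^2/2}e^{|\Lambda|}\leq e^{-\omega^2/6}$, and hence the first summand $e^{-\omega^2/6}\omega^2(\frac{c_1}{2}+\frac{c_2}{3}\omega)$.

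For the integral, the same bounds show that for $s\leq\omega$ we have $|\Lambda(\omega)-\Lambda(s)|\leq \frac{1}{4}(\omega^2-s^2)$, since $|\eta(t)|\leq t(c_1+c_2t)\leq t/2$. The kernel is therefore at most $e^{-(\omega^2-s^2)/4}$. We then split $|\nu(s)|\leq c_3+(c_4+c_5 s)s^{0}$, keeping $c_4+c_5s\leq c_4+c_5\omega$ and treating $c_3$ separately. For the $c_4,c_5$ part, the substitution $u=\omega^2-s^2$ gives $\int_0^\omega e^{-(\omega^2-s^2)/4}\de s$. I expect the factor $4(1-e^{-\omega^2/4})$ to come from bounding $\int_0^\omega s\,e^{-(\omega^2-s^2)/4}\de s=2(1-e^{-\omega^2/4})$. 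This would require a factor $s$ in $\nu$, which is present for $c_4\omega$ and $c_5\omega^2$ but not for the constant $c_3$. So the decomposition should be $|\nu(s)|\leq c_3+s(c_4+c_5 s)$, bounded by $c_3+s(c_4+c_5\omega)$. That gives $2(c_4+c_5\omega)(1-e^{-\omega^2/4})$, which is within the stated factor $4$. For the $c_3$ part we need $\int_0^\omega e^{-(\omega^2-s^2)/4}\de s\leq\min\{\omega,8/\omega\}$. The bound $\leq\omega$ is trivial. For large $\omega$ we split at $s=\omega/2$: on $[0,\omega/2]$ the integrand is $\leq e^{-3\omega^2/16}$, contributing $\frac{\omega}{2}e^{-3\omega^2/16}\lesssim 1/\omega$; on $[\omega/2,\omega]$ we use $e^{-(\omega^2-s^2)/4}\leq \frac{2s}{\omega}e^{-(\omega^2-s^2)/4}$, whose integral is $\leq \frac{4}{\omega}$. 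Together these give at most $8/\omega$.

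The main obstacle I expect is the constant bookkeeping in the first step, namely showing that $\Omega_3$ really forces $c_2\omega$ small enough, uniformly, so that $e^{|\Lambda|}$ is absorbed into the Gaussian with a loss only from $\frac12$ to $\frac16$ in the exponent. The remaining steps are elementary Gaussian integral estimates, together with the requirement that $\omega\leq\min\{\Omega_1,\Omega_2\}$ so that Lemma~\ref{lemma:diffEq}'s bounds on $\eta,\nu$ hold along the whole integration path $[0,\omega]$.
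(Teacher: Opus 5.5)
Your overall route matches the paper's: variation of constants, the homogeneous part controlled via $|\Lambda(\omega)|\le\omega^2/3$ (your $|e^z-1|\le|z|e^{|z|}$ does the same job as the paper's differentiate-and-integrate step), and Gaussian-kernel integrals for the $\nu$-part. The kernel bound, however, fails as written. From $c_1<\frac12$ and $c_2t\le\frac14$ you only get $c_1+c_2t<\frac34$, not $c_1+c_2t\le\frac12$, so $|\Lambda(\omega)-\Lambda(s)|\le\frac14(\omega^2-s^2)$ is unjustified. What you can prove is a kernel $e^{-a(\omega^2-s^2)}$ with either $a=\frac18$ (from $|\eta(t)|\le\frac34 t$) or $a=\frac{1-c_1}{4}$. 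The paper obtains the latter by writing the exponent as $-\frac{1-c_1}{4}(\omega^2-s^2)-[g(\omega)-g(s)]$ with $g(x)=\frac{1-c_1}{4}x^2-\frac{c_2}{3}x^3$, which is increasing on $[0,\frac{1-c_1}{2c_2}]$. With such $a$ your $c_4,c_5$ estimate still closes, since $\int_0^\omega s\,e^{-a(\omega^2-s^2)}\,ds=\frac{1-e^{-a\omega^2}}{2a}\le 4(1-e^{-\omega^2/4})$ for $\frac18\le a\le\frac14$. That is exactly where the statement's factor $4$ comes from, so your factor-$2$ slack is used up. The $c_3$ estimate does break. Your split at $s=\omega/2$ gives $\frac{\omega}{2}e^{-3a\omega^2/4}+\frac{1-e^{-3a\omega^2/4}}{a\omega}$, and at $a=\frac18$ this reaches about $8.4/\omega$, above the stated $8/\omega$. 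Use instead $\omega^2-s^2\ge\omega(\omega-s)$, which gives $\int_0^\omega e^{-a(\omega^2-s^2)}ds\le\frac{1-e^{-a\omega^2}}{a\omega}\le\frac{8}{\omega}$ directly. The paper reaches the same bound by comparing derivatives of $e^{a\omega^2}/(a\omega)$ and $\int_0^\omega e^{as^2}ds$ for $\omega\ge a^{-1/2}$.

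The smallness of $c_2\omega$ that you defer is also a real issue, and neither of your suggested fixes works. With $c_2$ as stated in Lemma~\ref{lemma:diffEq}, $c_2\Omega_3=\frac{\sigma_H}{4\sqrt{c_0}E\sqrt N}\ge\frac14$, with equality only when $\sigma_H^2=c_0E^2N$. An upper bound $\sigma_H^2=O(E^2N)$ only produces an unspecified constant, and that constant cannot be absorbed because the exponent $\frac16$ and the prefactors in the statement are fixed. The repair is to use the coefficient that the proof of Lemma~\ref{lemma:diffEq} produces before it invokes $\sigma_H^2\ge c_0E^2N$, namely $c_2'=B_4E^3N\ell^{2D}/\sigma_H^3$. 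Then $c_2'\Omega_3=\frac{c_0E^2N}{4\sigma_H^2}\le\frac14$, everything above goes through with $c_2'$ in the exponents and prefactors, and at the end $c_2'\le c_2$. The paper's own assertion that $\Omega_3\le\frac{1-c_1}{2c_2}$ has the same defect and needs the same repair.
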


   Lemmas \ref{lemma:diffEq} and \ref{lemma:boundPhi} together imply that the characteristic function, which can be identified with a Loschmidt echo \cite{Gorin2006,Wisniacki_2012} is close to a Gaussian for short times, which might be of independent interest \cite{Izrailev_2006,TorresGaussian}. 
	
	Finally, the last lemma shows how the error term in the Gaussian approximation of the characteristic function bounds the distance of the cumulative distribution from the normal one, in the form of an upper bound in $\Delta$.
	
	\begin{lemma}\label{lemma:DeltaEst1} 
		In the same language of Lemma~\ref{lemma:diffEq}, and $\Omega\in\sqrbra{0,\min\{\Omega_1,\,\Omega_2,\,\Omega_3\}}$:
		\begin{align}
			\Delta \leq \frac{C}{\Omega}  + \frac{6 \,c_1}{2\pi}+\frac{\sqrt{6}\, c_2}{\sqrt{\pi}}+\frac{8}{\pi}(c_3+ c_4)\norbra{1+\,\log(\Omega)}+\frac{8\,c_5\,\Omega}{\pi}\,,
		\end{align}
		where $c_1,\,c_2,\,c_3,\,c_4,\, c_5$ are the constants defined in Lemma~\ref{lemma:diffEq}. 
	\end{lemma}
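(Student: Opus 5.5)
The plan is to plug the pointwise bound of Lemma~\ref{lemma:boundPhi} into Esseen's inequality Eq.~\eqref{eq:esseenInequality} and integrate term by term. Since $\Omega\le\min\{\Omega_1,\Omega_2,\Omega_3\}$, every $\omega\in[0,\Omega]$ satisfies the hypotheses of Lemma~\ref{lemma:boundPhi}. Dividing Eq.~\eqref{eq:diffPhiGau} by $\omega$ gives
\begin{align}
\frac{|\varphi_{\widehat{H}}(\omega)-e^{-\frac{\omega^2}{2}}|}{\omega}\leq e^{-\frac{\omega^2}{6}}\norbra{\frac{c_1}{2}\omega+\frac{c_2}{3}\omega^2}+4\norbra{c_4+c_5\omega}\frac{1-e^{-\frac{\omega^2}{4}}}{\omega}+c_3\min\bigg\{\frac{8}{\omega^2},\,1\bigg\}\,.
\end{align}
We then bound $\frac{2}{\pi}\int_0^\Omega$ of each of the three groups separately. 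For the Gaussian terms we extend the integral to $[0,\infty)$ and use $\int_0^\infty \omega e^{-\omega^2/6}\de\omega=3$ and $\int_0^\infty\omega^2e^{-\omega^2/6}\de\omega=\frac{3\sqrt{6\pi}}{4}$. This yields contributions of order $c_1$ and $c_2$, with the stated form $\frac{6c_1}{2\pi}$ and $\frac{\sqrt6 c_2}{\sqrt\pi}$, after being slightly generous with the constants.

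For the $c_4,c_5$ group, write $g(\omega)=(1-e^{-\omega^2/4})/\omega$. For $\omega\le1$ we use $1-e^{-x}\le x$, so $g(\omega)\le\omega/4\le 1$. For $\omega\ge1$ we use $g(\omega)\le1/\omega$. Hence $\int_0^\Omega g\le 1+\log\Omega$ (when $\Omega\ge1$; otherwise the bound is trivial), which gives the term $\frac{8}{\pi}c_4(1+\log\Omega)$. For the $c_5$ piece we use $\omega g(\omega)\le1$, so $\int_0^\Omega \omega g\le\Omega$, giving $\frac{8c_5\Omega}{\pi}$.

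For the $c_3$ group we split at $\omega=\sqrt8$. One piece is $\int_0^{\sqrt8}1\,\de\omega$, and the other is $\int_{\sqrt8}^\Omega 8\omega^{-2}\de\omega\le\sqrt8$. The result is a constant, which we absorb into $\frac{8}{\pi}c_3(1+\log\Omega)$. To do this we must either check that $\Omega$ is large enough that $1+\log\Omega$ dominates this constant, or bound $\min\{8/\omega^2,1\}\le \min\{8/\omega,\ldots\}$ more crudely as $\lesssim 1/\omega$ for $\omega\ge1$ so that it matches the logarithmic form used for $c_4$. Adding $C/\Omega$ from Esseen's inequality completes the bound.

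The main obstacle is purely bookkeeping: matching numerical constants exactly, and handling the regime $\Omega<1$, where $\log\Omega$ is negative. I expect $\Omega\ge1$ to be implicit, since $\Omega$ is ultimately chosen to grow with $N$, and I would state this assumption explicitly.
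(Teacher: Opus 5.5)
Your proposal is correct and is essentially the paper's proof: insert Lemma~\ref{lemma:boundPhi} into Esseen's inequality, extend the Gaussian pieces to $[0,\infty)$, and split the remaining integral. The paper splits both the $c_3$ and $c_4$ integrals at $\omega=2$ and so tacitly assumes $\Omega\ge 2$, which is exactly the implicit lower bound on $\Omega$ you flag; note that for $\Omega<1$ the bound is not automatic as you suggest, since $1+\log\Omega$ can be negative. Two bookkeeping fixes: $\int_0^\infty\omega^2e^{-\omega^2/6}\,\de\omega=\frac{3\sqrt{6\pi}}{2}$ rather than $\frac{3\sqrt{6\pi}}{4}$, which gives exactly $\frac{\sqrt{6}\,c_2}{\sqrt{\pi}}$ with no slack, and your split at $\sqrt{8}$ already absorbs the $c_3$ term for every $\Omega\ge1$, because the integral equals $\Omega\le 4(1+\log\Omega)$ when $\Omega\le\sqrt{8}$ and is below $4\sqrt{2}<4(1+\log\sqrt{8})$ otherwise.
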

	
	In the rest of the section, we first prove Thm~\ref{thm:BE}, and then we move on to prove Lemma~\ref{lemma:boundPhi} and~\ref{lemma:DeltaEst1}. The proof of Lemma~\ref{lemma:diffEq} is much more technical and, for this reason, we defer it to Sec.~\ref{app:lemmaDiff}.
	
	\setlength{\tabcolsep}{0pt}
	\renewcommand{\arraystretch}{1.6} 
	\begin{table}
		\centering
		\begin{tabular}{|c!{\vrule width 1.2pt}c|}
			\hline
			\rowcolor{colTab!25}
			\;\;\;Constant\;\;\;&\;\; Definition\;\;  \\
			\Xhline{1.2pt}
			$\Gamma$  &  $\qquad\qquad\qquad\max\{(4c_D (2R)^D ),  (2c_D^2(2R)^{D})\}\qquad\qquad\qquad$ \\
			\hline
			$s_D$  & $\sum_{m=0}^{\infty}\; (m+2)^{D}2^{-m}$  \\
			\hline
			$B_1 $  & $(c_D(2R)^{D}+\sqrt{2}\,\Gamma)^2$  \\
			\hline
			$B_2 $  & $2\,(c_D(2R)^D + \Gamma)$  \\
			\hline
			$B_3$  & $2\,(c_D (2R)^{D}+2^{\frac{D-1}{2}}\Gamma)$  \\
			\hline
			$B_4 $ & $(B_1+B_3^2(1+2s_{2(D-1)}))$  \\
			\hline
			$B_5$  & $(12B_2^2R^{\frac{D}{2}}\sqrt{(1+C_\alpha(1))}\,s_{3(D-1)})$  \\
			\hline
			$B_6$ & \qquad\qquad$(\Gamma^{2} s_{D-1}) $ (in general) \textbf{or} $0$ (for commuting models)  \qquad\qquad\qquad  \\
			\Xhline{1.2pt}
			$\Omega_1 $ & $\sigma_H\big(2E\,\Gamma \ell^{\frac{D}{2}} K^{\frac{D-1}{2}}\big)^{\!-1}$ \\
			\hline
			$\Omega_2$ &$\sigma_H(2B_2\,E\ell^D K^{D-1})^{-1}$   \\
			\hline
			$\Omega_3$ &$c_0 \sigma_H\norbra{B_4E \ell^{2D}}^{-1}$ \\
			\hline
		\end{tabular}
		\caption{Numerical constants used during the derivation.}
		\label{tab:constants}
	\end{table}
	
	\begin{proof}[Proof of Thm.~\ref{thm:BE}.\ref{it:expDecay}] 
		If $\alpha(\ell) = L_0e^{-\frac{\ell}{\xi}}$, $C_\alpha(\ell)$ satisfies $C_\alpha (\ell) \leq c_DL_0\,\xi  (\ell+1)^{D-1}e^{-\frac{\ell}{\xi}}$.
		
		Let us choose:
		\begin{align}
			\begin{cases}
				\ell=\frac{\xi}{R}\log N+1\\
				M = \frac{\ell-1}{2} =\frac{\xi}{2R}\log N\\
				K = \log_2 N
			\end{cases}\,,
		\end{align}
		where we assume $N> \lfloor e^{2R/\xi}\rfloor$, so that $M>1$, and that $N\geq 4\xi\,\frac{(\log N)^2}{\log 2}$, so that $2R \ell K \leq N$ is satisfied, and we can apply Lemma~\ref{lemma:diffEq}--\ref{lemma:DeltaEst1}. With this choice, we have: 
		\begin{align}
			\begin{cases}
				\Omega_1 \geq \norbra{\frac{\sqrt{c_0}\,(\log 2)^{\frac{D-1}{2}}R^{\frac{D}{2}}}{2 \Gamma(2\xi)^{\frac{D}{2}}}}\frac{\sqrt{N}}{(\log N)^{D-\frac{1}{2}}}\, 
				\\
				\Omega_2 \geq \norbra{\frac{\sqrt{c_0}\,(\log 2)^{D-1}R^D}{2B_2 (2\xi)^D}}\frac{\sqrt{N}}{(\log N)^{2D-1}} \,
				\\
				\Omega_3\geq\norbra{\frac{c_0\sqrt{c_0}R^{2D}}{4B_4(2\xi)^{2D}}}\frac{\sqrt{N}}{(\log N)^{2D}} 
			\end{cases}\,.
		\end{align}
		Let $B_7 := \min\left\{\norbra{\frac{\sqrt{c_0}\,(\log 2)^{\frac{D-1}{2}}R^{\frac{D}{2}}}{2 \Gamma(2\xi)^{\frac{D}{2}}}},\,\norbra{\frac{\sqrt{c_0}\,(\log 2)^{D-1}R^D}{2B_2 (2\xi)^D}},\,\norbra{\frac{c_0\sqrt{c_0}R^{2D}}{4B_4(2\xi)^{2D}}}\right\}$, and let:
		\begin{align}
			\Omega := B_7\,\frac{\sqrt{N}}{(\log N)^{2D}}\,,
		\end{align}
		so that ${\Omega\in\sqrbra{0,\min\{\Omega_1,\,\Omega_2,\,\Omega_3\}}}$, and we can apply Lemma~\ref{lemma:DeltaEst1}. Then, we obtain:
		\begin{align}
				\Delta &\leq \frac{(\log N)^{2D}}{\sqrt{N}}\Bigg(\frac{C}{B_7} +  \frac{6c_DL_0\xi^{D}}{(2\pi c_0R^{D-1})(\log N)^{D+1}N^{\frac{3}{2}}} + \frac{\sqrt{6}B_4 (2\xi)^{2D}}{\sqrt{\pi}c_0^{3/2}R^{2D}}+\nonumber
			\\
			& \qquad\qquad+ \frac{8}{\pi}\norbra{\frac{16\xi L_0}{\sqrt{c_0}\,(\log N)^{2D-2}}+\norbra{\frac{2B_2(2\xi)^D}{c_0R^D(\log 2)^{D-1}\,\sqrt{N}}}}\norbra{\frac{(1+\log(B_7))}{(\log N)}+\norbra{\frac{1}{2}- 2D\,\frac{\log(\log N)}{\log N}}} +\nonumber 
			\\
			& \;\;\;\;\qquad\qquad\qquad\qquad\qquad\qquad\qquad\qquad\qquad+\frac{8 B_7B_5 (2\xi)^{\frac{5D}{2}}}{\pi c_0^{\frac{3}{2}}R^{\frac{5D}{2}}(\log N)^{\frac{3D}{2}}}+\frac{8 B_7 B_6  }{\pi c_0^{\frac{3}{2}}N^{(\frac{\xi(\log 2)}{2R}-\frac{1}{2})}\log N^{\frac{D}{2}(\frac{\xi}{2R}\log N+5)}}\Bigg)\,.\label{eq:delExp}
		\end{align}
		It should be noticed that for the last term in the parenthesis to decay, we need $\frac{\xi(\log 2)}{2R}-\frac{1}{2}\geq 0$, that is $\xi\geq \frac{R}{(\log 2)}$. This restriction arises from the factor $2^M$ in the denominator of $c_5$ in Eq.~\eqref{eq:nuConstants}, so it is likely a proof artifact. In fact, for any constant $x\in \mathbb{R}^+$ satisfying $x\geq 2$, we can define $c_5$ so that it has $x^M$ in the denominator instead of $2^M$ (of course, this cannot be done at will, as this procedure reduces the range of $\Omega$ by a factor of $2/x$). For this reason, for any finite $x$, we can prove Eq.~\eqref{eq:dEst1} in the range $\xi\geq \frac{R}{(\log x)}$. As such, the claim holds for $\xi>0$. 
	\end{proof}
	\begin{proof}[Proof of Thm.~\ref{thm:BE}.\ref{it:algDecay}] If $\alpha(\ell):=L_0\ell^{-(D+\beta)}$, then $C_\alpha(\ell)\leq (2\,c_DL_0 \ell^{-\beta})/\beta$.
		In this case, let us choose: 
		\begin{align}
			\begin{cases}
				\ell=N^{\delta}+1\\
				M = \frac{\ell-1}{2} =\frac{1}{2}\,N^\delta\\
				K = \log_2 N
			\end{cases}\,,
		\end{align}
		for some $\delta$ to be specified. Moreover, we assume $N> \lfloor 2^{1/\delta}\rfloor$, so that $M>1$, and that $N\geq \frac{4RN^\delta (\log N)}{\log 2}$, so that $2R \ell K \leq N$. 
		With this choice, we have: 
		\begin{align}
			\begin{cases}
				\Omega_1 \geq \norbra{\frac{\sqrt{c_0}(\log 2)^{\frac{D-1}{2}}}{2\Gamma }}\frac{\sqrt{N}}{N^{\frac{\delta D}{2}} (\log N)^{\frac{D-1}{2}}}\, 
				\\
				\Omega_2 \geq \norbra{\frac{\sqrt{c_0}\,(\log 2)^{D-1}}{2B_2 }}\frac{\sqrt{N}}{N^{\delta D}(\log N)^{D-1}} \,
				\\
				\Omega_3\geq\norbra{\frac{c_0\sqrt{c_0}}{4B_4}}\frac{\sqrt{N}}{N^{2\delta D}} 
			\end{cases}\,.
		\end{align}
		Define $\tilde{B}_7 := \min\left\{\norbra{\frac{\sqrt{c_0}(\log 2)^{\frac{D-1}{2}}}{2\Gamma }},\,\norbra{\frac{\sqrt{c_0}\,(\log 2)^{D-1}}{2B_2 }},\,\norbra{\frac{c_0\sqrt{c_0}}{4B_4}}\right\}$. Then, choosing:
		\begin{align}
			\Omega := \tilde{B}_7 \,N^{\frac{1}{2}-2\delta D}
		\end{align}
		we can apply Lemma~\ref{lemma:DeltaEst1}. This gives:
		\begin{align}
			\Delta &\leq\frac{1}{N^{\frac{1}{2}-2\delta D}}\Bigg(\frac{C}{\tilde{B}_7}  +\frac{1}{N^{(\beta +2D)\delta-\frac{1}{2}}} \norbra{\frac{6c_DL_0(2R)^{D-1}}{2\beta\pi c_0\,(2R)^{\beta} }}+\norbra{\frac{\sqrt{6}B_42^{2D}}{ \sqrt{\pi}\,c_0^{3/2}}}+\nonumber
			\\
			&\qquad\qquad\qquad\quad+\frac{1}{N^{(\beta +3D-1)\delta-1}}\norbra{\frac{8(1+\log(\tilde{B}_7))}{\pi}+\frac{8}{\pi}\,\norbra{\frac{1}{2}-2\delta D}\log N}\norbra{\frac{16RL_0\,}{\sqrt{c_0}	\,R^{\beta+D}}}+\nonumber
			\\
			&\qquad\qquad\qquad\qquad\qquad\qquad\qquad+ \frac{(\log_2 N)^{D-1}}{N^{\frac{1}{2}+\delta D}}\norbra{\frac{8(1+\log(\tilde{B}_7))}{\pi}+\frac{8}{\pi}\,\norbra{\frac{1}{2}-2\delta D}\log N}\norbra{\frac{2B_2\,2^D }{c_0 }}\nonumber
			\\
			&\qquad\qquad\quad\qquad\qquad\quad\qquad\qquad\quad\quad\quad\quad\quad\;\;+\frac{1}{N^{\frac{3D\delta}{2}}}\norbra{\frac{8\,B_5\tilde{B}_72^{\frac{5D}{2}}}{\pi c_0^{\frac{3}{2}}}}+\norbra{\frac{8B_6 \tilde{B}_7}{c_0^{\frac{3}{2}}2^{N^\delta/2}N^{\frac{1}{2}+\delta D(4 +\frac{1}{2}(N^\delta/2-3)}}}\Bigg)\,.\label{eq:delExp2}
		\end{align}
		At this point, we need to choose $\delta$ so that $\Delta$ decreases with $N$. Then, the first condition we need to impose is:
		\begin{align}
			\frac{1}{2}-2\delta D >0 \qquad\implies\qquad \delta<\frac{1}{4D}\,.\label{eq:delCond1}
		\end{align}
		This takes care of the factor in front of the parenthesis in Eq \eqref{eq:delExp2}. In order for the function inside of the parenthesis not to diverge with $N$, we also need to impose the following two inequalities:
		\begin{align}
			(\beta+2D)\delta -\frac{1}{2}&\geq 0\implies \delta>\frac{1}{2(\beta+2D)}\,;\qquad\qquad
			(\beta+3D-1)\delta -1> 0\implies \delta >\frac{1}{(\beta+3D-1)}\,.\label{eq:31}
		\end{align}
		The latter equation implies the first, so we can just consider this case.  Together with Eq.~\eqref{eq:delCond1}, we see that $\delta\in(\frac{1}{\beta +3D-1},\,\frac{1}{4D})$. In order for this range not to be empty, we need $\beta>D+1$. Finally, choosing $\delta=\frac{1}{\beta +3D-1}+\frac{\varepsilon}{2D}$, we obtain the claim. Note that we need to require $\varepsilon\in (0, \frac{\beta-D-1}{2(\beta +3D-1)})$ for $\delta$ to be in the correct range. For $\delta=\frac{1}{\beta +3D}$ the term in the second line of Eq.~\eqref{eq:delExp2} diverges logarithmically, which gives the second part of the claim.
	\end{proof}
	\begin{proof}[Proof of Thm.~\ref{thm:BE}.\ref{it:algDecay2}]
		In the same framework as in the proof of Thm.~\ref{thm:BE}.\ref{it:algDecay}, we obtain:
		\begin{align}
			\Delta &\leq\frac{1}{N^{\frac{1}{2}-2\delta D}}\Bigg(\frac{C}{\tilde{B}_7}  +\frac{1}{N^{(\beta +2D)\delta-\frac{1}{2}}} \norbra{\frac{6c_DL_0(2R)^{D-1}}{2\beta\pi c_0\,(2R)^{\beta} }}+\norbra{\frac{\sqrt{6}B_42^{2D}}{ \sqrt{\pi}\,c_0^{3/2}}}+\nonumber
			\\
			&\qquad\qquad\qquad\quad+\frac{1}{N^{(\beta +3D)\delta-1}}\norbra{\frac{8(1+\log(\tilde{B}_7))}{\pi}+\frac{8}{\pi}\,\norbra{\frac{1}{2}-2\delta D}\log N}\norbra{\frac{4L_0\,}{\sqrt{c_0}	\,R^{\beta+D}}}+\nonumber
			\\
			&\qquad\qquad\qquad\qquad\qquad\qquad\qquad+ \frac{(\log_2 N)^{D-1}}{N^{\frac{1}{2}+\delta D}}\norbra{\frac{8(1+\log(\tilde{B}_7))}{\pi}+\frac{8}{\pi}\,\norbra{\frac{1}{2}-2\delta D}\log N}\norbra{\frac{2B_2\,2^D }{c_0 }}\nonumber
			\\
			&\qquad\qquad\quad\qquad\qquad\quad\qquad\qquad\quad\quad\quad\quad\quad\;\;+\frac{1}{N^{\frac{3D\delta}{2}}}\norbra{\frac{8\,B_5\tilde{B}_72^{\frac{5D}{2}}}{\pi c_0^{\frac{3}{2}}}}+\norbra{\frac{8B_6 \tilde{B}_7}{c_0^{\frac{3}{2}}2^{N^\delta/2}N^{\frac{1}{2}+\delta D(4 +\frac{1}{2}(N^\delta/2-3)}}}\Bigg)\,.\label{eq:delExp3}
		\end{align}
		where the only difference with respect to Eq.~\eqref{eq:delExp2} is in the second line, since with the decay function defined in Eq.~\eqref{eq:decayOfCorrelations2} we can use the constant in $\tilde{c}_3$ instead of $c_3$. Then, this means that Eq.~\eqref{eq:31} becomes in this context:
		\begin{align}
			(\beta+2D)\delta -\frac{1}{2}&\geq 0\implies \delta>\frac{1}{2(\beta+2D)}\,;\qquad\qquad
			(\beta+3D-1)\delta -1> 0\implies \delta >\frac{1}{(\beta+3D-1)}\,,
		\end{align}
		which corresponds to restricting $\delta$ to the interval $\delta\in(\frac{1}{\beta +3D},\,\frac{1}{4D})$. In order for this range not to be empty, we need $\beta>D$. Then, choosing $\delta=\frac{1}{\beta +3D}+\frac{\varepsilon}{2D}$, gives the result.
	\end{proof}

	\begin{proof}[Proof of Lemma~\ref{lemma:boundPhi}] 
		We begin by noticing that  Eq.~\eqref{eq:diffEqPhi} can be formally integrated to give:
		\begin{align}
			\varphi_{\widehat{H}}({\omega}) = e^{-\frac{\omega^2}{2} + \int_0^{\omega}\de \tilde{\omega}\;\eta(\tilde{\omega})}\norbra{1+ \int_{0}^{\omega}\de \omega_1\;e^{\frac{\omega_1^2}{2} - \int_0^{\omega_1}\de \tilde{\omega}\;\eta(\tilde{\omega})}\nu(\omega_1)}\,.
		\end{align}
		This allows for the estimate: 
		\begin{align}
			|\varphi_{\widehat{H}}({\omega}) - e^{-\frac{\omega^2}{2}}|&\leq e^{-\frac{\omega^2}{2}}\left|e^{ \int_0^{\omega}\de \tilde{\omega}\;\eta(\tilde{\omega})}-1\right| + e^{-\frac{\omega^2}{2}}\left|\int_{0}^{\omega}\de \omega_1\,e^{\frac{\omega_1^2}{2} + \int_{\omega_1}^\omega\de \tilde{\omega}\;\eta(\tilde{\omega})}\nu(\omega_1)\right|\leq
			\\
			&\leq e^{-\frac{\omega^2}{2}}\left|\int_{0}^\omega\de \omega_1\;e^{ \int_0^{\omega_1}\de \tilde{\omega}\;\eta(\tilde{\omega})}\,\eta(\omega_1)\right| +  e^{-\frac{\omega^2}{2}}\int_{0}^{\omega}\de \omega_1\,e^{\frac{\omega_1^2}{2} + \int_{\omega_1}^\omega\de \tilde{\omega}\;|\eta(\tilde{\omega})|}\left|\nu(\omega_1)\right|\,,\label{eq:integroDiff}
		\end{align}
		where in the second line we derived and integrated again in the first term, and we brought the absolute values inside the integral in the second term. At this point, we can use Lemma~\ref{lemma:diffEq} to bound Eq.~\eqref{eq:integroDiff}, together with the assumptions $c_1<\frac{1}{2} $ and $\omega\in[0,\Omega_3]$. In particular, it should be noticed that the $\Omega_3$ is chosen so that $\omega\leq \Omega_3\leq \frac{1-c_1}{2c_2}$. 
		
		First, it should be noticed that:
		\begin{align}
			\left|\int_0^{\omega_1}\de \tilde{\omega}\;\eta(\tilde{\omega})\right| &\leq \int_0^{\omega_1}\de \tilde{\omega} \norbra{c_1\tilde{\omega}+c_2\tilde{\omega}^2} =\omega_1^2\norbra{\frac{c_1}{2} +\frac{c_2}{3}\omega_1}\leq \omega_1^2\norbra{\frac{1}{6}+\frac{c_1}{3}} \leq \frac{\omega_1^2}{3}\label{eq:26}
		\end{align}
		where the two last inequalities follow from applying $\omega_1\leq \frac{1-c_1}{2c_2}$ first, and then $c_1<\frac{1}{2}$. Then, applying this inequality to the first term in Eq.~\eqref{eq:integroDiff}, we obtain:
		\begin{align}
			e^{-\frac{\omega^2}{2}}\left|\int_{0}^\omega\de \hat{\omega}\;e^{ \int_0^{\hat\omega}\de \tilde{\omega}\;\eta(\tilde{\omega})}\,\eta(\tilde{\omega})\right| &\leq e^{-\frac{\omega^2}{2}}e^{\frac{\omega^2}{3}}\int_0^{\omega}\de \tilde{\omega}\;\left|\eta(\tilde{\omega})\right| \leq e^{-\frac{\omega^2}{6}}\omega^2\norbra{\frac{c_1}{2} +\frac{c_2}{3}\omega}
		\end{align}
		where in the last step we used the fact that $\omega\in[0,\frac{1-c_1}{2c_2}]$.
		
		The second term in Eq.~\eqref{eq:integroDiff} needs additional steps, due to the exponential inside of the integral. First, let use Eq.~\eqref{eq:26} to obtain:
		\begin{align}
			e^{-\frac{\omega^2}{2}}\int_{0}^{\omega}\de \omega_1\,e^{\frac{\omega_1^2}{2} + \int_{\omega_1}^\omega\de \tilde{\omega}\;|\eta(\tilde{\omega})|}\left|\nu(\omega_1)\right| 
			&\leq e^{-(1-c_1)\frac{\omega^2}{4} -((1-c_1)\frac{\omega^2}{4}- \omega^3\frac{c_2}{3})}\int_{0}^{\omega}\de \omega_1\,e^{(1-c_1)\frac{\omega_1^2}{4} +((1-c_1)\frac{\omega_1^2}{4}- \omega_1^3\frac{c_2}{3})}\left|\nu(\omega_1)\right| \,,
		\end{align}
		where we have carried out the integration and then isolated the term $((1-c_1)\frac{\omega_1^2}{4}- \omega_1^3\frac{c_2}{3})$. By taking a derivative, it can be verified that this function is monotonically increasing in $\omega_1$ for $\omega_1\leq \frac{1-c_1}{2c_2}$. For this reason, we can upper-bound the second term in Eq.~\eqref{eq:integroDiff} as
		\begin{align}
			e^{-\frac{\omega^2}{2}}\int_{0}^{\omega}\de \omega_1\,e^{\frac{\omega_1^2}{2} + \int_{\omega_1}^\omega\de \tilde{\omega}\;|\eta(\tilde{\omega})|}\left|\nu(\omega_1)\right| 
			&\leq e^{-(1-c_1)\frac{\omega^2}{4}}\int_{0}^{\omega}\de \omega_1\,e^{(1-c_1)\frac{\omega_1^2}{4} } (c_3 +c_4\omega + c_5 \omega^2) \,,
		\end{align}
		where we used Lemma~\ref{lemma:diffEq} to bound $\nu(\omega)$. At this point, let us first focus on the linear and of the quadratic terms. These can be estimated as:
		\begin{align}
			e^{-(1-c_1)\frac{\omega^2}{4}}\int_{0}^{\omega}\de \omega_1\,e^{(1-c_1)\frac{\omega_1^2}{4}}\omega_1&\norbra{c_4 + c_5 \omega_1}=e^{-(1-c_1)\frac{\omega^2}{4}}\int_{0}^{\omega}\de \omega_1\,\frac{2\norbra{c_4 + c_5 \omega_1}}{1-c_1}\norbra{\frac{\de}{\de \tilde{\omega}}\norbra{e^{(1-c_1)\frac{\omega^2}{4}}}\bigg|_{\tilde{\omega}=\omega_1}}\leq\label{eq:inter30}
			\\
			&\leq e^{-(1-c_1)\frac{\omega^2}{4}}\frac{2\norbra{c_4 + c_5 \omega}}{1-c_1}\norbra{e^{(1-c_1)\frac{{\omega}^2}{4}}-1}\leq 4\norbra{c_4 + c_5 \omega}\norbra{1-e^{-\frac{\omega^2}{4}}} \label{eq:le36}.
		\end{align}
		where we implicitly applied Hölder's inequality to Eq.~\eqref{eq:inter30}, carried out the integral, and finally used the condition $c_1<\frac{1}{2}$. It remains to bound the linear term. First, one can give the trivial bound:
		\begin{align}
			c_3\,e^{-(1-c_1)\frac{\omega^2}{4}}\int_{0}^{\omega}\de \omega_1\,e^{(1-c_1)\frac{\omega_1^2}{4}} \leq c_3\,\int_{0}^{\omega}\de \omega_1\, = c_3\, \omega\,,\label{eq:31b}
		\end{align}
		where we used Hölder's inequality. Additionally, for $\omega\geq  \sqrt{4/(1-c_1)}$, it also hold that:
		\begin{align}
			c_3\,e^{-(1-c_1)\frac{\omega^2}{4}}\int_{0}^{\omega}\de \omega_1\,e^{(1-c_1)\frac{\omega_1^2}{4}} \leq\frac{c_3\, 4}{(1-c_1)\omega}\leq\frac{c_3\, 8}{\omega}\,,\label{eq:32b}
		\end{align}
		This can be shown as follows. First notice that for $\omega^*= \sqrt{4/(1-c_1)}$, it holds that $\omega^* = (4/(1-c_1))/\omega^*$. Then, thanks to Eq.~\eqref{eq:31b}, it holds that:
		\begin{align}
			e^{-(1-c_1)\frac{(\omega^*)^2}{4}}\norbra{\frac{4 e^{(1-c_1)\frac{(\omega^*)^2}{4}}}{(1-c_1)\omega^*} -\int_{0}^{\omega^*}\de \omega_1\,e^{(1-c_1)\frac{\omega_1^2}{4}}} = e^{-(1-c_1)\frac{(\omega^*)^2}{4}}\norbra{\omega^*e^{(1-c_1)\frac{(\omega^*)^2}{4}} -\int_{0}^{\omega^*}\de e^{(1-c_1)\frac{\omega_1^2}{4}}}   \geq 0\,.
		\end{align}
		Moreover, if we take the derivative of the first parenthesis, we obtain:
		\begin{align}
			\frac{\de}{\de \omega}\norbra{\frac{4 e^{(1-c_1)\frac{(\omega)^2}{4}}}{(1-c_1)\omega} -\int_{0}^{\omega}\de \omega_1\,e^{(1-c_1)\frac{\omega_1^2}{4}}} = \norbra{2-\frac{4 }{(1-c_1)\omega^2}-1} e^{(1-c_1)\frac{(\omega)^2}{4}}\,,
		\end{align}
		which is strictly positive for $\omega > \sqrt{4/(1-c_1)}$. Then, taking the minimum between Eq.~\eqref{eq:31b} and Eq.~\eqref{eq:32b} gives the estimate in the claim.
	\end{proof}
	
	\begin{proof}[Proof of Lemma~\ref{lemma:DeltaEst1}]
		We begin by proving that $\varphi_{\widehat{H}}({\omega}) := \langle e^{i\omega \widehat{H}}\rangle_{\rho} $ coincides with the Fourier transform of $F_{\widehat{H}}(y)$. Indeed, an explicit computation gives:
		\begin{align}
			\int_{-\infty}^{\infty}\de F_{\widehat{H}}(y)\; e^{i\omega y} = \int_{-\infty}^{\infty}\de y\norbra{\sum_{\norbra{\frac{e_n-\mu_H}{\sigma_H}}=y} \; \sandwich{n}{\rho}{n}\,e^{i\omega y}} = \sum_{n} \; \sandwich{n}{\rho}{n}\,e^{i\omega \norbra{\frac{e_n-\mu_H}{\sigma_H}}} = \average{e^{i\omega \widehat{H}}}{\rho} = \varphi_{\widehat{H}}({\omega})\,.
		\end{align}
		
		Now, assume $\Omega\in\sqrbra{0,\min\{\Omega_1,\,\Omega_2,\,\Omega_3\}}$. Then, we can apply Lemma~\ref{lemma:boundPhi} to Eq.~\eqref{eq:esseenInequality} to obtain:
		\begin{align}
			&\Delta\leq\frac{C}{\Omega} +\frac{2}{\pi}\int_{0}^{\Omega}\de \omega \,\norbra{e^{-\frac{\omega^2}{6}}\omega\norbra{\frac{c_1}{2} +\frac{c_2}{3}\omega}+\frac{4\norbra{c_4 + c_5 \omega}\norbra{1-e^{-\frac{\omega^2}{4}}}+c_3\min\left\{\frac{8}{ \omega},\,\omega\right\}}{\omega}}\leq
			\\
			&\leq \frac{C}{\Omega} +\frac{2}{\pi}\int_{0}^{\infty}\de \omega \,e^{-\frac{\omega^2}{6}}\norbra{\frac{c_1\,\omega}{2}+\frac{c_2\,\omega^2}{3}}+\frac{2}{\pi}\int_{0}^{2}\de \omega\norbra{c_3 +4(1-e^{-\frac{\omega^2}{4}})\frac{ c_4}{\omega}}+\frac{8}{\pi}\int_{2}^{\Omega}\de \omega\norbra{ \frac{2\,c_3}{\omega^2}+\frac{c_4}{\omega}}+\frac{8 \,c_5\,\Omega}{\pi}\leq
			\\
			&\leq \frac{C}{\Omega}  + \frac{6 \,c_1}{2\pi}+\frac{\sqrt{6}\, c_2}{\sqrt{\pi}}+(c_3+ c_4)\norbra{\frac{8}{\pi}+\frac{8}{\pi}\,\log(\Omega)}+\frac{8 \,c_5\,\Omega}{\pi}\,,
		\end{align}
		where in the second line we extended the limit of integration to infinity, we used the fact that $\min\left\{\frac{8}{ \omega},\,\omega\right\}$ is smaller than either functions, and finally that $(1-e^{-\frac{\omega^2}{4}})/\omega\leq 1/2$. This proves the claim.
	\end{proof}

	\section{Proof of Lemma~\ref{lemma:diffEq}}\label{app:lemmaDiff}
	
	Our goal is to rewrite $\varphi_{\widehat{H}}'({\omega}) $ in the differential form presented in Eq.~\eqref{eq:diffEqPhi} where $\eta(\omega)$ and $\nu(\omega)$ can be bounded using decay of correlation in the state $\rho$, and the assumption that $\omega$ is in the range $\omega \in [0,\min\{\Omega_1,\,\Omega_2\}]$.
	
	In order to do so, let us rewrite $\varphi_{\widehat{H}}'({\omega}) $ explicitly as:
	\begin{align}
		\varphi_{\widehat{H}}'({\omega})  = i\average{\widehat{H}\,e^{i\omega \widehat{H}}}{\rho} =  \frac{i}{\sigma_H}\,\sum_{j\in\mathcal{X}}\; \average{\widehat{h}_j\,e^{i\omega \widehat{H}}}{\rho}\label{eq:appB1}
	\end{align}
	In order to get an intuition about the proof method, consider the simpler case of $\widehat{H}$ being a commuting model, so that $\forall i,\,j$ $[\widehat{h}_i,\widehat{h}_j]=0$), with $\widehat{h}_i$ having support on a single site, and $\rho$ having no correlations (that is $\average{A_iA_j}{\rho}=\average{A_i}{\rho}\average{A_j}{\rho}$ for any two $A_i$ and $A_j$ with different support). In this case, Eq.~\eqref{eq:appB1} can be rewritten as: 
	\begin{align}
		\varphi_{\widehat{H}}'({\omega})  &=\frac{i}{\sigma_H}\,\sum_{j\in\mathcal{X}}\norbra{\average{\widehat{h}_j\,e^{i\omega \widehat{h}_j/\sigma_H}}{\rho}\,\prod_{k\neq j} \average{e^{i\omega \widehat{h}_k/\sigma_H}}{\rho}}=\label{eq:48b}
		\\
		& =\frac{i}{\sigma_H}\,\sum_{j\in\mathcal{X}}\average{\widehat{h}_j\,e^{i\omega \widehat{h}_j/\sigma_H}}{\rho}\varphi_{\widehat{H}}({\omega})  + \norbra{\frac{i}{\sigma_H}\,\sum_{j\in\mathcal{X}}\norbra{\average{\widehat{h}_j\,e^{i\omega \widehat{h}_j/\sigma_H}}{\rho}\, \average{\norbra{e^{-i\omega \widehat{h}_j/\sigma_H}-\idO}e^{i\omega \widehat{H}}}{\rho}}} = \label{eq:49b}
		\\
		&=-\omega\,\varphi_{\widehat{H}}({\omega})  +\norbra{\frac{i}{\sigma_H}\,\sum_{j\in\mathcal{X}}\average{\widehat{h}_j\,\norbra{e^{i\omega \widehat{h}_j/\sigma_H}-\frac{i\omega\widehat{h}_j}{\sigma_H} }}{\rho}}\varphi_{\widehat{H}}({\omega}) +\nu(\omega) =\label{eq:41}
		\\
		&= (-\omega +\eta(\omega))\varphi_{\widehat{H}}(\omega) + \nu(\omega)\,,\label{eq:51b}
	\end{align}
	where we implicitly defined $\nu(\omega)$ to be the term inside of the parenthesis in Eq.~\eqref{eq:49b}, and $\eta(\omega)$ to be the term in Eq.~\eqref{eq:41}. In this way, we obtain an expression of the form of  Eq.~\eqref{eq:diffEqPhi}. The procedure just presented is the simplified version of what will be carried out in this section, and it works in three steps: first, we exploit $\widehat{h}_j$ as an anchor point to leverage the locality of $\rho$ and rewrite the series as a sum of local contributions (in the previous simpler case, this was automatic due to the commutativity and the fact that $\widehat{h}_j$ were supported on single sites); then, we add and subtract $\varphi_{\widehat{H}}({\omega}) $, by exploiting the fact that $\prod_{k\neq j} e^{i\omega \widehat{h}_k/\sigma_H} = e^{-i\omega \widehat{h}_j/\sigma_H}e^{i\omega \widehat{H}}$; and finally we use $\sigma_{\widehat{H}}^2  = \sum_j\langle\widehat{h}_j^2\rangle_{\rho}/\sigma_H^2=1$ to sum and subtract $\omega\varphi_{\widehat{H}}({\omega})$ and get the desired form of the differential equation. 
	
	Before moving on to the general case, let us introduce the notation (see Fig. \ref{fig:plot1} for an illustration): 
	\begin{align}
		z_j^\ell(m) := \frac{1}{\sigma_H}\,\sum_{\substack{k\in\mathcal{X}
				\\
				d(k,j)> 2 R \ell m}}\; \widehat{h}_k\,;\qquad\qquad\qquad\qquad
		\widehat{H}_j^\ell(m) = z_j^\ell(m-1) -z_j^\ell(m)\label{eq:defZ}
	\end{align}
	where we remind the reader $R$ denotes the locality of $\widehat{H}$ and $\ell$ is some arbitrary integer $\ell>1$. In this way $\widehat{H}_j^\ell(m)$ contains only Hamiltonian terms $\widehat{h}_k$ such that $d(k,j)\in[2R\ell (m-1),2R\ell m]$. That is, we have a decomposition of the full Hamiltonian as $\widehat{H} = \sum_{m=1}^\infty\,\widehat{H}_j^\ell(m) =  \sum_{m=1}^{K}\,\widehat{H}_j^\ell(m)+z_j^\ell(K)$, where each term corresponds to a layer of width $2R\ell$, and with minimum distance $2R\ell (m-1)$ from the anchor point $j$. 
	
	\begin{figure}
		\centering
		\includegraphics[width=0.3\linewidth]{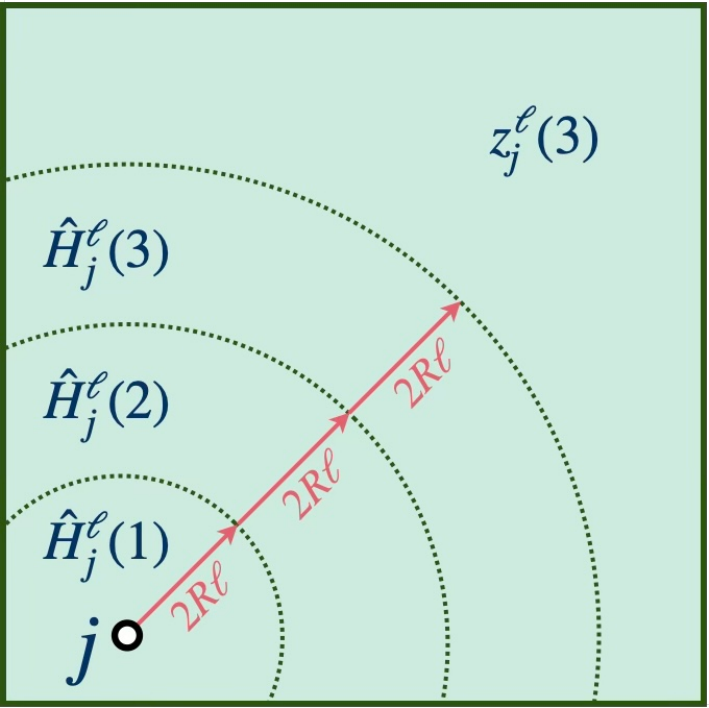}
		\caption{Representation of the operators defined in Eq.~\eqref{eq:defZ}. Each $\widehat{H}_j^\ell(m)$ corresponds to the terms in the Hamiltonian that are in the $m$-th layer (of width $2R\ell$) starting from $j$. On the other hand, $z_j^\ell(m)$ corresponds to the remaining Hamiltonian when $m$ layers have been discarded.}
		\label{fig:plot1}
	\end{figure}
	
	Then, we can rewrite Eq.~\eqref{eq:appB1} as:
	\begin{align}
		\varphi_{\widehat{H}}'({\omega}) = \frac{i}{\sigma_H}\,\sum_{j\in\mathcal{X}}\;\bigg( \average{\widehat{h}_j\,e^{i\omega z_j^\ell(1) }}{\rho}+&\average{\widehat{h}_j\norbra{e^{i\omega (z_j^\ell(0)-z_j^\ell(1))}{\tilde R}_{1,j}(\omega)-\idO}\,e^{i\omega z_j^\ell(1) }}{\rho}+\nonumber
		\\
		&\qquad\qquad+\average{\widehat{h}_j\norbra{e^{i\omega z_j^\ell(0)}-e^{i\omega (z_j^\ell(0)-z_j^\ell(1))}{\tilde R}_{1,j}(\omega)e^{i\omega z_j^\ell(1) }}}{\rho}\bigg) \,,\label{eq:appB3}
	\end{align}
	where we used the fact that $z_j^\ell(0) = \widehat{H}$, and we introduced an additional operator ${\tilde R}_{1,j}(\omega)$, which for the moment we assume to be generic, with the only extra assumption that ${\tilde R}_{1,j}(\omega)= \idO$ for commuting models. In this way, the exponent of the first term contains only terms that act at a distance larger that $2R\ell$ from the origin (which allows us to exploit decay of correlation), while the other two are correction terms arising from this peeling procedure, and from the non-commutativity of the Hamiltonian respectively (in particular, the last term is zero for a commuting Hamiltonian). This procedure is what allows us to exploit the decay of correlations of $\rho$ to decompose the sum in its local contributions. 
	
	Still, the error that appears at this order is too big to give a useful approximation. For this reason, we iterate this procedure further. Before doing so, though, let us introduce some notation:
	\begin{align}
		\begin{cases}
			\xi_j^0(\omega) := \idO\,\\
			\xi_j^m(\omega) := \norbra{e^{i\omega (z_j^\ell(m-1)-z_j^\ell(m))}\tilde{R}_{m,j}(\omega)-\idO} = \norbra{e^{i\omega \widehat{H}_j^\ell(m)}\tilde{R}_{m,j}(\omega)-\idO}\,\\
			\Xi_j^m(\omega) := \norbra{e^{i\omega z_j^\ell(m-1)}-e^{i\omega (z_j^\ell(m-1)-z_j^\ell(m))}\tilde{R}_{m,j}(\omega)e^{i\omega z_j^\ell(m) }}\,
		\end{cases}\,,\label{eq:appc4}
	\end{align}
	where, once again, $\tilde{R}_{m,j}(\omega)$ are for the moment generic functions such that $\tilde{R}_{m,j}(\omega)=\idO$ for commuting models. This allows us to rewrite $e^{i\omega z_j^\ell (m-1)}$ as:
	\begin{align}
		e^{i\omega z_j^\ell (m-1)} = \norbra{\idO+\xi_j^{m}(\omega)}e^{i\omega z_j^\ell (m)}+\Xi_j^m(\omega)\,.
	\end{align}
	Then, Eq.~\eqref{eq:appB3} can be rewritten in this notation and further expanded as:
	\begin{align}
		\varphi_{\widehat{H}}'({\omega}) &= \frac{i}{\sigma_H}\,\sum_{j\in\mathcal{X}}\;\norbra{\average{\widehat{h}_j\, \xi_j^0(\omega)e^{i\omega z_j^\ell(1)}}{\rho}+\average{\widehat{h}_j\, \xi_j^0(\omega)\xi_j^1(\omega)e^{i\omega z_j^\ell(1)}}{\rho}+\average{\widehat{h}_j\, \xi_j^0(\omega)\Xi_j^1(\omega)}{\rho}} = 
		\\
		&= \frac{i}{\sigma_H}\,\sum_{j\in\mathcal{X}}\;\bigg(\norbra{\average{\widehat{h}_j\, \xi_j^0(\omega)e^{i\omega z_j^\ell(1)}}{\rho}+\average{\widehat{h}_j\, \xi_j^0(\omega)\xi_j^1(\omega)e^{i\omega z_j^\ell(2)}}{\rho}}+\average{\widehat{h}_j\, \xi_j^0(\omega)\xi_j^1(\omega)\xi_j^2(\omega)e^{i\omega z_j^\ell(2)}}{\rho}+\nonumber
		\\
		&\qquad\qquad\qquad\qquad\qquad\qquad\qquad\qquad\qquad\qquad\quad+\norbra{\average{\widehat{h}_j\, \xi_j^0(\omega)\Xi_j^1(\omega)}{\rho}+\average{\widehat{h}_j\, \xi_j^0(\omega)\xi_j^1(\omega)\Xi_j^2(\omega)}{\rho}}\bigg)=
		\\
		&=\frac{i}{\sigma_H}\,\sum_{j\in\mathcal{X}}\bigg(\norbra{\sum_{m=0}^{K-1}\;\average{\widehat{h}_j\, \xi_j^0(\omega)\dots\xi_j^m(\omega)e^{i\omega z_j^\ell(m+1)}}{\rho}}+\average{\widehat{h}_j\, \xi_j^0(\omega)\dots\xi_j^K(\omega)e^{i\omega z_j^\ell(K)}}{\rho}+\nonumber
		\\
		&\qquad\qquad\qquad\qquad\qquad\qquad\qquad\qquad\qquad\qquad\qquad\qquad+\norbra{\sum_{m=1}^{K-1}\;\average{\widehat{h}_j\, \xi_j^0(\omega)\dots\xi_j^{m-1}(\omega)\Xi_j^{m}(\omega)}{\rho}}\bigg)\,, \label{eq:appB8}
	\end{align}
	where $K$ is some arbitrary integer satisfying $2R \ell K \leq N$.
	Let us introduce the notation $\nu_{1,j}(\omega,K)$ and $\nu_{2,j}^{nc}(\omega,K)$ for the second and third term in Eq.~\eqref{eq:appB8}. Their contribution can be bounded thanks to the small norm of $\xi_j^k(\omega)$, which justifies the iteration of this procedure. 
	
	Moreover, we can finally leverage the locality of $\rho$ to rewrite the first term in Eq.~\eqref{eq:appB8} as:
	\begin{align}
		\sum_{m=0}^{K-1}\;\average{\widehat{h}_j\, \xi_j^0(\omega)\dots\xi_j^m(\omega)e^{i\omega z_j^\ell(m+1)}}{\rho} &= \sum_{m=0}^{K-1}\;\average{\widehat{h}_j\, \xi_j^0(\omega)\dots\xi_j^m(\omega)}{\rho}\average{e^{i\omega z_j^\ell(m+1)}}{\rho} + \nonumber
		\\
		&\qquad+\norbra{\sum_{m=0}^{K-1}\;\average{\widehat{h}_j\, \xi_j^0(\omega)\dots\xi_j^m(\omega)\norbra{e^{i\omega z_j^\ell(m+1)}-\average{e^{i\omega z_j^\ell(m+1)}}{\rho}}}{\rho}}\,.\label{eq:c9}
	\end{align}
	This procedure is akin to the first step in Eq.~\eqref{eq:48b}. We denote the term in the second line by $\nu_{3,j}(\omega,K)$, which will be small thanks to decay of correlations. At this point, we need to introduce the last bit of notation. We define:
	\begin{align}
		\begin{cases}
			\gamma_j^{m}(\omega) := \norbra{e^{-i \omega(\widehat{H}-z^{\ell}_j(m))} \tilde{S}_{m,j}(\omega)- \idO}\,;\\
			\Gamma_j^{m}(\omega) := \norbra{e^{i \omega z^{\ell}_j(m)}-e^{-i \omega(\widehat{H}-z^{\ell}_j(m))} \tilde{S}_{m,j}(\omega)e^{i \omega\widehat{H}}}\,.
		\end{cases}\label{eq:appc10}
	\end{align}
	Once again we choose $\tilde{S}_{m,j}(\omega)$ (to be defined below) so that for commuting models we have $\tilde{S}_{m,j}(\omega) = \idO$. Then, we can expand the expectation value over $e^{i\omega z_j^\ell(m+1)}$ as:
	\begin{align}
		&\average{e^{i\omega z_j^\ell(m+1)}}{\rho} =  \nonumber
		\\
		&\;\;\;\;\;=\average{\norbra{e^{i \omega z^{\ell}_j(m+1)}-e^{-i \omega(\widehat{H}-z^{\ell}_j(m+1))} S_{m+1,j}(\omega)e^{i \omega\widehat{H}}}}{\rho} + \average{\norbra{e^{-i \omega(\widehat{H}-z^{\ell}_j(m+1))}S_{m+1,j}(\omega)-\idO} e^{i \omega\widehat{H}}}{\rho} +\varphi_{\widehat{H}}({\omega}) = \nonumber
		\\
		&\;\;\;\;\;=\average{\Gamma_j^{m+1}(\omega) }{\rho}+\average{\norbra{\gamma_j^{m+1}(\omega) - \average{\gamma_j^{m+1}(\omega) }{\rho}} e^{i \omega\widehat{H}}}{\rho} + \average{\idO+\gamma_j^{m+1}(\omega)}{\rho} \, \varphi_{\widehat{H}}({\omega})\,.
	\end{align}
	This is similar in spirit to the second step in Eq.~\eqref{eq:49b}, where we add and subtract $\varphi_{\widehat{H}}({\omega})$ from the equation.  The corrections arising from this procedure are denoted by $\nu^{nc}_{4,j}(\omega,K) $ and $\nu_{5,j}(\omega,K)$, and take the form: 
	\begin{align}
		\nu^{nc}_{4,j}(\omega,K) :&= \sum_{m=1}^{K-1}\;\average{\widehat{h}_j\, \xi_j^0(\omega)\dots\xi_j^m(\omega)}{\rho}\average{\Gamma_j^{m+1}(\omega)}{\rho}\,;\vspace{0.2cm}
		\\
		\nu_{5,j}(\omega,K) :&=\sum_{m=1}^{K-1}\;\average{\widehat{h}_j\, \xi_j^0(\omega)\dots\xi_j^m(\omega)}{\rho}\average{\norbra{\gamma_j^{m+1}(\omega) - \average{\gamma_j^{m+1}(\omega) }{\rho}} e^{i \omega\widehat{H}}}{\rho}\,.
	\end{align}
	Let $\nu(\omega, K) =  \frac{i}{\sigma_H}\,\sum_{j\in\mathcal{X}}\sum_{k=1}^5\nu_{k,j}(\omega,K)$. Then, Eq.~\eqref{eq:appB8} can be rewritten as:
	\begin{align}
		\varphi_{\widehat{H}}'({\omega}) &= \frac{i}{\sigma_H}\norbra{\sum_{j\in\mathcal{X}}  \sum_{m=0}^{K-1}\;\average{\widehat{h}_j\, \xi_j^0(\omega)\dots\xi_j^m(\omega)}{\rho}\average{\idO+\gamma_j^{m+1}(\omega)}{\rho}} \varphi_{\widehat{H}}({\omega}) + \nu(\omega, K)
		\\
		&=\norbra{-\omega +  \frac{i}{\sigma_H}\sum_{j\in\mathcal{X}} \,\norbra{ \sum_{m=1}^{K-1}\;\average{\widehat{h}_j\, \xi_j^0(\omega)\dots\xi_j^m(\omega)}{\rho}\average{\idO+\gamma_j^{m+1}(\omega)}{\rho} -i\omega\average{\widehat{h}_j\widehat{H}}{\rho}}}\, \varphi_{\widehat{H}}({\omega}) + \nu(\omega, K)\,,
	\end{align}
	where we implicitly used the fact that $\langle \widehat{h}_j\rangle_{\rho}=0$ to take care of the term with $m=0$, and that $\sigma^2_{\widehat{H}}= \sum_{j\in\mathcal{X}} \langle \widehat{h}_j\widehat{H}\rangle_{\rho}/\sigma_H^2=1$ to add and subtract $\omega$ (once again, see the analogy with Eq.~\eqref{eq:41}). This also allows us to formally define $\eta(\omega, k)$ as the second term inside of the parenthesis, which we decompose as $\eta(\omega, K) =  \frac{i}{\sigma_H}\,\sum_{j\in\mathcal{X}}\sum_{k=1}^3\eta_{k,j}(\omega,K)$, where:
	\begin{align}
		\begin{cases}
			\eta_{1,j}(\omega,K) = \average{\widehat{h}_j\norbra{e^{i\omega \widehat{H}_j^\ell(1)}\tilde{R}_{1,j}(\omega)-\idO}}{\rho} -i\omega\average{\widehat{h}_j\widehat{H}_j^\ell(1)}{\rho}\,\vspace{0.2cm}
			\\
			\eta_{2,j}(\omega,K) =-i\omega\average{\widehat{h}_jz_j^\ell(1)}{\rho}\,\vspace{0.2cm}
			\\
			\eta_{3,j}(\omega,K) = \average{\widehat{h}_j\xi_j^1(\omega)}{\rho}\average{\gamma_j^{2}(\omega)}{\rho}+\sum_{m=2}^{K-1}\;\average{\widehat{h}_j\, \xi_j^1(\omega)\dots\xi_j^m(\omega)}{\rho}\average{\idO+\gamma_j^{m+1}(\omega)}{\rho}\,
		\end{cases}\,;
	\end{align}
	notice that we implicitly used the fact that $\widehat{H}=\widehat{H}_j^\ell(1)+z_j^\ell(1)$ to decompose the term $\langle\widehat{h}_j\widehat{H}\rangle_{\rho}$. 
	
	In App. \ref{sec:bounds} we bound the various correcting terms appearing throughout one by one (see Table \ref{tab:operators}), and we wrap everything up in App.~\ref{app:boundEta} and App.~\ref{app:boundNu}. In order to do so, we still need to specify ${\tilde R}_{m,j}(\omega)$ and ${\tilde S}_{m,j}(\omega)$ in Eq.~\eqref{eq:appc4} and Eq.~\eqref{eq:appc10}, done below. 
	
	\setlength{\tabcolsep}{0pt}
	\renewcommand{\arraystretch}{1.6} 
	\begin{table}
		\centering
		\begin{tabular}{|c!{\vrule width 1.2pt}c!{\vrule width 1.2pt}c|}
			\hline
			\rowcolor{colTab!25}
			\;\;\;Operator\;\;\;\qquad&\;\; Definition\;\;  &\;\;\;\;Section\;\;\;\;\\
			\Xhline{1.2pt}
			$\xi_j^m(\omega)$  &  $\norbra{e^{i\omega \widehat{H}_j^\ell(m)}\tilde{R}_{m,j}(\omega)-\idO}$ (for $m\geq1$) \textbf{or} $\idO$ (for $m=0$)& \\
			\hline
			$\Xi_j^m(\omega)$  & $\norbra{e^{i\omega z_j^\ell(m-1)}-e^{i\omega (z_j^\ell(m-1)-z_j^\ell(m))}\tilde{R}_{m,j}(\omega)e^{i\omega z_j^\ell(m) }}$  &\\
			\hline
			$\gamma_j^{m}(\omega)$  & $\norbra{e^{-i \omega(\widehat{H}-z^{\ell}_j(m))} \tilde{S}_{m,j}(\omega)- \idO}$ & \\
			\hline
			$\Gamma_j^{m}(\omega) $  & $\norbra{e^{i \omega z^{\ell}_j(m)}-e^{-i \omega(\widehat{H}-z^{\ell}_j(m))} \tilde{S}_{m,j}(\omega)e^{i \omega\widehat{H}}}$  &\\
			\Xhline{1.2pt}
			$\eta_{1,j}(\omega,K) $  & $\average{\widehat{h}_j\norbra{e^{i\omega \widehat{H}_j^\ell(1)}\tilde{R}_{1,j}(\omega)-\idO}}{\rho} -i\omega\average{\widehat{h}_j\widehat{H}_j^\ell(1)}{\rho}$ & App.~\ref{app:eta1}\\
			\hline
			$\eta_{2,j}(\omega,K)$  & $-i\omega\average{\widehat{h}_jz_j^\ell(1)}{\rho}$  & App.~\ref{app:eta2}\\
			\hline
			$\eta_{3,j}(\omega,K) $  &\qquad\qquad $\average{\widehat{h}_j\xi_j^1(\omega)}{\rho}\average{\gamma_j^{2}(\omega)}{\rho}+\sum_{m=2}^{K-1}\;\average{\widehat{h}_j\, \xi_j^1(\omega)\dots\xi_j^m(\omega)}{\rho}\average{\idO+\gamma_j^{m+1}(\omega)}{\rho}$\qquad\qquad\qquad\qquad  & App.~\ref{app:eta3}\\
			\Xhline{1.2pt}
			$\nu_{1,j}(\omega,K) $  & $\average{\widehat{h}_j\, \xi_j^0(\omega)\dots\xi_j^K(\omega)e^{i\omega z_j^\ell(K)}}{\rho}$  &App.~\ref{app:nu1}\\
			\hline
			$\nu_{2,j}^{nc}(\omega,K)  $ & $\sum_{m=1}^{K-1}\;\average{\widehat{h}_j\, \xi_j^0(\omega)\dots\xi_j^{m-1}(\omega)\Xi_j^{m}(\omega)}{\rho}$ & App.~\ref{app:nu2}\\
			\hline
			$\nu_{3,j}(\omega,K) $  & $\sum_{m=0}^{K-1}\;\average{\widehat{h}_j\, \xi_j^0(\omega)\dots\xi_j^m(\omega)\norbra{e^{i\omega z_j^\ell(m+1)}-\average{e^{i\omega z_j^\ell(m+1)}}{\rho}}}{\rho}$ & App.~\ref{app:nu3}\\
			\hline
			$\nu_{4,j}^{nc}(\omega,K)  $ & \qquad\qquad$\sum_{m=1}^{K-1}\;\average{\widehat{h}_j\, \xi_j^0(\omega)\dots\xi_j^m(\omega)}{\rho}\average{\Gamma_j^{m+1}(\omega)}{\rho}$ \qquad\qquad\qquad & App.~\ref{app:nu2}\\
			\hline
			$\nu_{5,j}(\omega,K) $  & $\sum_{m=1}^{K-1}\;\average{\widehat{h}_j\, \xi_j^0(\omega)\dots\xi_j^m(\omega)}{\rho}\average{\norbra{\gamma_j^{m+1}(\omega) - \average{\gamma_j^{m+1}(\omega) }{\rho}} e^{i \omega\widehat{H}}}{\rho}$ &App.~\ref{app:nu5}\\
			\hline
		\end{tabular}
		\caption{Glossary of operators and terms used along the derivation, together with the section in which each term is bounded.}
		\label{tab:operators}
	\end{table}
	
	\subsection{The choice of the ${\tilde R}_{m,j}(\omega)$ and ${\tilde S}_{m,j}(\omega)$}
	There are two possible limiting choices. First, one can choose ${\tilde{R}}_{m,j}(\omega) \equiv \idO$. In this case we have:
	\begin{align}
		\begin{cases}
			\xi_j^0(\omega) := \idO\,;\\
			\xi_j^m(\omega) := \norbra{e^{i\omega (z_j^\ell(m-1)-z_j^\ell(m))}-\idO} = \norbra{e^{i\omega \widehat{H}_j^\ell(m)}-\idO}\,;\\
			\Xi_j^m(\omega) := \norbra{e^{i\omega z_j^\ell(m-1)}-e^{i\omega (z_j^\ell(m-1)-z_j^\ell(m))}e^{i\omega z_j^\ell(m) }}\,.
		\end{cases}
		\qquad\text{For ${\tilde{R}}_{m,j}(\omega) = \idO$}\,,\label{eq:M1}
	\end{align}
	meaning that $\xi_j^m(\omega)$ becomes completely localized in the region $d(k,j)\in[2R\ell (m-1),2R\ell m]$, while $\Xi_j^m(\omega)$ exactly accounts for the non-commutativity between $\widehat{H}_j^\ell(m)$ and $z_j^\ell(m)$. In the case of commuting Hamiltonian, this choice gives the desired result, since $\Xi_j^m(\omega) = 0$ (a similar discussion also applies to $\tilde{S}_{m,j}(\omega)$ and $\Gamma_j^{m}(\omega)$). Still, for non-commuting models this correction is finite, and it does not decay sufficiently fast to be a useful choice in the derivation.
	
	In the opposite limit, one can choose ${\tilde R}_{m,j}(\omega) \equiv R^{\infty}_{m,j}(\omega)$ and ${\tilde S}_{m,j}(\omega) \equiv S^{\infty}_{m,j}(\omega)$, where we introduced the two operators:
	\begin{align}
		R^{\infty}_{m,j}(\omega):=e^{-i\omega (z_j^\ell(m-1)-z_j^\ell(m))}e^{i\omega z_j^\ell(m-1) }e^{-i\omega z_j^\ell(m) }\,;\qquad S^{\infty}_{m,j}(\omega):=e^{i \omega(\widehat{H}-z^{\ell}_j(m))} e^{i \omega z^{\ell}_j(m)}e^{-i \omega\widehat{H}}\,,\label{eq:RSinfty}
	\end{align}
	Then, if we consider the effect of this choice on $\xi_j^m(\omega)$ and $\Xi_j^m(\omega)$, it follows that:
	\begin{align}
		\begin{cases}
			\xi_j^0(\omega) := \idO\,;\\
			\xi_j^m(\omega) := \norbra{e^{i\omega z_j^\ell(m-1) }e^{-i\omega z_j^\ell(m) }-\idO}\,;\\
			\Xi_j^m(\omega) := 0\,.
		\end{cases}
		\qquad\text{For ${\tilde{R}}_{m,j}(\omega) = R^{\infty}_{m,j}(\omega)$}\,.\label{eq:Minfinity}
	\end{align}
	This means that for commuting and non-commuting models the correction arising from $\Xi_j^m(\omega)$ disappears. On the other hand, though, $\xi_j^m(\omega)$ becomes completely delocalized, which prevents the error arising from the decomposition in Eq.~\eqref{eq:c9} to be small. 
	
	For this reason, we look for a choice of ${\tilde R}_{m,j}(\omega)$ that makes $\Xi_j^m(\omega)$ sufficiently small, while keeping the locality of $\xi_j^m(\omega)$. This is achieved with the definition:
	\begin{align}
		\begin{cases}
			{\tilde R}_{m,j}(\omega)=R_{m,j}^M(\omega) = \idO + \sum_{n=1}^{M}\;\frac{\omega^n}{n!}\,R^{(n)}_{m,j}(0)
			\\
			{\tilde S}_{m,j}(\omega)=S_{m,j}^M(\omega) = \idO + \sum_{n=1}^{M}\;\frac{\omega^n}{n!}\,S^{(n)}_{m,j}(0)
		\end{cases}\,,\label{eq:c17}
	\end{align}
	where $R^{(m)}_{m,j}(0) :=\partial^{m}_\omega R^{\infty}_{m,j}(\omega)\big|_{\omega=0}$  and $S^{(m)}_{m,j}(0) :=\partial^{m}_\omega S^{\infty}_{m,j}(\omega)\big|_{\omega=0}$. That is, we choose ${\tilde R}_{m,j}(\omega)$ and ${\tilde S}_{m,j}(\omega)$ to be the Taylor expansion of $R^{\infty}_{m,j}(\omega)$ and $S^{\infty}_{m,j}(\omega)$ truncated at order $M$. Choosing $M=0$ or $M=\infty$ gives the two limiting behaviors discussed above in Eq.~\eqref{eq:M1} and Eq.~\eqref{eq:Minfinity}. 
	
	As it will be clear in the following, fixing $M$ in an intermediate regime between the two will allow us to achieve the correct scaling. Indeed, we can prove the following:
	
	\begin{lemma}\label{lemma:clusterExp}
		Let $R^{\infty}_{m,j}(\omega)$ and $S^{\infty}_{m,j}(\omega)$ be the two operators defined in Eq.~\eqref{eq:RSinfty}, and let $R_{m,j}^M(\omega)$ and $S_{m,j}^M(\omega)$ be their truncated Taylor expansion at order $M$.  the following properties hold:
		\begin{enumerate}
			\item $R^{\infty}_{m,j}(0)=S^{\infty}_{m,j}(0)=\idO$ and $R^{(1)}_{m,j}(0)=S^{(1)}_{m,j}(0)=0$;\label{it:zeros1}
			\item Their support satisfies:\label{it:supp1}
			\begin{align}
				&{\rm supp}(R_{m,j}^M(\omega) ) \subseteq\{i| d(j,i)\in [2R (\ell (m-1)-M),2R (\ell\, m+M)]\}\,;
				\\
				&{\rm supp}(S_{m,j}^M(\omega) ) \subseteq\{i| d(j,i)\in [2R (\ell (m-1)-M),2R (\ell\, m+M)]\}\,.
			\end{align}
			\item Their $n$-th derivative is upper bounded by:\label{it:norm1}
			\begin{align}
				\|R^{(m)}_{m,j}(0)\|,\,\|S^{(m)}_{m,j}(0)\|\leq \norbra{\frac{E\,\Gamma \ell^{\frac{D}{2}} m^{\frac{D-1}{2}}}{\sigma_H}}^{n} (n)!\,.\label{eq:l4e31M}
			\end{align}
			where we introduced the constant $\Gamma= \max\{(4c_D (2R)^D ),  (2c_D^2(2R)^{D})\}$. This implies that:
			\begin{align}
				\|R_{m,j}^M(\omega)\|,\,\|S_{m,j}^M(\omega)\| \leq \frac{1-\norbra{\frac{\Gamma \ell^{\frac{D}{2}} m^{\frac{D-1}{2}}}{\sigma_H} (E|\omega|)}^{M+1}}{1-\norbra{\frac{\Gamma \ell^{\frac{D}{2}} m^{\frac{D-1}{2}}}{\sigma_H} (E|\omega|)}}\,.
			\end{align}
		\end{enumerate}
		Now, let $m\leq K$, and let $\omega\in[0,\Omega_1)$, where $\Omega_1 := \norbra{\frac{\sigma_H}{2E\,\Gamma \ell^{\frac{D}{2}} K^{\frac{D-1}{2}}}}$. Then, it also holds that:
		\begin{align}
			\|R_{m,j}^M(\omega)\|\leq 2\norbra{1 - \frac{1}{2^{M+1}}}\,;\;\;&\;\;\|S_{m,j}^M(\omega)\|\leq 2\norbra{1 - \frac{1}{2^{M+1}}}\,;\nonumber
			\\
			\|R_{m,j}^\infty(\omega)-R_{m,j}^M(\omega)\|\leq 2\norbra{\frac{\Gamma \ell^{\frac{D}{2}} m^{\frac{D-1}{2}}}{\sigma_H} (E|\omega|)}^{M+1}\,;\;\;&\;\;\|S_{m,j}^\infty(\omega)-R_{m,j}^M(\omega)\|\leq 2\norbra{\frac{\Gamma \ell^{\frac{D}{2}} m^{\frac{D-1}{2}}}{\sigma_H} (E|\omega|)}^{M+1}\,;\label{it:convNorm1}
			\\
			\|\partial^{(k)}_\omega R_{m,j}^M(\omega)\|\leq 2\,(k!) \,\norbra{\frac{E\,\Gamma \ell^{\frac{D}{2}} m^{\frac{D-1}{2}}}{\sigma_H} }^k\,;\;\;&\;\;\|\partial^{(k)}_\omega S_{m,j}^M(\omega)\|\leq 2\,(k!) \,\norbra{\frac{E\,\Gamma \ell^{\frac{D}{2}} m^{\frac{D-1}{2}}}{\sigma_H} }^k\,.\nonumber
		\end{align}
	\end{lemma}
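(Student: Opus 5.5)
We treat $R^\infty_{m,j}$ and $S^\infty_{m,j}$ as products of exponentials and expand them in nested commutators, in the spirit of the Baker–Campbell–Hausdorff and cluster-expansion arguments. Write $A:=\widehat{H}_j^\ell(m)=z_j^\ell(m-1)-z_j^\ell(m)$ and $B:=z_j^\ell(m)$, so that $R^\infty_{m,j}(\omega)=e^{-i\omega A}e^{i\omega(A+B)}e^{-i\omega B}$. The function $\omega\mapsto R^\infty_{m,j}(\omega)$ is entire and satisfies $R^\infty(0)=\idO$. Differentiating once gives
\[
\partial_\omega R^\infty=e^{-i\omega A}\big(-iA+i(A+B)\big)e^{i\omega(A+B)}e^{-i\omega B}-e^{-i\omega A}e^{i\omega(A+B)}\,iB\,e^{-i\omega B},
\]
which vanishes at $\omega=0$. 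This proves item \ref{it:zeros1}, and the same computation works for $S^\infty$ with $A\to \widehat H-z_j^\ell(m)$ and $B\to z_j^\ell(m)$. More usefully, I would write $R^\infty=e^{-i\omega A}e^{i\omega(A+B)}e^{-i\omega B}$ as a time-ordered exponential in the interaction picture. Concretely, $R^\infty(\omega)=\mathcal{T}\exp\big(\int_0^\omega \de s\, W(s)\big)$, where $W(s)=e^{-isA}\,[\,\cdot\,]\,e^{isA}$ acts on $i(e^{is(A+B)}Be^{-is(A+B)}-B)$, or an equivalent form. The $n$-th Taylor coefficient at $0$ is then a finite sum of nested commutators $[X_1,[X_2,\dots,[X_{n-1},B]]]$ with each $X_k\in\{A,B\}$, and at least one $X_k$ equal to $A$.

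For item \ref{it:supp1}, locality does the work. Each $h_k$ is $R$-local, so a commutator $[h_{k_1},h_{k_2}]$ is nonzero only if the supports overlap. A nested commutator of order $n$ therefore forms a connected chain of $n$ terms, each step moving by at most $2R$. Since at least one term in the chain comes from $A$, which is supported at distances in $[2R\ell(m-1),2R\ell m]$, and the chain involves at most $M$ further steps for orders $n\le M$, the support is contained in the enlarged shell $[2R(\ell(m-1)-M),2R(\ell m+M)]$. The terms from $B$ that do not connect to $A$ cancel against the $e^{-i\omega B}$ factor, and I would check this explicitly through the commutator structure.

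The main obstacle is the norm bound in item \ref{it:norm1}, namely $\|R^{(n)}(0)\|\le n!\,(E\Gamma\ell^{D/2}m^{(D-1)/2}/\sigma_H)^n$, which requires counting the connected chains. I would bound the $n$-th nested commutator by summing over sequences $k_1,\dots,k_n$ in which each $k_{i+1}$ lies within distance $2R$ of the support of the previous terms. The factor $(E/\sigma_H)^n$ comes from the norms $\|\widehat h_k\|/\sigma_H$. The number of choices at each step is at most $c_D(2R)^D$ times the current cluster size, which produces the $n!$ when combined with the $1/n!$ from the time-ordered integral, in the same way as the cluster-expansion estimate of \cite{Wild_2023}. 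The anchoring term lies in the shell of $A$, whose size is $\lesssim c_D\,\ell^{D}m^{D-1}(2R)^D$. Because this anchor count appears only once, I would split it symmetrically and use a Cauchy–Schwarz or AM–GM argument to distribute it as $\ell^{D/2}m^{(D-1)/2}$ per order, after choosing $\Gamma$ to absorb both $4c_D(2R)^D$ and $2c_D^2(2R)^D$. Getting exactly this per-order distribution, rather than a single overall factor, is the step I expect to need the most care. If it fails I would settle for the weaker factor, which still suffices as long as $n\ge1$.

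The remaining claims follow from Taylor arithmetic. Summing the geometric series of the coefficient bounds gives $\|R^M\|\le\sum_{n\le M}x^n$ with $x=\Gamma\ell^{D/2}m^{(D-1)/2}E|\omega|/\sigma_H$. For $\omega<\Omega_1$ and $m\le K$ we have $x\le 1/2$, which gives $2(1-2^{-M-1})$. The tail $\|R^\infty-R^M\|\le\sum_{n>M}x^n\le 2x^{M+1}$ follows the same way. For the derivatives, $\partial^k_\omega R^M=\sum_{n=k}^{M}\frac{\omega^{n-k}}{(n-k)!}R^{(n)}(0)$ is bounded by $\sum_n \frac{n!}{(n-k)!}(E\Gamma\cdots/\sigma_H)^n\omega^{n-k}$. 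This equals $k!\,(E\Gamma\cdots/\sigma_H)^k\sum_{p}\binom{p+k}{k}x^p$. I would bound this last sum by $2$; if that is not possible at $x\le1/2$, I would shrink $\Omega_1$ by a constant factor, which is harmless. All of these steps carry over verbatim to $S$.
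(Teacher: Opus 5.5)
Your proposal has a genuine gap. It lies in the structural claim that the $n$-th Taylor coefficient of $R^\infty_{m,j}$ is a sum of \emph{single} nested commutators carrying one anchor. Item~3 rests on this claim, and it is false. By BCH, $\log R^\infty_{m,j}(\omega)=\tfrac{\omega^2}{2}[A,B]+O(\omega^3)$ is a Lie series, so the coefficients of $R^\infty_{m,j}$ are polynomials in nested commutators, e.g.\ $R^{(4)}(0)=3[A,B]^2+(\text{Lie terms})$. Equivalently, the Dyson series of your own time-ordered exponential contains products of several generator insertions.

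Each factor is its own connected cluster and must be anchored on the interface $\mathcal{S}={\rm supp}[A,B]$ separately. So the anchor count $|\mathcal{S}|\lesssim c_D(2R\ell)^Dm^{D-1}$ enters once per factor, up to $\lfloor n/2\rfloor$ times, because the Lie series starts at order $\omega^2$. This, and not a Cauchy--Schwarz/AM--GM redistribution of a single anchor, is where the factor $\sqrt{|\mathcal{S}|}\sim\ell^{D/2}m^{(D-1)/2}$ per order comes from.

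Your fallback is therefore not a valid bound. Suppose $\|[A,B]\|$ is of order $|\mathcal{S}|E^2/\sigma_H^2$ (e.g.\ $[A,B]\propto\sum_{s\in\mathcal{S}}Y_s$, easy to arrange in $D\ge2$). Then $\|R^{(4)}(0)\|$ grows like $|\mathcal{S}|^2$, which no estimate of the form $n!\,\lambda^n|\mathcal{S}|$ can capture. Conversely, if there really were a single anchor, nothing would need redistributing, since $|\mathcal{S}|\le|\mathcal{S}|^{n/2}$ for $n\ge2$.

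The paper's argument builds the products in automatically. In your notation, $\partial_\omega R^\infty_{m,j}=R^\infty_{m,j}\,\Theta(\omega)$ with $\Theta(\omega)=e^{i\omega B}\big(\int_0^\omega e^{-is(A+B)}[A,B]e^{is(A+B)}\de s\big)e^{-i\omega B}$ and $\Theta(0)=0$. Hence $R^{(n+1)}(0)=\sum_{k=1}^{n}\binom{n}{k}R^{(n-k)}(0)\,\Theta^{(k)}(0)$. Each $\Theta^{(k)}(0)$ really is a sum of nested commutators built on $[A,B]$, hence singly anchored. Your chain counting bounds it by $\lesssim (E/\sigma_H)\,|\mathcal{S}|\,k!\,(2\lambda)^k$, with $\lambda=2c_D(2R)^DE/\sigma_H$. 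Inducting on the recursion, using $R^{(1)}(0)=0$, gives $\|R^{(n)}(0)\|\le n!\,(2\lambda\sqrt{\gamma})^n$ with $\gamma=\max\{1,|\mathcal{S}|/(4c_D(2R)^D)\}$. Your support argument then goes through once applied factor by factor.

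Separately, your doubt about the derivative bound is justified: $\sum_{p\ge0}\binom{p+k}{k}x^p=(1-x)^{-(k+1)}$, which equals $2^{k+1}$ at $x=1/2$. The paper makes the same slip: it quotes this identity and then concludes with the constant $2$. Shrinking $\Omega_1$ by a fixed factor repairs this only for bounded $k$, which is enough downstream, where only $k\le2$ is used. What holds uniformly is $\|\partial^{(k)}_\omega R^M_{m,j}(\omega)\|\le 2\,k!\,\big(2E\Gamma\ell^{D/2}m^{(D-1)/2}/\sigma_H\big)^k$.
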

	The proof is based on the cluster expansion, and it is deferred to App.~\ref{app:clusterExp}. 



\section{Conclusion}\label{sec:conclusion}

The main open technical question of this work is whether a proof similar to that of Theorem \ref{thm:BEprod} can be found also for Theorem \ref{thm:BE}, which will simplify the arguments dramatically. This requires proving the convergence of the cluster expansion of the characteristic function of a state with correlation decay, for which none of the existing techniques directly apply \cite{Kuwahara_2020,Haah_2024,Wild_2023,Mann_2024}. The proof of the concentration bound for states with decaying correlations in \cite{Anshu_2016} might yield some insights along these lines. It would also be interesting to understand whether an alternative set of conditions similar to the classical of Lindeberg's \cite{Lindeberg1922} can also yield a similar result.  It is also of relevance to extend this type of result to the infinite dimensional bosonic setting \cite{Cramer_2010,Arous2013} and to long-range Hamiltonians beyond \cite{Sanchez_Segovia_2025}.

It is understood already from simple classical i.i.d. variables that the bound $\tilde {\mathcal{O}}(N^{-\frac{1}{2}})$ derived here is tight up to polylogarithmic factors, and there are also explicit examples of that tightness in quantum many-body systems \cite{ScarsProof}. However, chaotic quantum systems, with the well-established level repulsion, have much smoother spectra than discrete classical random variables, so it is possible that there is a large class of interacting quantum systems presenting a much faster rate of convergence, up to exponentially fast (see \cite{Rai_2024} for numerical evidence for small system sizes). This in itself could have relevant consequences for the thermalization of chaotic quantum systems \cite{Wilming_2018}.

    
	\acknowledgements
	
	AMA thanks Dominik Wild for useful discussions regarding Thm. \ref{thm:BEprod} and the proof in \cite{Rai_2024}. AMA and MS acknowledge support from the Spanish Agencia Estatal de Investigacion through the grants ``IFT Centro de Excelencia Severo Ochoa CEX2020-001007-S", ``PID2023-150847NA-I00'', ``EUR2025-164823'' and  ``PCI2024-153448'', funded by MICIU/AEI/10.13039/501100011033 and co-funded by the EU. MG acknowledges the support from the Engineering and Physical Sciences Research Council, grant no. EP/T022140/1. This
project was funded within the QuantERA II Programme that has received funding from the EU’s H2020 research and innovation programme under the GA No 101017733.

\bibliography{bib.bib}

\begin{thebibliography}{64}%
\makeatletter
\providecommand \@ifxundefined [1]{%
 \@ifx{#1\undefined}
}%
\providecommand \@ifnum [1]{%
 \ifnum #1\expandafter \@firstoftwo
 \else \expandafter \@secondoftwo
 \fi
}%
\providecommand \@ifx [1]{%
 \ifx #1\expandafter \@firstoftwo
 \else \expandafter \@secondoftwo
 \fi
}%
\providecommand \natexlab [1]{#1}%
\providecommand \enquote  [1]{``#1''}%
\providecommand \bibnamefont  [1]{#1}%
\providecommand \bibfnamefont [1]{#1}%
\providecommand \citenamefont [1]{#1}%
\providecommand \href@noop [0]{\@secondoftwo}%
\providecommand \href [0]{\begingroup \@sanitize@url \@href}%
\providecommand \@href[1]{\@@startlink{#1}\@@href}%
\providecommand \@@href[1]{\endgroup#1\@@endlink}%
\providecommand \@sanitize@url [0]{\catcode `\\12\catcode `\$12\catcode
  `\&12\catcode `\#12\catcode `\^12\catcode `\_12\catcode `\%12\relax}%
\providecommand \@@startlink[1]{}%
\providecommand \@@endlink[0]{}%
\providecommand \url  [0]{\begingroup\@sanitize@url \@url }%
\providecommand \@url [1]{\endgroup\@href {#1}{\urlprefix }}%
\providecommand \urlprefix  [0]{URL }%
\providecommand \Eprint [0]{\href }%
\providecommand \doibase [0]{https://doi.org/}%
\providecommand \selectlanguage [0]{\@gobble}%
\providecommand \bibinfo  [0]{\@secondoftwo}%
\providecommand \bibfield  [0]{\@secondoftwo}%
\providecommand \translation [1]{[#1]}%
\providecommand \BibitemOpen [0]{}%
\providecommand \bibitemStop [0]{}%
\providecommand \bibitemNoStop [0]{.\EOS\space}%
\providecommand \EOS [0]{\spacefactor3000\relax}%
\providecommand \BibitemShut  [1]{\csname bibitem#1\endcsname}%
\let\auto@bib@innerbib\@empty
\bibitem [{\citenamefont {Lieb}\ and\ \citenamefont
  {Robinson}(1972)}]{Lieb1972}%
  \BibitemOpen
  \bibfield  {author} {\bibinfo {author} {\bibfnamefont {E.~H.}\ \bibnamefont
  {Lieb}}\ and\ \bibinfo {author} {\bibfnamefont {D.~W.}\ \bibnamefont
  {Robinson}},\ }\bibfield  {title} {\bibinfo {title} {The finite group
  velocity of quantum spin systems},\ }\href
  {https://doi.org/10.1007/BF01645779} {\bibfield  {journal} {\bibinfo
  {journal} {Communications in Mathematical Physics}\ }\textbf {\bibinfo
  {volume} {28}},\ \bibinfo {pages} {251} (\bibinfo {year} {1972})}\BibitemShut
  {NoStop}%
\bibitem [{\citenamefont {Hastings}\ and\ \citenamefont
  {Koma}(2006)}]{Hastings_2006}%
  \BibitemOpen
  \bibfield  {author} {\bibinfo {author} {\bibfnamefont {M.~B.}\ \bibnamefont
  {Hastings}}\ and\ \bibinfo {author} {\bibfnamefont {T.}~\bibnamefont
  {Koma}},\ }\bibfield  {title} {\bibinfo {title} {Spectral gap and exponential
  decay of correlations},\ }\href {https://doi.org/10.1007/s00220-006-0030-4}
  {\bibfield  {journal} {\bibinfo  {journal} {Communications in Mathematical
  Physics}\ }\textbf {\bibinfo {volume} {265}},\ \bibinfo {pages} {781–804}
  (\bibinfo {year} {2006})}\BibitemShut {NoStop}%
\bibitem [{\citenamefont {Nachtergaele}\ and\ \citenamefont
  {Sims}(2006)}]{Nachtergaele2006}%
  \BibitemOpen
  \bibfield  {author} {\bibinfo {author} {\bibfnamefont {B.}~\bibnamefont
  {Nachtergaele}}\ and\ \bibinfo {author} {\bibfnamefont {R.}~\bibnamefont
  {Sims}},\ }\bibfield  {title} {\bibinfo {title} {Lieb-{R}obinson bounds and
  the exponential clustering theorem},\ }\href
  {https://doi.org/10.1007/s00220-006-1556-1} {\bibfield  {journal} {\bibinfo
  {journal} {Communications in Mathematical Physics}\ }\textbf {\bibinfo
  {volume} {265}},\ \bibinfo {pages} {119} (\bibinfo {year}
  {2006})}\BibitemShut {NoStop}%
\bibitem [{\citenamefont {Hastings}(2007)}]{Hastings_2007}%
  \BibitemOpen
  \bibfield  {author} {\bibinfo {author} {\bibfnamefont {M.~B.}\ \bibnamefont
  {Hastings}},\ }\bibfield  {title} {\bibinfo {title} {An area law for
  one-dimensional quantum systems},\ }\href
  {https://doi.org/10.1088/1742-5468/2007/08/p08024} {\bibfield  {journal}
  {\bibinfo  {journal} {Journal of Statistical Mechanics: Theory and
  Experiment}\ }\textbf {\bibinfo {volume} {2007}},\ \bibinfo {pages}
  {P08024–P08024} (\bibinfo {year} {2007})}\BibitemShut {NoStop}%
\bibitem [{\citenamefont {Wolf}\ \emph {et~al.}(2008)\citenamefont {Wolf},
  \citenamefont {Verstraete}, \citenamefont {Hastings},\ and\ \citenamefont
  {Cirac}}]{Wolf_2008}%
  \BibitemOpen
  \bibfield  {author} {\bibinfo {author} {\bibfnamefont {M.~M.}\ \bibnamefont
  {Wolf}}, \bibinfo {author} {\bibfnamefont {F.}~\bibnamefont {Verstraete}},
  \bibinfo {author} {\bibfnamefont {M.~B.}\ \bibnamefont {Hastings}},\ and\
  \bibinfo {author} {\bibfnamefont {J.~I.}\ \bibnamefont {Cirac}},\ }\bibfield
  {title} {\bibinfo {title} {Area laws in quantum systems: Mutual information
  and correlations},\ }\bibfield  {journal} {\bibinfo  {journal} {Physical
  Review Letters}\ }\textbf {\bibinfo {volume} {100}},\ \href
  {https://doi.org/10.1103/physrevlett.100.070502}
  {10.1103/physrevlett.100.070502} (\bibinfo {year} {2008})\BibitemShut
  {NoStop}%
\bibitem [{\citenamefont {Eisert}\ \emph {et~al.}(2010)\citenamefont {Eisert},
  \citenamefont {Cramer},\ and\ \citenamefont {Plenio}}]{Eisert_2010}%
  \BibitemOpen
  \bibfield  {author} {\bibinfo {author} {\bibfnamefont {J.}~\bibnamefont
  {Eisert}}, \bibinfo {author} {\bibfnamefont {M.}~\bibnamefont {Cramer}},\
  and\ \bibinfo {author} {\bibfnamefont {M.~B.}\ \bibnamefont {Plenio}},\
  }\bibfield  {title} {\bibinfo {title} {Colloquium: Area laws for the
  entanglement entropy},\ }\href {https://doi.org/10.1103/revmodphys.82.277}
  {\bibfield  {journal} {\bibinfo  {journal} {Reviews of Modern Physics}\
  }\textbf {\bibinfo {volume} {82}},\ \bibinfo {pages} {277–306} (\bibinfo
  {year} {2010})}\BibitemShut {NoStop}%
\bibitem [{\citenamefont {Kliesch}\ \emph {et~al.}(2014)\citenamefont
  {Kliesch}, \citenamefont {Gogolin}, \citenamefont {Kastoryano}, \citenamefont
  {Riera},\ and\ \citenamefont {Eisert}}]{Kliesch_2014}%
  \BibitemOpen
  \bibfield  {author} {\bibinfo {author} {\bibfnamefont {M.}~\bibnamefont
  {Kliesch}}, \bibinfo {author} {\bibfnamefont {C.}~\bibnamefont {Gogolin}},
  \bibinfo {author} {\bibfnamefont {M.~J.}\ \bibnamefont {Kastoryano}},
  \bibinfo {author} {\bibfnamefont {A.}~\bibnamefont {Riera}},\ and\ \bibinfo
  {author} {\bibfnamefont {J.}~\bibnamefont {Eisert}},\ }\bibfield  {title}
  {\bibinfo {title} {Locality of temperature},\ }\bibfield  {journal} {\bibinfo
   {journal} {Physical Review X}\ }\textbf {\bibinfo {volume} {4}},\ \href
  {https://doi.org/10.1103/physrevx.4.031019} {10.1103/physrevx.4.031019}
  (\bibinfo {year} {2014})\BibitemShut {NoStop}%
\bibitem [{\citenamefont {Hartmann}\ \emph {et~al.}(2004)\citenamefont
  {Hartmann}, \citenamefont {Mahler},\ and\ \citenamefont
  {Hess}}]{Hartmann_2004}%
  \BibitemOpen
  \bibfield  {author} {\bibinfo {author} {\bibfnamefont {M.}~\bibnamefont
  {Hartmann}}, \bibinfo {author} {\bibfnamefont {G.}~\bibnamefont {Mahler}},\
  and\ \bibinfo {author} {\bibfnamefont {O.}~\bibnamefont {Hess}},\ }\bibfield
  {title} {\bibinfo {title} {Gaussian quantum fluctuations in interacting many
  particle systems},\ }\href
  {https://doi.org/10.1023/b:math.0000043321.00896.86} {\bibfield  {journal}
  {\bibinfo  {journal} {Letters in Mathematical Physics}\ }\textbf {\bibinfo
  {volume} {68}},\ \bibinfo {pages} {103–112} (\bibinfo {year}
  {2004})}\BibitemShut {NoStop}%
\bibitem [{\citenamefont {Erdös}\ and\ \citenamefont
  {Schröder}(2014)}]{Erd_s_2014}%
  \BibitemOpen
  \bibfield  {author} {\bibinfo {author} {\bibfnamefont {L.}~\bibnamefont
  {Erdös}}\ and\ \bibinfo {author} {\bibfnamefont {D.}~\bibnamefont
  {Schröder}},\ }\bibfield  {title} {\bibinfo {title} {Phase transition in the
  density of states of quantum spin glasses},\ }\href
  {https://doi.org/10.1007/s11040-014-9164-3} {\bibfield  {journal} {\bibinfo
  {journal} {Mathematical Physics, Analysis and Geometry}\ }\textbf {\bibinfo
  {volume} {17}},\ \bibinfo {pages} {441–464} (\bibinfo {year}
  {2014})}\BibitemShut {NoStop}%
\bibitem [{\citenamefont {Keating}\ \emph {et~al.}(2015)\citenamefont
  {Keating}, \citenamefont {Linden},\ and\ \citenamefont
  {Wells}}]{Keating_2015}%
  \BibitemOpen
  \bibfield  {author} {\bibinfo {author} {\bibfnamefont {J.~P.}\ \bibnamefont
  {Keating}}, \bibinfo {author} {\bibfnamefont {N.}~\bibnamefont {Linden}},\
  and\ \bibinfo {author} {\bibfnamefont {H.~J.}\ \bibnamefont {Wells}},\
  }\bibfield  {title} {\bibinfo {title} {Spectra and eigenstates of spin chain
  {H}amiltonians},\ }\href {https://doi.org/10.1007/s00220-015-2366-0}
  {\bibfield  {journal} {\bibinfo  {journal} {Communications in Mathematical
  Physics}\ }\textbf {\bibinfo {volume} {338}},\ \bibinfo {pages} {81–102}
  (\bibinfo {year} {2015})}\BibitemShut {NoStop}%
\bibitem [{\citenamefont {Berry}(1941)}]{berry1941accuracy}%
  \BibitemOpen
  \bibfield  {author} {\bibinfo {author} {\bibfnamefont {A.~C.}\ \bibnamefont
  {Berry}},\ }\bibfield  {title} {\bibinfo {title} {The accuracy of the
  {G}aussian approximation to the sum of independent variates},\ }\href
  {https://doi.org/https://doi.org/10.2307/1990053} {\bibfield  {journal}
  {\bibinfo  {journal} {Transactions of the american mathematical society}\
  }\textbf {\bibinfo {volume} {49}},\ \bibinfo {pages} {122} (\bibinfo {year}
  {1941})}\BibitemShut {NoStop}%
\bibitem [{\citenamefont {Esseen}(1942)}]{esseen1942liapounoff}%
  \BibitemOpen
  \bibfield  {author} {\bibinfo {author} {\bibfnamefont {C.}~\bibnamefont
  {Esseen}},\ }\href {https://books.google.es/books?id=VjXgPgAACAAJ} {\emph
  {\bibinfo {title} {On the Liapounoff Limit of Error in the Theory of
  Probability}}},\ Arkiv f{\"o}r matematik, astronomi och fysik\ (\bibinfo
  {publisher} {Almqvist \& Wiksell},\ \bibinfo {year} {1942})\BibitemShut
  {NoStop}%
\bibitem [{\citenamefont {Feller}(1991)}]{feller1991introduction}%
  \BibitemOpen
  \bibfield  {author} {\bibinfo {author} {\bibfnamefont {W.}~\bibnamefont
  {Feller}},\ }\href@noop {} {\emph {\bibinfo {title} {An Introduction to
  Probability Theory and Its Applications, Volume I}}},\ \bibinfo {edition}
  {3rd}\ ed.\ (\bibinfo  {publisher} {Wiley},\ \bibinfo {year}
  {1991})\BibitemShut {NoStop}%
\bibitem [{\citenamefont
  {Vershynin}(2026)}]{vershynin2026friendlyproofberryesseentheorem}%
  \BibitemOpen
  \bibfield  {author} {\bibinfo {author} {\bibfnamefont {R.}~\bibnamefont
  {Vershynin}},\ }\href {https://arxiv.org/abs/2602.06234} {\bibinfo {title} {A
  friendly proof of the {B}erry-{E}sseen theorem}} (\bibinfo {year} {2026}),\
  \Eprint {https://arxiv.org/abs/2602.06234} {arXiv:2602.06234} \BibitemShut
  {NoStop}%
\bibitem [{\citenamefont {Tikhomirov}(1981)}]{tikhomirov1981convergence}%
  \BibitemOpen
  \bibfield  {author} {\bibinfo {author} {\bibfnamefont {A.~N.}\ \bibnamefont
  {Tikhomirov}},\ }\bibfield  {title} {\bibinfo {title} {On the convergence
  rate in the central limit theorem for weakly dependent random variables},\
  }\href {https://doi.org/doi.org/10.1137/1125092} {\bibfield  {journal}
  {\bibinfo  {journal} {Theory of Probability \& Its Applications}\ }\textbf
  {\bibinfo {volume} {25}},\ \bibinfo {pages} {790} (\bibinfo {year}
  {1981})}\BibitemShut {NoStop}%
\bibitem [{\citenamefont {Sunklodas}(1984)}]{sunklodas1984rate}%
  \BibitemOpen
  \bibfield  {author} {\bibinfo {author} {\bibfnamefont {J.}~\bibnamefont
  {Sunklodas}},\ }\bibfield  {title} {\bibinfo {title} {Rate of convergence in
  the central limit theorem for random variables with strong mixing},\ }\href
  {https://doi.org/https://doi.org/10.1007/BF00970405} {\bibfield  {journal}
  {\bibinfo  {journal} {Lithuanian Mathematical Journal}\ }\textbf {\bibinfo
  {volume} {24}},\ \bibinfo {pages} {182} (\bibinfo {year} {1984})}\BibitemShut
  {NoStop}%
\bibitem [{\citenamefont {Anshu}(2016)}]{Anshu_2016}%
  \BibitemOpen
  \bibfield  {author} {\bibinfo {author} {\bibfnamefont {A.}~\bibnamefont
  {Anshu}},\ }\bibfield  {title} {\bibinfo {title} {Concentration bounds for
  quantum states with finite correlation length on quantum spin lattice
  systems},\ }\href {https://doi.org/10.1088/1367-2630/18/8/083011} {\bibfield
  {journal} {\bibinfo  {journal} {New Journal of Physics}\ }\textbf {\bibinfo
  {volume} {18}},\ \bibinfo {pages} {083011} (\bibinfo {year}
  {2016})}\BibitemShut {NoStop}%
\bibitem [{\citenamefont {Kuwahara}(2016)}]{Kuwahara_2016}%
  \BibitemOpen
  \bibfield  {author} {\bibinfo {author} {\bibfnamefont {T.}~\bibnamefont
  {Kuwahara}},\ }\bibfield  {title} {\bibinfo {title} {Connecting the
  probability distributions of different operators and generalization of the
  {C}hernoff–{H}oeffding inequality},\ }\href
  {https://doi.org/10.1088/1742-5468/2016/11/113103} {\bibfield  {journal}
  {\bibinfo  {journal} {Journal of Statistical Mechanics: Theory and
  Experiment}\ }\textbf {\bibinfo {volume} {2016}},\ \bibinfo {pages} {113103}
  (\bibinfo {year} {2016})}\BibitemShut {NoStop}%
\bibitem [{\citenamefont {Kuwahara}\ and\ \citenamefont
  {Saito}(2020{\natexlab{a}})}]{Kuwahara_2020}%
  \BibitemOpen
  \bibfield  {author} {\bibinfo {author} {\bibfnamefont {T.}~\bibnamefont
  {Kuwahara}}\ and\ \bibinfo {author} {\bibfnamefont {K.}~\bibnamefont
  {Saito}},\ }\bibfield  {title} {\bibinfo {title} {Gaussian concentration
  bound and ensemble equivalence in generic quantum many-body systems including
  long-range interactions},\ }\href {https://doi.org/10.1016/j.aop.2020.168278}
  {\bibfield  {journal} {\bibinfo  {journal} {Annals of Physics}\ }\textbf
  {\bibinfo {volume} {421}},\ \bibinfo {pages} {168278} (\bibinfo {year}
  {2020}{\natexlab{a}})}\BibitemShut {NoStop}%
\bibitem [{\citenamefont {De~Palma}\ and\ \citenamefont
  {Rouz{\'e}}(2022)}]{DePalma2022}%
  \BibitemOpen
  \bibfield  {author} {\bibinfo {author} {\bibfnamefont {G.}~\bibnamefont
  {De~Palma}}\ and\ \bibinfo {author} {\bibfnamefont {C.}~\bibnamefont
  {Rouz{\'e}}},\ }\bibfield  {title} {\bibinfo {title} {Quantum concentration
  inequalities},\ }\href {https://doi.org/10.1007/s00023-022-01181-1}
  {\bibfield  {journal} {\bibinfo  {journal} {Annales Henri Poincar{\'e}}\
  }\textbf {\bibinfo {volume} {23}},\ \bibinfo {pages} {3391} (\bibinfo {year}
  {2022})}\BibitemShut {NoStop}%
\bibitem [{\citenamefont {Anshu}\ and\ \citenamefont
  {Metger}(2023)}]{Anshu_2023}%
  \BibitemOpen
  \bibfield  {author} {\bibinfo {author} {\bibfnamefont {A.}~\bibnamefont
  {Anshu}}\ and\ \bibinfo {author} {\bibfnamefont {T.}~\bibnamefont {Metger}},\
  }\bibfield  {title} {\bibinfo {title} {Concentration bounds for quantum
  states and limitations on the {QAOA} from polynomial approximations},\ }\href
  {https://doi.org/10.22331/q-2023-05-11-999} {\bibfield  {journal} {\bibinfo
  {journal} {Quantum}\ }\textbf {\bibinfo {volume} {7}},\ \bibinfo {pages}
  {999} (\bibinfo {year} {2023})}\BibitemShut {NoStop}%
\bibitem [{\citenamefont {Wild}\ and\ \citenamefont
  {Alhambra}(2023)}]{Wild_2023}%
  \BibitemOpen
  \bibfield  {author} {\bibinfo {author} {\bibfnamefont {D.~S.}\ \bibnamefont
  {Wild}}\ and\ \bibinfo {author} {\bibfnamefont {A.~M.}\ \bibnamefont
  {Alhambra}},\ }\bibfield  {title} {\bibinfo {title} {Classical simulation of
  short-time quantum dynamics},\ }\bibfield  {journal} {\bibinfo  {journal}
  {PRX Quantum}\ }\textbf {\bibinfo {volume} {4}},\ \href
  {https://doi.org/10.1103/prxquantum.4.020340} {10.1103/prxquantum.4.020340}
  (\bibinfo {year} {2023})\BibitemShut {NoStop}%
\bibitem [{\citenamefont {Brandao}\ \emph {et~al.}(2015)\citenamefont
  {Brandao}, \citenamefont {Cramer},\ and\ \citenamefont
  {Guta}}]{brandao2015berry}%
  \BibitemOpen
  \bibfield  {author} {\bibinfo {author} {\bibfnamefont {F.}~\bibnamefont
  {Brandao}}, \bibinfo {author} {\bibfnamefont {M.}~\bibnamefont {Cramer}},\
  and\ \bibinfo {author} {\bibfnamefont {M.}~\bibnamefont {Guta}},\ }\bibfield
  {title} {\bibinfo {title} {Berry-{E}sseen theorem for quantum lattice systems
  and the equivalence of statistical mechanical ensembles},\ }\href
  {https://qip.iaqi.org/2015/talks/125-Brandao.pdf} {\bibfield  {journal}
  {\bibinfo  {journal} {QIP2015 Talk}\ } (\bibinfo {year} {2015})}\BibitemShut
  {NoStop}%
\bibitem [{\citenamefont
  {Hayashi}(2006)}]{hayashi2006quantumestimationquantumcentral}%
  \BibitemOpen
  \bibfield  {author} {\bibinfo {author} {\bibfnamefont {M.}~\bibnamefont
  {Hayashi}},\ }\href {https://arxiv.org/abs/quant-ph/0608198} {\bibinfo
  {title} {Quantum estimation and the quantum central limit theorem}} (\bibinfo
  {year} {2006}),\ \Eprint {https://arxiv.org/abs/quant-ph/0608198}
  {arXiv:quant-ph/0608198} \BibitemShut {NoStop}%
\bibitem [{\citenamefont {Guţă}\ and\ \citenamefont
  {Jenčová}(2007)}]{Gu__2007}%
  \BibitemOpen
  \bibfield  {author} {\bibinfo {author} {\bibfnamefont {M.}~\bibnamefont
  {Guţă}}\ and\ \bibinfo {author} {\bibfnamefont {A.}~\bibnamefont
  {Jenčová}},\ }\bibfield  {title} {\bibinfo {title} {Local asymptotic
  normality in quantum statistics},\ }\href
  {https://doi.org/10.1007/s00220-007-0340-1} {\bibfield  {journal} {\bibinfo
  {journal} {Communications in Mathematical Physics}\ }\textbf {\bibinfo
  {volume} {276}},\ \bibinfo {pages} {341–379} (\bibinfo {year}
  {2007})}\BibitemShut {NoStop}%
\bibitem [{\citenamefont {Kahn}\ and\ \citenamefont {Guţă}(2009)}]{Kahn2009}%
  \BibitemOpen
  \bibfield  {author} {\bibinfo {author} {\bibfnamefont {J.}~\bibnamefont
  {Kahn}}\ and\ \bibinfo {author} {\bibfnamefont {M.}~\bibnamefont {Guţă}},\
  }\bibfield  {title} {\bibinfo {title} {Local asymptotic normality for finite
  dimensional quantum systems},\ }\href
  {https://doi.org/10.1007/s00220-009-0787-3} {\bibfield  {journal} {\bibinfo
  {journal} {Communications in Mathematical Physics}\ }\textbf {\bibinfo
  {volume} {289}},\ \bibinfo {pages} {597} (\bibinfo {year}
  {2009})}\BibitemShut {NoStop}%
\bibitem [{\citenamefont {Brandao}\ and\ \citenamefont
  {Cramer}(2015)}]{brandao2015equivalencestatisticalmechanicalensembles}%
  \BibitemOpen
  \bibfield  {author} {\bibinfo {author} {\bibfnamefont {F.~G. S.~L.}\
  \bibnamefont {Brandao}}\ and\ \bibinfo {author} {\bibfnamefont
  {M.}~\bibnamefont {Cramer}},\ }\href {https://arxiv.org/abs/1502.03263}
  {\bibinfo {title} {Equivalence of statistical mechanical ensembles for
  non-critical quantum systems}} (\bibinfo {year} {2015}),\ \Eprint
  {https://arxiv.org/abs/1502.03263} {arXiv:1502.03263} \BibitemShut {NoStop}%
\bibitem [{\citenamefont {Tasaki}(2018)}]{Tasaki_2018}%
  \BibitemOpen
  \bibfield  {author} {\bibinfo {author} {\bibfnamefont {H.}~\bibnamefont
  {Tasaki}},\ }\bibfield  {title} {\bibinfo {title} {On the local equivalence
  between the canonical and the microcanonical ensembles for quantum spin
  systems},\ }\href {https://doi.org/10.1007/s10955-018-2077-y} {\bibfield
  {journal} {\bibinfo  {journal} {Journal of Statistical Physics}\ }\textbf
  {\bibinfo {volume} {172}},\ \bibinfo {pages} {905–926} (\bibinfo {year}
  {2018})}\BibitemShut {NoStop}%
\bibitem [{\citenamefont {Kuwahara}\ and\ \citenamefont
  {Saito}(2020{\natexlab{b}})}]{KuwaharaETH}%
  \BibitemOpen
  \bibfield  {author} {\bibinfo {author} {\bibfnamefont {T.}~\bibnamefont
  {Kuwahara}}\ and\ \bibinfo {author} {\bibfnamefont {K.}~\bibnamefont
  {Saito}},\ }\bibfield  {title} {\bibinfo {title} {Eigenstate thermalization
  from the clustering property of correlation},\ }\href
  {https://doi.org/10.1103/PhysRevLett.124.200604} {\bibfield  {journal}
  {\bibinfo  {journal} {Phys. Rev. Lett.}\ }\textbf {\bibinfo {volume} {124}},\
  \bibinfo {pages} {200604} (\bibinfo {year} {2020}{\natexlab{b}})}\BibitemShut
  {NoStop}%
\bibitem [{\citenamefont {Bertoni}\ \emph {et~al.}(2025)\citenamefont
  {Bertoni}, \citenamefont {Wassner}, \citenamefont {Guarnieri},\ and\
  \citenamefont {Eisert}}]{Bertoni_2025}%
  \BibitemOpen
  \bibfield  {author} {\bibinfo {author} {\bibfnamefont {C.}~\bibnamefont
  {Bertoni}}, \bibinfo {author} {\bibfnamefont {C.}~\bibnamefont {Wassner}},
  \bibinfo {author} {\bibfnamefont {G.}~\bibnamefont {Guarnieri}},\ and\
  \bibinfo {author} {\bibfnamefont {J.}~\bibnamefont {Eisert}},\ }\bibfield
  {title} {\bibinfo {title} {Typical thermalization of low-entanglement
  states},\ }\bibfield  {journal} {\bibinfo  {journal} {Communications
  Physics}\ }\textbf {\bibinfo {volume} {8}},\ \href
  {https://doi.org/10.1038/s42005-025-02161-7} {10.1038/s42005-025-02161-7}
  (\bibinfo {year} {2025})\BibitemShut {NoStop}%
\bibitem [{\citenamefont {Farrelly}\ \emph {et~al.}(2017)\citenamefont
  {Farrelly}, \citenamefont {Brandao},\ and\ \citenamefont
  {Cramer}}]{Farrelly_2017}%
  \BibitemOpen
  \bibfield  {author} {\bibinfo {author} {\bibfnamefont {T.}~\bibnamefont
  {Farrelly}}, \bibinfo {author} {\bibfnamefont {F.~G.}\ \bibnamefont
  {Brandao}},\ and\ \bibinfo {author} {\bibfnamefont {M.}~\bibnamefont
  {Cramer}},\ }\bibfield  {title} {\bibinfo {title} {Thermalization and return
  to equilibrium on finite quantum lattice systems},\ }\bibfield  {journal}
  {\bibinfo  {journal} {Physical Review Letters}\ }\textbf {\bibinfo {volume}
  {118}},\ \href {https://doi.org/10.1103/physrevlett.118.140601}
  {10.1103/physrevlett.118.140601} (\bibinfo {year} {2017})\BibitemShut
  {NoStop}%
\bibitem [{\citenamefont {Hovhannisyan}\ \emph {et~al.}(2020)\citenamefont
  {Hovhannisyan}, \citenamefont {Barra},\ and\ \citenamefont
  {Imparato}}]{Hovhannisyan_2020}%
  \BibitemOpen
  \bibfield  {author} {\bibinfo {author} {\bibfnamefont {K.~V.}\ \bibnamefont
  {Hovhannisyan}}, \bibinfo {author} {\bibfnamefont {F.}~\bibnamefont
  {Barra}},\ and\ \bibinfo {author} {\bibfnamefont {A.}~\bibnamefont
  {Imparato}},\ }\bibfield  {title} {\bibinfo {title} {Charging assisted by
  thermalization},\ }\bibfield  {journal} {\bibinfo  {journal} {Physical Review
  Research}\ }\textbf {\bibinfo {volume} {2}},\ \href
  {https://doi.org/10.1103/physrevresearch.2.033413}
  {10.1103/physrevresearch.2.033413} (\bibinfo {year} {2020})\BibitemShut
  {NoStop}%
\bibitem [{\citenamefont {Alhambra}\ \emph {et~al.}(2020)\citenamefont
  {Alhambra}, \citenamefont {Anshu},\ and\ \citenamefont
  {Wilming}}]{ScarsProof}%
  \BibitemOpen
  \bibfield  {author} {\bibinfo {author} {\bibfnamefont {A.~M.}\ \bibnamefont
  {Alhambra}}, \bibinfo {author} {\bibfnamefont {A.}~\bibnamefont {Anshu}},\
  and\ \bibinfo {author} {\bibfnamefont {H.}~\bibnamefont {Wilming}},\
  }\bibfield  {title} {\bibinfo {title} {Revivals imply quantum many-body
  scars},\ }\href {https://doi.org/10.1103/PhysRevB.101.205107} {\bibfield
  {journal} {\bibinfo  {journal} {Phys. Rev. B}\ }\textbf {\bibinfo {volume}
  {101}},\ \bibinfo {pages} {205107} (\bibinfo {year} {2020})}\BibitemShut
  {NoStop}%
\bibitem [{\citenamefont {Schecter}\ and\ \citenamefont
  {Iadecola}(2019)}]{Schecter_2019}%
  \BibitemOpen
  \bibfield  {author} {\bibinfo {author} {\bibfnamefont {M.}~\bibnamefont
  {Schecter}}\ and\ \bibinfo {author} {\bibfnamefont {T.}~\bibnamefont
  {Iadecola}},\ }\bibfield  {title} {\bibinfo {title} {Weak ergodicity breaking
  and quantum many-body scars in spin-1 $xy$ magnets},\ }\bibfield  {journal}
  {\bibinfo  {journal} {Physical Review Letters}\ }\textbf {\bibinfo {volume}
  {123}},\ \href {https://doi.org/10.1103/physrevlett.123.147201}
  {10.1103/physrevlett.123.147201} (\bibinfo {year} {2019})\BibitemShut
  {NoStop}%
\bibitem [{\citenamefont {Hovhannisyan}\ \emph {et~al.}(2021)\citenamefont
  {Hovhannisyan}, \citenamefont {J\o{}rgensen}, \citenamefont {Landi},
  \citenamefont {Alhambra}, \citenamefont {Brask},\ and\ \citenamefont
  {Perarnau-Llobet}}]{Thermometry}%
  \BibitemOpen
  \bibfield  {author} {\bibinfo {author} {\bibfnamefont {K.~V.}\ \bibnamefont
  {Hovhannisyan}}, \bibinfo {author} {\bibfnamefont {M.~R.}\ \bibnamefont
  {J\o{}rgensen}}, \bibinfo {author} {\bibfnamefont {G.~T.}\ \bibnamefont
  {Landi}}, \bibinfo {author} {\bibfnamefont {A.~M.}\ \bibnamefont {Alhambra}},
  \bibinfo {author} {\bibfnamefont {J.~B.}\ \bibnamefont {Brask}},\ and\
  \bibinfo {author} {\bibfnamefont {M.}~\bibnamefont {Perarnau-Llobet}},\
  }\bibfield  {title} {\bibinfo {title} {Optimal quantum thermometry with
  coarse-grained measurements},\ }\href
  {https://doi.org/10.1103/PRXQuantum.2.020322} {\bibfield  {journal} {\bibinfo
   {journal} {PRX Quantum}\ }\textbf {\bibinfo {volume} {2}},\ \bibinfo {pages}
  {020322} (\bibinfo {year} {2021})}\BibitemShut {NoStop}%
\bibitem [{\citenamefont {Rai}\ \emph {et~al.}(2024)\citenamefont {Rai},
  \citenamefont {Cirac},\ and\ \citenamefont {Alhambra}}]{Rai_2024}%
  \BibitemOpen
  \bibfield  {author} {\bibinfo {author} {\bibfnamefont {K.~S.}\ \bibnamefont
  {Rai}}, \bibinfo {author} {\bibfnamefont {J.~I.}\ \bibnamefont {Cirac}},\
  and\ \bibinfo {author} {\bibfnamefont {A.~M.}\ \bibnamefont {Alhambra}},\
  }\bibfield  {title} {\bibinfo {title} {Matrix product state approximations to
  quantum states of low energy variance},\ }\href
  {https://doi.org/10.22331/q-2024-07-10-1401} {\bibfield  {journal} {\bibinfo
  {journal} {Quantum}\ }\textbf {\bibinfo {volume} {8}},\ \bibinfo {pages}
  {1401} (\bibinfo {year} {2024})}\BibitemShut {NoStop}%
\bibitem [{\citenamefont {Lu}\ \emph {et~al.}(2021)\citenamefont {Lu},
  \citenamefont {Ba\~nuls},\ and\ \citenamefont {Cirac}}]{LuQuantum}%
  \BibitemOpen
  \bibfield  {author} {\bibinfo {author} {\bibfnamefont {S.}~\bibnamefont
  {Lu}}, \bibinfo {author} {\bibfnamefont {M.~C.}\ \bibnamefont {Ba\~nuls}},\
  and\ \bibinfo {author} {\bibfnamefont {J.~I.}\ \bibnamefont {Cirac}},\
  }\bibfield  {title} {\bibinfo {title} {Algorithms for quantum simulation at
  finite energies},\ }\href {https://doi.org/10.1103/PRXQuantum.2.020321}
  {\bibfield  {journal} {\bibinfo  {journal} {PRX Quantum}\ }\textbf {\bibinfo
  {volume} {2}},\ \bibinfo {pages} {020321} (\bibinfo {year}
  {2021})}\BibitemShut {NoStop}%
\bibitem [{\citenamefont {Schuckert}\ \emph {et~al.}(2023)\citenamefont
  {Schuckert}, \citenamefont {Bohrdt}, \citenamefont {Crane},\ and\
  \citenamefont {Knap}}]{Schuckert2023}%
  \BibitemOpen
  \bibfield  {author} {\bibinfo {author} {\bibfnamefont {A.}~\bibnamefont
  {Schuckert}}, \bibinfo {author} {\bibfnamefont {A.}~\bibnamefont {Bohrdt}},
  \bibinfo {author} {\bibfnamefont {E.}~\bibnamefont {Crane}},\ and\ \bibinfo
  {author} {\bibfnamefont {M.}~\bibnamefont {Knap}},\ }\bibfield  {title}
  {\bibinfo {title} {Probing finite-temperature observables in quantum
  simulators of spin systems with short-time dynamics},\ }\href
  {https://doi.org/10.1103/PhysRevB.107.L140410} {\bibfield  {journal}
  {\bibinfo  {journal} {Phys. Rev. B}\ }\textbf {\bibinfo {volume} {107}},\
  \bibinfo {pages} {L140410} (\bibinfo {year} {2023})}\BibitemShut {NoStop}%
\bibitem [{\citenamefont {Yang}\ \emph {et~al.}(2022)\citenamefont {Yang},
  \citenamefont {Cirac},\ and\ \citenamefont {Ba\~nuls}}]{YilunClassical}%
  \BibitemOpen
  \bibfield  {author} {\bibinfo {author} {\bibfnamefont {Y.}~\bibnamefont
  {Yang}}, \bibinfo {author} {\bibfnamefont {J.~I.}\ \bibnamefont {Cirac}},\
  and\ \bibinfo {author} {\bibfnamefont {M.~C.}\ \bibnamefont {Ba\~nuls}},\
  }\bibfield  {title} {\bibinfo {title} {Classical algorithms for many-body
  quantum systems at finite energies},\ }\href
  {https://doi.org/10.1103/PhysRevB.106.024307} {\bibfield  {journal} {\bibinfo
   {journal} {Phys. Rev. B}\ }\textbf {\bibinfo {volume} {106}},\ \bibinfo
  {pages} {024307} (\bibinfo {year} {2022})}\BibitemShut {NoStop}%
\bibitem [{\citenamefont {Irmejs}\ \emph {et~al.}(2024)\citenamefont {Irmejs},
  \citenamefont {Ba{\~{n}}uls},\ and\ \citenamefont {Cirac}}]{Irmejs2024}%
  \BibitemOpen
  \bibfield  {author} {\bibinfo {author} {\bibfnamefont {R.}~\bibnamefont
  {Irmejs}}, \bibinfo {author} {\bibfnamefont {M.~C.}\ \bibnamefont
  {Ba{\~{n}}uls}},\ and\ \bibinfo {author} {\bibfnamefont {J.~I.}\ \bibnamefont
  {Cirac}},\ }\bibfield  {title} {\bibinfo {title} {Efficient {Q}uantum
  {A}lgorithm for {F}iltering {P}roduct {S}tates},\ }\href
  {https://doi.org/10.22331/q-2024-06-27-1389} {\bibfield  {journal} {\bibinfo
  {journal} {{Quantum}}\ }\textbf {\bibinfo {volume} {8}},\ \bibinfo {pages}
  {1389} (\bibinfo {year} {2024})}\BibitemShut {NoStop}%
\bibitem [{\citenamefont {Huang}(2024)}]{Huang_2024}%
  \BibitemOpen
  \bibfield  {author} {\bibinfo {author} {\bibfnamefont {Y.}~\bibnamefont
  {Huang}},\ }\bibfield  {title} {\bibinfo {title} {Deviation from maximal
  entanglement for mid-spectrum eigenstates of local hamiltonians},\ }\href
  {https://doi.org/10.1109/jsait.2024.3487856} {\bibfield  {journal} {\bibinfo
  {journal} {IEEE Journal on Selected Areas in Information Theory}\ }\textbf
  {\bibinfo {volume} {5}},\ \bibinfo {pages} {694–701} (\bibinfo {year}
  {2024})}\BibitemShut {NoStop}%
\bibitem [{\citenamefont {Fannes}\ \emph {et~al.}(1992)\citenamefont {Fannes},
  \citenamefont {Nachtergaele},\ and\ \citenamefont {Werner}}]{Fannes1992}%
  \BibitemOpen
  \bibfield  {author} {\bibinfo {author} {\bibfnamefont {M.}~\bibnamefont
  {Fannes}}, \bibinfo {author} {\bibfnamefont {B.}~\bibnamefont
  {Nachtergaele}},\ and\ \bibinfo {author} {\bibfnamefont {R.~F.}\ \bibnamefont
  {Werner}},\ }\bibfield  {title} {\bibinfo {title} {Finitely correlated states
  on quantum spin chains},\ }\href {https://doi.org/10.1007/BF02099178}
  {\bibfield  {journal} {\bibinfo  {journal} {Communications in Mathematical
  Physics}\ }\textbf {\bibinfo {volume} {144}},\ \bibinfo {pages} {443}
  (\bibinfo {year} {1992})}\BibitemShut {NoStop}%
\bibitem [{\citenamefont {Perez-Garcia}\ \emph {et~al.}(2007)\citenamefont
  {Perez-Garcia}, \citenamefont {Verstraete}, \citenamefont {Wolf},\ and\
  \citenamefont {Cirac}}]{perezgarcia2007matrixproductstaterepresentations}%
  \BibitemOpen
  \bibfield  {author} {\bibinfo {author} {\bibfnamefont {D.}~\bibnamefont
  {Perez-Garcia}}, \bibinfo {author} {\bibfnamefont {F.}~\bibnamefont
  {Verstraete}}, \bibinfo {author} {\bibfnamefont {M.~M.}\ \bibnamefont
  {Wolf}},\ and\ \bibinfo {author} {\bibfnamefont {J.~I.}\ \bibnamefont
  {Cirac}},\ }\href {https://arxiv.org/abs/quant-ph/0608197} {\bibinfo {title}
  {Matrix product state representations}} (\bibinfo {year} {2007}),\ \Eprint
  {https://arxiv.org/abs/quant-ph/0608197} {arXiv:quant-ph/0608197 [quant-ph]}
  \BibitemShut {NoStop}%
\bibitem [{\citenamefont {Araki}(1969)}]{araki1969gibbs}%
  \BibitemOpen
  \bibfield  {author} {\bibinfo {author} {\bibfnamefont {H.}~\bibnamefont
  {Araki}},\ }\bibfield  {title} {\bibinfo {title} {Gibbs states of a one
  dimensional quantum lattice},\ }\href
  {https://doi.org/https://doi.org/10.1007/BF01645134} {\bibfield  {journal}
  {\bibinfo  {journal} {Communications in Mathematical Physics}\ }\textbf
  {\bibinfo {volume} {14}},\ \bibinfo {pages} {120} (\bibinfo {year}
  {1969})}\BibitemShut {NoStop}%
\bibitem [{\citenamefont {Bluhm}\ \emph {et~al.}(2022)\citenamefont {Bluhm},
  \citenamefont {Capel},\ and\ \citenamefont
  {P\'erez-Hern\'andez}}]{Bluhm_2022}%
  \BibitemOpen
  \bibfield  {author} {\bibinfo {author} {\bibfnamefont {A.}~\bibnamefont
  {Bluhm}}, \bibinfo {author} {\bibfnamefont {A.}~\bibnamefont {Capel}},\ and\
  \bibinfo {author} {\bibfnamefont {A.}~\bibnamefont {P\'erez-Hern\'andez}},\
  }\bibfield  {title} {\bibinfo {title} {Exponential decay of mutual
  information for {G}ibbs states of local hamiltonians},\ }\href
  {https://doi.org/10.22331/q-2022-02-10-650} {\bibfield  {journal} {\bibinfo
  {journal} {Quantum}\ }\textbf {\bibinfo {volume} {6}},\ \bibinfo {pages}
  {650} (\bibinfo {year} {2022})}\BibitemShut {NoStop}%
\bibitem [{\citenamefont {Bergamaschi}\ and\ \citenamefont
  {Chen}(2026)}]{bergamaschi2026fastmixingquantumspin}%
  \BibitemOpen
  \bibfield  {author} {\bibinfo {author} {\bibfnamefont {T.}~\bibnamefont
  {Bergamaschi}}\ and\ \bibinfo {author} {\bibfnamefont {C.-F.}\ \bibnamefont
  {Chen}},\ }\href {https://arxiv.org/abs/2510.08533} {\bibinfo {title} {Fast
  mixing of quantum spin chains at all temperatures}} (\bibinfo {year}
  {2026}),\ \Eprint {https://arxiv.org/abs/2510.08533} {arXiv:2510.08533
  [quant-ph]} \BibitemShut {NoStop}%
\bibitem [{\citenamefont {Kastoryano}\ and\ \citenamefont
  {Eisert}(2013)}]{Kastoryano_2013}%
  \BibitemOpen
  \bibfield  {author} {\bibinfo {author} {\bibfnamefont {M.~J.}\ \bibnamefont
  {Kastoryano}}\ and\ \bibinfo {author} {\bibfnamefont {J.}~\bibnamefont
  {Eisert}},\ }\bibfield  {title} {\bibinfo {title} {Rapid mixing implies
  exponential decay of correlations},\ }\bibfield  {journal} {\bibinfo
  {journal} {Journal of Mathematical Physics}\ }\textbf {\bibinfo {volume}
  {54}},\ \href {https://doi.org/10.1063/1.4822481} {10.1063/1.4822481}
  (\bibinfo {year} {2013})\BibitemShut {NoStop}%
\bibitem [{\citenamefont {Roon}\ and\ \citenamefont {Sims}(2024)}]{Roon_2024}%
  \BibitemOpen
  \bibfield  {author} {\bibinfo {author} {\bibfnamefont {E.~B.}\ \bibnamefont
  {Roon}}\ and\ \bibinfo {author} {\bibfnamefont {R.}~\bibnamefont {Sims}},\
  }\bibfield  {title} {\bibinfo {title} {On quasi-locality and decay of
  correlations for long-range models of open quantum spin systems},\ }\href
  {https://doi.org/10.1088/1751-8121/ad8609} {\bibfield  {journal} {\bibinfo
  {journal} {Journal of Physics A: Mathematical and Theoretical}\ }\textbf
  {\bibinfo {volume} {57}},\ \bibinfo {pages} {445206} (\bibinfo {year}
  {2024})}\BibitemShut {NoStop}%
\bibitem [{\citenamefont {Kimura}\ and\ \citenamefont
  {Kuwahara}(2025)}]{kimura2025clustering}%
  \BibitemOpen
  \bibfield  {author} {\bibinfo {author} {\bibfnamefont {Y.}~\bibnamefont
  {Kimura}}\ and\ \bibinfo {author} {\bibfnamefont {T.}~\bibnamefont
  {Kuwahara}},\ }\bibfield  {title} {\bibinfo {title} {Clustering theorem in 1d
  long-range interacting systems at arbitrary temperatures},\ }\href
  {https://doi.org/https://doi.org/10.1007/s00220-025-05242-4} {\bibfield
  {journal} {\bibinfo  {journal} {Communications in Mathematical Physics}\
  }\textbf {\bibinfo {volume} {406}},\ \bibinfo {pages} {1} (\bibinfo {year}
  {2025})}\BibitemShut {NoStop}%
\bibitem [{\citenamefont {Kim}\ \emph {et~al.}(2025)\citenamefont {Kim},
  \citenamefont {Kuwahara},\ and\ \citenamefont {Saito}}]{Kim_2025}%
  \BibitemOpen
  \bibfield  {author} {\bibinfo {author} {\bibfnamefont {D.}~\bibnamefont
  {Kim}}, \bibinfo {author} {\bibfnamefont {T.}~\bibnamefont {Kuwahara}},\ and\
  \bibinfo {author} {\bibfnamefont {K.}~\bibnamefont {Saito}},\ }\bibfield
  {title} {\bibinfo {title} {Thermal area law in long-range interacting
  systems},\ }\bibfield  {journal} {\bibinfo  {journal} {Physical Review
  Letters}\ }\textbf {\bibinfo {volume} {134}},\ \href
  {https://doi.org/10.1103/physrevlett.134.020402}
  {10.1103/physrevlett.134.020402} (\bibinfo {year} {2025})\BibitemShut
  {NoStop}%
\bibitem [{\citenamefont {M\"obus}\ \emph {et~al.}(2026)\citenamefont
  {M\"obus}, \citenamefont {S\'anchez-Segovia}, \citenamefont {Alhambra},\ and\
  \citenamefont {Capel}}]{mobusLongRange}%
  \BibitemOpen
  \bibfield  {author} {\bibinfo {author} {\bibfnamefont {T.}~\bibnamefont
  {M\"obus}}, \bibinfo {author} {\bibfnamefont {J.}~\bibnamefont
  {S\'anchez-Segovia}}, \bibinfo {author} {\bibfnamefont {A.~M.}\ \bibnamefont
  {Alhambra}},\ and\ \bibinfo {author} {\bibfnamefont {A.}~\bibnamefont
  {Capel}},\ }\bibfield  {title} {\bibinfo {title} {Stability of thermal
  equilibrium in long-range quantum systems},\ }\href
  {https://doi.org/10.1103/3wbz-lq3p} {\bibfield  {journal} {\bibinfo
  {journal} {Phys. Rev. Res.}\ }\textbf {\bibinfo {volume} {8}},\ \bibinfo
  {pages} {013198} (\bibinfo {year} {2026})}\BibitemShut {NoStop}%
\bibitem [{\citenamefont {Esseen}(1945)}]{esseen1945fourier}%
  \BibitemOpen
  \bibfield  {author} {\bibinfo {author} {\bibfnamefont {C.-G.}\ \bibnamefont
  {Esseen}},\ }\bibfield  {title} {\bibinfo {title} {{F}ourier analysis of
  distribution functions. {A mathematical study of the Laplace-Gaussian law}},\
  }\href {https://doi.org/10.1007/BF02392223} {\bibfield  {journal} {\bibinfo
  {journal} {Acta Mathematica}\ }\textbf {\bibinfo {volume} {77}},\ \bibinfo
  {pages} {1} (\bibinfo {year} {1945})}\BibitemShut {NoStop}%
\bibitem [{\citenamefont {Kolassa}(1997)}]{Kolassa1997}%
  \BibitemOpen
  \bibfield  {author} {\bibinfo {author} {\bibfnamefont {J.~E.}\ \bibnamefont
  {Kolassa}},\ }\bibinfo {title} {Characteristic functions and the
  {B}erry-{E}sseen theorem},\ in\ \href
  {https://doi.org/10.1007/978-1-4757-4277-0_2} {\emph {\bibinfo {booktitle}
  {Series Approximation Methods in Statistics}}}\ (\bibinfo  {publisher}
  {Springer New York},\ \bibinfo {address} {New York, NY},\ \bibinfo {year}
  {1997})\ pp.\ \bibinfo {pages} {5--24}\BibitemShut {NoStop}%
\bibitem [{\citenamefont {Sanchez-Segovia}\ \emph {et~al.}(2025)\citenamefont
  {Sanchez-Segovia}, \citenamefont {T.~Schneider},\ and\ \citenamefont
  {M.~Alhambra}}]{Sanchez_Segovia_2025}%
  \BibitemOpen
  \bibfield  {author} {\bibinfo {author} {\bibfnamefont {J.}~\bibnamefont
  {Sanchez-Segovia}}, \bibinfo {author} {\bibfnamefont {J.}~\bibnamefont
  {T.~Schneider}},\ and\ \bibinfo {author} {\bibfnamefont {A.}~\bibnamefont
  {M.~Alhambra}},\ }\bibfield  {title} {\bibinfo {title} {High-temperature
  partition functions and classical simulatability of long-range quantum
  systems},\ }\bibfield  {journal} {\bibinfo  {journal} {PRX Quantum}\ }\textbf
  {\bibinfo {volume} {6}},\ \href {https://doi.org/10.1103/cww3-j1vd}
  {10.1103/cww3-j1vd} (\bibinfo {year} {2025})\BibitemShut {NoStop}%
\bibitem [{\citenamefont {Gorin}\ \emph {et~al.}(2006)\citenamefont {Gorin},
  \citenamefont {Prosen}, \citenamefont {Seligman},\ and\ \citenamefont
  {Žnidarič}}]{Gorin2006}%
  \BibitemOpen
  \bibfield  {author} {\bibinfo {author} {\bibfnamefont {T.}~\bibnamefont
  {Gorin}}, \bibinfo {author} {\bibfnamefont {T.}~\bibnamefont {Prosen}},
  \bibinfo {author} {\bibfnamefont {T.~H.}\ \bibnamefont {Seligman}},\ and\
  \bibinfo {author} {\bibfnamefont {M.}~\bibnamefont {Žnidarič}},\ }\bibfield
   {title} {\bibinfo {title} {Dynamics of {L}oschmidt echoes and fidelity
  decay},\ }\href
  {https://doi.org/https://doi.org/10.1016/j.physrep.2006.09.003} {\bibfield
  {journal} {\bibinfo  {journal} {Physics Reports}\ }\textbf {\bibinfo {volume}
  {435}},\ \bibinfo {pages} {33} (\bibinfo {year} {2006})}\BibitemShut
  {NoStop}%
\bibitem [{\citenamefont {Goussev}\ \emph {et~al.}(2012)\citenamefont
  {Goussev}, \citenamefont {Jalabert}, \citenamefont {Pastawski},\ and\
  \citenamefont {Wisniacki}}]{Wisniacki_2012}%
  \BibitemOpen
  \bibfield  {author} {\bibinfo {author} {\bibfnamefont {A.}~\bibnamefont
  {Goussev}}, \bibinfo {author} {\bibfnamefont {R.~A.}\ \bibnamefont
  {Jalabert}}, \bibinfo {author} {\bibfnamefont {H.~M.}\ \bibnamefont
  {Pastawski}},\ and\ \bibinfo {author} {\bibfnamefont {D.~A.}\ \bibnamefont
  {Wisniacki}},\ }\bibfield  {title} {\bibinfo {title} {{L}oschmidt echo},\
  }\href {https://doi.org/10.4249/scholarpedia.11687} {\bibfield  {journal}
  {\bibinfo  {journal} {Scholarpedia}\ }\textbf {\bibinfo {volume} {7}},\
  \bibinfo {pages} {11687} (\bibinfo {year} {2012})}\BibitemShut {NoStop}%
\bibitem [{\citenamefont {Izrailev}\ and\ \citenamefont
  {Castañeda-Mendoza}(2006)}]{Izrailev_2006}%
  \BibitemOpen
  \bibfield  {author} {\bibinfo {author} {\bibfnamefont {F.}~\bibnamefont
  {Izrailev}}\ and\ \bibinfo {author} {\bibfnamefont {A.}~\bibnamefont
  {Castañeda-Mendoza}},\ }\bibfield  {title} {\bibinfo {title} {Return
  probability: Exponential versus {G}aussian decay},\ }\href
  {https://doi.org/10.1016/j.physleta.2005.10.077} {\bibfield  {journal}
  {\bibinfo  {journal} {Physics Letters A}\ }\textbf {\bibinfo {volume}
  {350}},\ \bibinfo {pages} {355–362} (\bibinfo {year} {2006})}\BibitemShut
  {NoStop}%
\bibitem [{\citenamefont {Torres-Herrera}\ and\ \citenamefont
  {Santos}(2014)}]{TorresGaussian}%
  \BibitemOpen
  \bibfield  {author} {\bibinfo {author} {\bibfnamefont {E.~J.}\ \bibnamefont
  {Torres-Herrera}}\ and\ \bibinfo {author} {\bibfnamefont {L.~F.}\
  \bibnamefont {Santos}},\ }\bibfield  {title} {\bibinfo {title} {Quench
  dynamics of isolated many-body quantum systems},\ }\href
  {https://doi.org/10.1103/PhysRevA.89.043620} {\bibfield  {journal} {\bibinfo
  {journal} {Phys. Rev. A}\ }\textbf {\bibinfo {volume} {89}},\ \bibinfo
  {pages} {043620} (\bibinfo {year} {2014})}\BibitemShut {NoStop}%
\bibitem [{\citenamefont {Haah}\ \emph {et~al.}(2024)\citenamefont {Haah},
  \citenamefont {Kothari},\ and\ \citenamefont {Tang}}]{Haah_2024}%
  \BibitemOpen
  \bibfield  {author} {\bibinfo {author} {\bibfnamefont {J.}~\bibnamefont
  {Haah}}, \bibinfo {author} {\bibfnamefont {R.}~\bibnamefont {Kothari}},\ and\
  \bibinfo {author} {\bibfnamefont {E.}~\bibnamefont {Tang}},\ }\bibfield
  {title} {\bibinfo {title} {Learning quantum {H}amiltonians from
  high-temperature gibbs states and real-time evolutions},\ }\href
  {https://doi.org/10.1038/s41567-023-02376-x} {\bibfield  {journal} {\bibinfo
  {journal} {Nature Physics}\ }\textbf {\bibinfo {volume} {20}},\ \bibinfo
  {pages} {1027–1031} (\bibinfo {year} {2024})}\BibitemShut {NoStop}%
\bibitem [{\citenamefont {Mann}\ and\ \citenamefont {Minko}(2024)}]{Mann_2024}%
  \BibitemOpen
  \bibfield  {author} {\bibinfo {author} {\bibfnamefont {R.~L.}\ \bibnamefont
  {Mann}}\ and\ \bibinfo {author} {\bibfnamefont {R.~M.}\ \bibnamefont
  {Minko}},\ }\bibfield  {title} {\bibinfo {title} {Algorithmic cluster
  expansions for quantum problems},\ }\bibfield  {journal} {\bibinfo  {journal}
  {PRX Quantum}\ }\textbf {\bibinfo {volume} {5}},\ \href
  {https://doi.org/10.1103/prxquantum.5.010305} {10.1103/prxquantum.5.010305}
  (\bibinfo {year} {2024})\BibitemShut {NoStop}%
\bibitem [{\citenamefont {Lindeberg}(1922)}]{Lindeberg1922}%
  \BibitemOpen
  \bibfield  {author} {\bibinfo {author} {\bibfnamefont {J.~W.}\ \bibnamefont
  {Lindeberg}},\ }\bibfield  {title} {\bibinfo {title} {Eine neue {H}erleitung
  des {E}xponentialgesetzes in der {W}ahrscheinlichkeitsrechnung},\ }\href
  {https://doi.org/10.1007/BF01494395} {\bibfield  {journal} {\bibinfo
  {journal} {Mathematische Zeitschrift}\ }\textbf {\bibinfo {volume} {15}},\
  \bibinfo {pages} {211} (\bibinfo {year} {1922})}\BibitemShut {NoStop}%
\bibitem [{\citenamefont {Cramer}\ and\ \citenamefont
  {Eisert}(2010)}]{Cramer_2010}%
  \BibitemOpen
  \bibfield  {author} {\bibinfo {author} {\bibfnamefont {M.}~\bibnamefont
  {Cramer}}\ and\ \bibinfo {author} {\bibfnamefont {J.}~\bibnamefont
  {Eisert}},\ }\bibfield  {title} {\bibinfo {title} {A quantum central limit
  theorem for non-equilibrium systems: exact local relaxation of correlated
  states},\ }\href {https://doi.org/10.1088/1367-2630/12/5/055020} {\bibfield
  {journal} {\bibinfo  {journal} {New Journal of Physics}\ }\textbf {\bibinfo
  {volume} {12}},\ \bibinfo {pages} {055020} (\bibinfo {year}
  {2010})}\BibitemShut {NoStop}%
\bibitem [{\citenamefont {Arous}\ \emph {et~al.}(2013)\citenamefont {Arous},
  \citenamefont {Kirkpatrick},\ and\ \citenamefont {Schlein}}]{Arous2013}%
  \BibitemOpen
  \bibfield  {author} {\bibinfo {author} {\bibfnamefont {G.~B.}\ \bibnamefont
  {Arous}}, \bibinfo {author} {\bibfnamefont {K.}~\bibnamefont {Kirkpatrick}},\
  and\ \bibinfo {author} {\bibfnamefont {B.}~\bibnamefont {Schlein}},\
  }\bibfield  {title} {\bibinfo {title} {A central limit theorem in many-body
  quantum dynamics},\ }\href {https://doi.org/10.1007/s00220-013-1722-1}
  {\bibfield  {journal} {\bibinfo  {journal} {Communications in Mathematical
  Physics}\ }\textbf {\bibinfo {volume} {321}},\ \bibinfo {pages} {371}
  (\bibinfo {year} {2013})}\BibitemShut {NoStop}%
\bibitem [{\citenamefont {Wilming}\ \emph {et~al.}(2018)\citenamefont
  {Wilming}, \citenamefont {de~Oliveira}, \citenamefont {Short},\ and\
  \citenamefont {Eisert}}]{Wilming_2018}%
  \BibitemOpen
  \bibfield  {author} {\bibinfo {author} {\bibfnamefont {H.}~\bibnamefont
  {Wilming}}, \bibinfo {author} {\bibfnamefont {T.~R.}\ \bibnamefont
  {de~Oliveira}}, \bibinfo {author} {\bibfnamefont {A.~J.}\ \bibnamefont
  {Short}},\ and\ \bibinfo {author} {\bibfnamefont {J.}~\bibnamefont
  {Eisert}},\ }\bibinfo {title} {Equilibration times in closed quantum
  many-body systems},\ in\ \href {https://doi.org/10.1007/978-3-319-99046-0_18}
  {\emph {\bibinfo {booktitle} {Thermodynamics in the Quantum Regime}}}\
  (\bibinfo  {publisher} {Springer International Publishing},\ \bibinfo {year}
  {2018})\ p.\ \bibinfo {pages} {435–455}\BibitemShut {NoStop}%
\end{thebibliography}%


\appendix

\section{The Esseen inequality}\label{app:esseenInequality}
	For completeness, in this section we present a proof of Esseen's inequality (see Eq.~\eqref{eq:esseenInequality}). The presentation closely follows~\cite{esseen1945fourier}. A similar proof can also be found in \cite[Theorem 2.5.2]{Kolassa1997}.
	
	\begin{theorem}[Esseen's inequality]
		Let $F(y)$ be a real, non-decreasing function, and let $G(y)$ be a real function of bounded variation. Suppose $F(-\infty)=G(-\infty)=0$ and $F(\infty)=G(\infty)$. Moreover, denote by $\hat{f}(\omega)$ and $\hat{g}(\omega)$ their corresponding Fourier transform, and assume that there exist three positive finite constants $A$, $\Omega$ and $\varepsilon$ such that:
		\begin{enumerate}
			\item for all $y$, it holds that $|G'(y)|\leq A$;
			\item the following integral is bounded:
			\begin{align}
				\varepsilon := \int_{0}^{\Omega} \de \omega\; \frac{|\hat{f}(\omega)-\hat{g}(\omega)|}{|\omega|} \,.
			\end{align}
		\end{enumerate}
		Then, for every $k>1$ there exists a finite constant $c(k)$ (which only depends on $k$) such that:
		\begin{align}
			\Delta = \sup_y \,|F(y)- G(y)| \leq  \frac{c(k)A}{\Omega}+\norbra{\frac{k}{k-1}}\frac{\varepsilon}{\pi} \,.
		\end{align}
	\end{theorem}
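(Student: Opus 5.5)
The argument is the classical smoothing one (Esseen, Feller). Convolve with a band-limited kernel, and use Fourier inversion to bound the smoothed difference. Then compare the smoothed and unsmoothed suprema, using monotonicity of $F$ and the Lipschitz bound on $G$.

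\emph{Smoothing kernel.} Introduce the Fej\'er-type density
\begin{align}
v_\Omega(y) := \frac{1-\cos(\Omega y)}{\pi\Omega y^2},
\end{align}
whose Fourier transform is $\hat v_\Omega(\omega)=(1-|\omega|/\Omega)_+$, supported in $[-\Omega,\Omega]$. Set $\Delta_\Omega := \sup_y |(F-G)*v_\Omega(y)|$. The function $(F-G)*v_\Omega$ vanishes at $\pm\infty$, because $F(\pm\infty)=G(\pm\infty)$ and $F-G$ has bounded variation. Its derivative has Fourier transform $(\hat f-\hat g)\hat v_\Omega$. Integrating the inversion formula in $y$ gives
\begin{align}
(F-G)*v_\Omega(y)=\frac{1}{2\pi}\int_{-\Omega}^{\Omega}\frac{\hat f(\omega)-\hat g(\omega)}{-i\omega}e^{-i\omega y}\hat v_\Omega(\omega)\,\de\omega .
\end{align}
This is justified because $\hat f(0)=\hat g(0)$, so the integrand is integrable near $0$ by assumption 2. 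Using $|\hat v_\Omega|\le1$ and the symmetry $\hat f(-\omega)=\overline{\hat f(\omega)}$, we get $\Delta_\Omega\le \varepsilon/\pi$.

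\emph{From smoothed to unsmoothed.} Write $\Delta=\sup|F-G|$. Suppose $F(y_0)-G(y_0)$ comes close to $+\Delta$; the other sign is symmetric. For $|x|\le h$ with $h:=\Delta/(2A)$, monotonicity of $F$ and $|G'|\le A$ give
\begin{align}
F(y_0+h-x)-G(y_0+h-x)\ge \Delta - A(h-x) \ge \Delta/2 + Ax.
\end{align}
Split the convolution at $y_0+h$ into the regions $|x|\le h$ and $|x|>h$. On the first region the linear term $Ax$ integrates to zero by symmetry of $v_\Omega$. On the second region bound $|F-G|$ by $\Delta$. This yields
\begin{align}
\Delta_\Omega \ge \frac{\Delta}{2}(1-\delta)-\Delta\,\delta, \qquad \delta:=\int_{|x|>h}v_\Omega\le \frac{4}{\pi\Omega h}.
\end{align}
Rearranging gives $\Delta\le 2\Delta_\Omega+c\,A/\Omega$, which has the right shape but the wrong constant in front of $\varepsilon/\pi$.

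\emph{Sharpening the constant to $k/(k-1)$.} This is the main obstacle. The fix is to choose the shift $h$ proportional to $1/\Omega$ rather than to $\Delta$, so that the mass lost outside $[-h,h]$ is a small fraction $1/k$. Take $h=\kappa(k)/\Omega$ with $\kappa(k)$ large enough that $\int_{|x|>h}v_\Omega \le 1/(2k)$; this is possible since the tail of $v_\Omega$ is $O(1/(\Omega h))$. The split then gives
\begin{align}
\Delta_\Omega\ge \Delta\bigl(1-\tfrac1k\bigr) - A h.
\end{align}
Here the loss from the linear drift of $G$ over $[-h,h]$ is only $Ah$, and the odd part of that drift integrates to zero. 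Hence
\begin{align}
\Delta\le \frac{k}{k-1}\Bigl(\frac{\varepsilon}{\pi}+\frac{\kappa(k) A}{\Omega}\Bigr),
\end{align}
which is the claim with $c(k)=k\kappa(k)/(k-1)$.

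The delicate points are twofold. First, the sup may not be attained, so I will work with $\Delta-\epsilon'$ and let $\epsilon'\to0$. Second, I must check the one-sided monotonicity inequality carefully on both sides of $y_0+h$: on $x<h$ it uses $F$ non-decreasing, while the contribution with $|x|>h$ is bounded crudely by $\Delta$.
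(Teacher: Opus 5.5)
Your argument is correct. Its skeleton is the same as the paper's: smoothing with a kernel whose Fourier transform is supported in $[-\Omega,\Omega]$, the inversion bound $\Delta_\Omega\le\varepsilon/\pi$, and a lower bound on the smoothed supremum from monotonicity of $F$ and $|G'|\le A$. Where you differ is in how you obtain the factor $k/(k-1)$.

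The paper chooses a kernel $H$ (a rescaled square of the Fej\'er kernel) precisely so that its first moment $b=\int|y|H(y)\,\mathrm{d}y$ is finite, and it places the maximiser at $0$. It uses $F-G\ge\Delta-y$ only on the window $[0,\Delta]$, whose length is set by $\Delta$ itself, and pays for the linear drift through $b$. It must then split into two cases: $\Delta>\alpha(k)$, where the window carries enough kernel mass, and $\Delta\le\alpha(k)$, where the bound is trivial. This gives $c(k)=bk/(k-1)+\alpha(k)$ for $A=\Omega=1$, followed by a rescaling.

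You instead take a symmetric window of fixed width $2\kappa(k)/\Omega$ around $y_0+h$. You keep the lower bound $\Delta-A(h-x)$ even where it is negative, and let its odd part cancel by evenness of $v_\Omega$. This buys you several things:
\begin{itemize}
\item you can use the plain Fej\'er kernel, despite its divergent first moment;
\item no case distinction on the size of $\Delta$ is needed;
\item a non-attained supremum is handled via $\epsilon'$, whereas the paper assumes the maximum is attained;
\item the constant is explicit: $\kappa=8k/\pi$ follows from the tail bound $4/(\pi\kappa)\le 1/(2k)$, hence $c(k)=8k^2/(\pi(k-1))$.
\end{itemize}
Your intermediate Feller step with $h=\Delta/(2A)$ is superfluous, since the sharpened argument already covers $k=2$.
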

	We can apply this Theorem to prove Eq.~\eqref{eq:esseenInequality} since $F_{\widehat{H}}(y)$ is non-decreasing (being a cumulative function), $G(y)$ defined in Eq.~\eqref{eq:GaussianCumulative} is of bounded variation, and satisfies $|G'(y)|\leq (2\pi)^{-1/2}$. Then, setting $k=2$ and $C:= ((2\pi)^{-1/2}c(2))$ gives Eq.~\eqref{eq:esseenInequality}.
	
	\begin{proof}
		We begin by rewriting the difference $\hat{f}(\omega)-\hat{g}(\omega)$ as:
		\begin{align}
			\hat{f}(\omega)-\hat{g}(\omega) = \int_{-\infty}^\infty \de (F(y)-G(y))\; e^{i\omega y}  = -i\omega \int_{-\infty}^\infty \de y \;(F(y)-G(y)) e^{i\omega y}\,,
		\end{align}
		where in the last step we integrated by parts and used the fact that $F(-\infty)=G(-\infty)=0$ and $F(\infty)=G(\infty)$. Then, by Fourier inversion, we obtain that:
		\begin{align}
			F(y)-G(y)  =\frac{1}{2\pi} \int_{-\infty}^\infty \de\omega \;e^{-i\omega y}\, \frac{\hat{f}(\omega)-\hat{g}(\omega)}{-i\omega}\,.\label{eq:A4}
		\end{align}
		Then, if one could assume that the integrand in the right-hand side was absolutely integrable, the result would follow by setting $\Omega = \infty$. For this reason, the rest of the proof can be considered as an estimation of the error made by restricting Eq.~\eqref{eq:A4} to a finite interval.
		
		To this end, without loss of generality, consider $|F(y)-G(y)|$ to reach its maximum at zero (if this is not the case, one can traslate the origin to the point, which only affects the Fourier transform by a phase factor). Moreover, for the moment we assume that $A =\Omega= 1$. 
		Then, let us introduce a filter function $H(y)$ with Fourier transform $\hat{h}(\omega)$ satisfying the following conditions:
		\begin{enumerate}
			\item $H(y)$ and $\hat{h}(\omega)$ are real and non-negative;
			\item $\hat{h}(0)=1$ (that is, the integral of $H(y)$ is normalized);\label{it:2Ess}
			\item there exists a finite constant $b$ such that:\label{it:3Ess}
			\begin{align}
				\int_{-\infty}^\infty\de y\; |y| H(y) = b\,;
			\end{align}
			\item $\hat{h}(\omega)$ has support on $[-1,1]$ and it is bounded by $1$, i.e., $\hat{h}(\omega)\leq1$. \label{it:4Ess}
		\end{enumerate}
		We present here a choice of such a function, but the argument is independent of the specific choice. Define $\hat{k}(\omega)$ as:
		\begin{align}
			\hat{k}(\omega) := \begin{cases}
				1-|\omega|& {\text{if $\omega\in[-1,1]$}}\\
				0& {\text{otherwise}}
			\end{cases}\,,
		\end{align}
		and let $K(y)$ be its Fourier transform:
		\begin{align}
			K(y)=\frac{1}{2\pi}\int_{-\infty}^{\infty}\de \omega\;e^{-i\omega y} \hat{k}(\omega) = \frac{1}{2\pi} \norbra{\frac{\sin(y/2)}{(y/2)}}^2\,.
		\end{align}
		Then, we define:
		\begin{align}
			H(y) := \frac{2}{3\pi}\, K(y/2)^2\;;\qquad\qquad\qquad \hat{h}(\omega) =\frac{ \int_{-\infty}^{\infty}\de \tilde{\omega} \;\hat{k}(2\omega- \tilde{\omega})\hat{k}(\tilde{\omega})}{\int_{-\infty}^{\infty}\de \tilde{\omega} \;\hat{k}(\tilde{\omega})^2}\,,\label{eq:A8}
		\end{align}
		Due to the connection with $\hat{k}(\omega)$, one can verify the properties of $\hat{h}(\omega)$ by inspection, while the property~\ref{it:3Ess} of $H(y)$ follows from the fact that $(yK(y)^2)$ decays as $y^{-3}$ at infinity.
		
		Consider the convolution between $(F(y)-G(y) )$ and $H(y)$. This satisfies the property that, for arbitrary $x$:
		\begin{align}
			\int_{-\infty}^{\infty}\de y\; H(x-y)&(F(y)-G(y) )= \frac{1}{2\pi}\int_{-\infty}^{\infty}\de y \int_{-\infty}^\infty \de\omega  \;H(x-y)\,\norbra{e^{-i\omega y}\, \frac{\hat{f}(\omega)-\hat{g}(\omega)}{-i\omega}} =
			\\
			&\qquad\quad=  \frac{1}{2\pi}\int_{-\infty}^{\infty}\de \omega \;e^{-i\omega x}\,\hat{h}(\omega)\norbra{ \frac{\hat{f}(\omega)-\hat{g}(\omega)}{-i\omega}} = \frac{1}{2\pi}\int_{-1}^{1}\de \omega \;e^{-i\omega x}\,\hat{h}(\omega)\norbra{ \frac{\hat{f}(\omega)-\hat{g}(\omega)}{-i\omega}}\,,
		\end{align}
		where in the first step we used the identity in Eq.~\eqref{eq:A4}, then the fact that $\hat{h}(\omega)$ is the Fourier transform of $H(y)$, and finally the fact that $\hat{h}(\omega)$ has support only on $[-1,1]$ (see condition~\ref{it:4Ess}). The relation just obtained allows us to prove the upper-bound:
		\begin{align}
			\left|\int_{-\infty}^{\infty}\de y\; H(x-y)(F(y)-G(y) )\right| \leq  \frac{1}{2\pi}\int_{-1}^{1}\de \omega \;|\hat{h}(\omega)| \frac{\left|\hat{f}(\omega)-\hat{g}(\omega)\right|}{|\omega|} \leq \frac{1}{\pi}\int_{0}^{1}\de \omega \; \frac{\left|\hat{f}(\omega)-\hat{g}(\omega)\right|}{|\omega|} = \frac{\varepsilon}{\pi}\,,\label{eq:A11}
		\end{align}
		where in the second inequality we used the fact that $\hat{h}(\omega)\leq 1$ (condition~\ref{it:4Ess}), together with the fact that $\hat{f}(\omega)$ and $\hat{g}(\omega)$ are Fourier transform of real functions, which implies that $(\hat{f}(-\omega)-\hat{g}(-\omega))=\overline{(\hat{f}(\omega)-\hat{g}(\omega))}$.
		
		Suppose now that $F(0)\geq G(0)$ (the other case can be treated symmetrically). Then, by definition $F(0)-G(0)= \Delta$. Moreover, since $F(y)$ is non-decreasing and $G(y)$ has derivative bounded by $A=1$, in the interval $y\in[0,\Delta]$, it also holds that:
		\begin{align}
			F(y)-G(y)\geq \Delta - y\geq 0\,, \qquad\qquad{\text{for $y\in[0,\Delta]$}}\,.\label{eq:A12}
		\end{align}
		Then, we can also lower bound Eq.~\eqref{eq:A11} as:
		\begin{align}
			&\left|\norbra{\int_{-\infty}^{0}+\int^{\Delta}_{0}+\int_{\Delta}^{\infty}}\de y\; H(x-y)(F(y)-G(y) )\right|\geq
			\\
			&\qquad\qquad\geq \int^{\Delta}_{0}\de y\; H(x-y)(F(y)-G(y) ) - \norbra{\int_{-\infty}^{0}+\int_{\Delta}^{\infty}}\de y\; H(x-y)|F(y)-G(y)|\geq
			\\
			&\qquad\qquad\geq  \int^{\Delta}_{0}\de y\;H(x-y)(\Delta-y) - \norbra{\int_{-\infty}^{0}+\int_{\Delta}^{\infty}}\de y\; H(x-y)\Delta\,,\label{eq:A15}
		\end{align} 
		where in the last line we used the relation in Eq.~\eqref{eq:A12} together with the property $|F(y)-G(y)|\leq \Delta$. It should also be noticed that thanks to condition~\ref{it:2Ess}, we can rewrite Eq.~\eqref{eq:A15} as:
		\begin{align}
			\text{Eq.~\eqref{eq:A15}} &= \norbra{\int^{\Delta}_{0}\de y\;H(x-y)(2\Delta-y) }- \Delta = \norbra{\int^{\Delta-x}_{-x}\de y\;H(-y)(2\Delta -x-y) }- \Delta \geq
			\\
			&\geq \norbra{\int^{\Delta-x}_{-x}\de y\;H(-y)(2\Delta -x) }- \Delta - b\,,
		\end{align}
		where in the last step we used the assumption~\ref{it:3Ess}. At this point, introduce $m\in[0,1]$, such that $x= m\Delta$. Putting everything together, it holds that:
		\begin{align}
			\norbra{\int^{\Delta-x}_{-x}\de y\;H(-y)(2\Delta -x) } -\Delta = \Delta \norbra{\norbra{(2-m)\int^{\Delta (1-m)}_{-\Delta m}\de y\;H(-y)}-1}\leq \frac{\varepsilon}{\pi} +b\,.\label{eq:A18}
		\end{align}
		Let us focus in the expression of the integral in the second inequality. It should be noticed that thanks to the normalization of $H(y)$, for any $k>1$ we can find a $m(k)$ small enough and a $\alpha(k)$ large enough such that:
		\begin{align}
			\norbra{(2-m(k))\int^{\alpha(k) (1-m(k))}_{-\alpha(k) m(k)}\de y\;H(-y)}-1 = 1-\frac{1}{k} = \frac{k-1}{k}\,.
		\end{align}
		Then, for a fixed $k$, there are two possibilities: if $\Delta > \alpha(k)$, we have:
		\begin{align}
			\norbra{(2-m(k))\int^{\Delta (1-m(k))}_{-\Delta m(k)}\de y\;H(-y)}-1 \geq \norbra{(2-m(k))\int^{\alpha(k) (1-m(k))}_{-\alpha(k) m(k)}\de y\;H(-y)}-1 = \frac{1}{k}\,,
		\end{align}
		where we implicitly used the positivity of $H(y)$. Plugging this inequality in Eq.~\eqref{eq:A18}, and rearranging, we obtain:
		\begin{align}
			\Delta \norbra{\frac{k-1}{k}}\leq  \frac{\varepsilon}{\pi} +b \qquad\implies\qquad\Delta\leq \norbra{\frac{k}{k-1}}\norbra{\frac{\varepsilon}{\pi} +b}\,.
		\end{align}
		On the other hand, if $\Delta \leq \alpha(k)$, we can use a trivial upper-bound. In this way, we have:
		\begin{align}
			\Delta \leq \max\left\{ \norbra{\frac{k}{k-1}}\norbra{\frac{\varepsilon}{\pi} +b}, \,\alpha(k) \right\}\leq \norbra{\frac{k}{k-1}}\frac{\varepsilon}{\pi} +\norbra{b\norbra{\frac{k}{k-1}}+\alpha(k)}\,.\label{eq:A22}
		\end{align}
		This proves the claim for $A=\Omega = 1$ with $c(k):= \norbra{bk/(k-1)+\alpha(k)}$.
        
		In order to prove the general case, let us introduce the two rescaled functions:
		\begin{align}
			F_1(y):= \frac{\Omega}{A}F\norbra{\frac{y}{\Omega}}\,;\qquad\qquad\qquad	G_1(y):= \frac{\Omega}{A}G\norbra{\frac{y}{\Omega}}
		\end{align}
		with Fourier transform $\hat{f}_1(\omega)$ and $\hat{g}_1(\omega)$ which satisfy:
		\begin{align}
			\int_{0}^{1} \de \omega\; \frac{|\hat{f}_1(\omega)-\hat{g}_1(\omega)|}{|\omega|} = \frac{\Omega\,\varepsilon}{A}\,,
		\end{align}
		and $|G_1'(y)|\leq 1$. Then, we can apply Eq.~\eqref{eq:A22}, which gives us:
		\begin{align}
			|F_1(y)-G_1(y)| \leq c(k)+\frac{k\,\Omega\,\varepsilon}{A\pi}\qquad\implies\qquad|F(y)-G(y)| \leq \frac{c(k)A}{\Omega}+\frac{k\varepsilon}{\pi}\,.
		\end{align}
		This proves the claim for arbitrary $A$ and $\Omega$.
	\end{proof}


\section{Bounds on the individual error terms} \label{sec:bounds}

\subsection{Bounding $\eta_{1,j}(\omega,K)$}\label{app:eta1}
	Let us begin by rewriting the term inside of the parenthesis as:
	\begin{align}
		\xi_j^1(\omega) &= \norbra{e^{i\omega \widehat{H}_j^\ell(1)}R^{M}_{1,j}(\omega)-\idO} = \int_{0}^{\omega}\de\omega_1 \;e^{i\omega_1 \widehat{H}_j^\ell(1)}\norbra{i\,\widehat{H}_j^\ell(1) R^{M}_{1,j}(\omega_1) +(\partial_{\tilde{\omega}}R^{M}_{1,j}(\tilde{\omega}))\big|_{\tilde{\omega}=\omega_1}} = 
		\\
		&=i\omega \widehat{H}_j^\ell(1)+   \int_{0}^{\omega}\de\omega_1 \;\int_{0}^{\omega_1}\de\omega_2 \;\partial_{\tilde{\omega}}\norbra{e^{i\hat{\omega} \widehat{H}_j^\ell(1)}\norbra{i\,\widehat{H}_j^\ell(1) R^{M}_{1,j}(\hat{\omega}) +(\partial_{\tilde{\omega}}R^{M}_{1,j}(\tilde{\omega}))\big|_{\tilde{\omega}=\hat{\omega}}}}\big|_{\hat{\omega}=\omega_2}\,,
	\end{align}
	where we implicitly used the fact that $R^{M}_{1,j}(0)=\idO$ and $(\partial_{\tilde{\omega}}R^{M}_{1,j}(\tilde{\omega}))\big|_{\tilde{\omega}=0} =0$ (see Lemma~\ref{lemma:clusterExp}). Then, we
	can express $\eta_{1,j}(\omega)$ as:
	\begin{align}
		&\eta_{1,j}(\omega,K) =  \int_{0}^{\omega}\de\omega_1 \;\int_{0}^{\omega_1}\de\omega_2 \;\average{h_j\partial_{\tilde{\omega}}\norbra{e^{i\hat{\omega} \widehat{H}_j^\ell(1)}\norbra{i\,\widehat{H}_j^\ell(1) R^{M}_{1,j}(\hat{\omega}) +(\partial_{\tilde{\omega}}R^{M}_{1,j}(\tilde{\omega}))\big|_{\tilde{\omega}=\hat{\omega}}}}\big|_{\hat{\omega}=\omega_2}}{\rho} = 
		\\
		&=\int_{0}^{\omega}\de\omega_1 \int_{0}^{\omega_1}\de\omega_2 \average{h_j\norbra{e^{i\omega_2 \widehat{H}_j^\ell(1)}\norbra{- (\widehat{H}_j^\ell(1))^2 R^{M}_{1,j}(\hat{\omega}) +2i\,\widehat{H}_j^\ell(1) (\partial_{\tilde{\omega}}R^{M}_{1,j}(\tilde{\omega}))\big|_{\tilde{\omega}=\omega_2}+(\partial^{(2)}_{\tilde{\omega}}R^{M}_{1,j}(\tilde{\omega}))\big|_{\tilde{\omega}=\omega_2}}}}{\rho}\,.
	\end{align}
	This expression allows us to upper bound the absolute value of $\eta_{1,j}(\omega)$  as:
	\begin{align}
		|\eta_{1,j}(\omega)| \leq E\int_{0}^{\omega}\de\omega_1 \;\int_{0}^{\omega_1}\de\omega_2 \;\norbra{\|\widehat{H}_j^\ell(1)\|^2 \|R^{M}_{1,j}(\hat{\omega})\| +2\|\widehat{H}_j^\ell(1)\| \|(\partial_{\tilde{\omega}}R^{M}_{1,j}(\tilde{\omega}))\big|_{\tilde{\omega}=\omega_2}\|+\|(\partial^{(2)}_{\tilde{\omega}}R^{M}_{1,j}(\tilde{\omega}))\big|_{\tilde{\omega}=\omega_2}\|}\,.
	\end{align}
	Thanks to Lemma~\ref{lemma:clusterExp}, we can upper bound all the terms depending on $R^{M}_{1,j}({\omega})$. Moreover, using Eq.~\eqref{eq:dimDef}, we can estimate the norm of $\widehat{H}_j^\ell(1)$ as:
	\begin{align}
		\|\widehat{H}_j^\ell(1)\|\leq \frac{c_DE }{\sigma_H}\;\sum_{r=0}^{2R\ell}\,r^{D-1}\leq \frac{c_DE (2R\ell)^D }{\sigma_H}\,,
	\end{align}
	Putting everything together we obtain:
	\begin{align}
		|\eta_{1,j}(\omega)| &\leq  \frac{c_D^2(2R)^{2D}E^3\,\ell^D}{\sigma_H^2}\omega^2\norbra{\ell^{D}+2\, \norbra{\frac{\Gamma}{c_D (2R)^D}}\ell^{\frac{D}{2}}+2\norbra{\frac{\Gamma}{c_D (2R)^D}}^2} \leq
		\\
		&\leq\frac{c_D^2(2R)^{2D}E^3\,\ell^{2D}}{\sigma_H^2}\omega^2\norbra{1+\sqrt{2}\norbra{\frac{\Gamma}{\ell^{\frac{D}{2}}c_D (2R)^D}} }^2 \leq \frac{B_1\,E\,\ell^{2D}}{\sigma_H^2}(E\omega)^2\,,\label{eq:b1Def}
	\end{align}
	where we implicitly defined $B_1 := \norbra{c_D(2R)^{D}+\sqrt{2}\,\Gamma}^2$.
	
	\subsection{Bounding $\eta_{2,j}(\omega,K)$}\label{app:eta2}
	The supports of $\widehat{h}_i$ and $\widehat{h}_j$ are at least at a distance $d(i,j)-2R$. Moreover, using the fact that $\langle\widehat{h}_j\rangle_{\rho}=0$ and the decay of correlations (see Eq.~\eqref{eq:decayOfCorrelations}), it follows that:
	\begin{align}
		|\eta_{2,j}(\omega,K)| &=\omega\,\left|\average{\widehat{h}_jz_j^\ell(1)}{\rho}\right|\leq\frac{\omega}{\sigma_H}\,\sum_{\substack{i\in\mathcal{X}
				\\
				d(i,j)> 2 R \ell }}\;  \left|\average{\widehat{h}_j\widehat{h}_i}{\rho}\right|\leq
		\\
		&\leq \frac{\omega}{\sigma_H}\,\sum_{\substack{i\in\mathcal{X}
				\\
				d(i,j)> 2 R \ell }}\;  R\,\alpha(d(i,j)-2R)E^2 \leq	\frac{c_D\, E^2R}{\sigma_H}\,\norbra{\sum_{r=2 R \ell+1}^\infty\; \alpha(r-2R)\, r^{D-1}}\,\omega\,,\label{eq:93}
	\end{align}
	where we used the fact that $|\supp(\widehat{h}_i)|\leq R$, and implicitly assumed that $2R\ell-2R>0$ (that is $\ell>1$).
	
	\subsection{Bounding $\eta_{3,j}(\omega,K)$}\label{app:eta3}
	In order to bound $\eta_{3,j}(\omega,K)$, let us first focus on averages of the type $\average{\widehat{h}_j\, \xi_j^1(\omega)\dots\xi_j^m(\omega)}{\rho}$. First, repeating the step of taking a derivative and integrating again, we obtain:
	\begin{align}
		\xi_j^k(\omega) = \norbra{e^{i\omega \widehat{H}_j^\ell(k)}R^{M}_{k,j}(\omega)-\idO} = \int_{0}^{\omega}\de\omega_1 \;e^{i\omega_1 \widehat{H}_j^\ell(k)}\norbra{i\,\widehat{H}_j^\ell(k) R^{M}_{k,j}(\omega_1) +(\partial_{\tilde{\omega}}R^{M}_{k,j}(\tilde{\omega}))\big|_{\tilde{\omega}=\omega_1}}\,.\label{eq:70}
	\end{align}
	Thanks to this expression, in the range $\in [0,\,\Omega_1]$ (and for $k\leq K$), we obtain:
	\begin{align}
		\|\xi_j^k(\omega) \| &\leq \int_{0}^{\omega}\de\omega_1 \;\norbra{\|\widehat{H}_j^\ell(k) R^{M}_{k,j}(\omega_1)\| +\|(\partial_{\tilde{\omega}}R^{M}_{k,j}(\tilde{\omega}))\big|_{\tilde{\omega}=\omega_1}\|} \leq 
		\\
		&\leq2\norbra{\frac{c_D  E}{\sigma_H}\sum_{r=2R\ell(k-1)}^{2R\ell k}\,r^{D-1}}\omega +  \norbra{\frac{2\,E\,\Gamma \ell^{\frac{D}{2}} k^{\frac{D-1}{2}}}{\sigma_H} }\omega\leq  \norbra{c_D(2R)^D + \frac{\Gamma}{\ell^{\frac{D}{2}}k^{\frac{D-1}{2}}}}\norbra{\frac{2E\,\ell^D\,k^{D-1}}{\sigma_H}}\omega\leq
		\\
		&\leq\norbra{\frac{B_2\,\ell^D\,k^{D-1}}{\sigma_H}}(E\omega)\label{eq:73}
	\end{align}
	where in the second inequality we applied Lemma~\ref{lemma:clusterExp}, then used Eq.~\eqref{eq:dimDef} to upper bound the norm of $\widehat{H}_j^\ell(k)$, we introduced the constant $B_2 := 2\norbra{c_D(2R)^D + \Gamma}$. Applying this relation to each of the $\xi_j^k(\omega)$, we obtain:
	\begin{align}
		\left|\average{\widehat{h}_j\, \xi_j^1(\omega)\dots\xi_j^m(\omega)}{\rho}\right| &\leq E (m!)^{D-1}\norbra{\frac{B_2\,\ell^D}{\sigma_H}}^m(E\omega)^m\,,\label{eq:B26}
	\end{align}
	Similarly, it can also be shown that:
	\begin{align}
		&\left|\average{\idO+\gamma_j^{k}(\omega)}{\rho}\right| = \left|\average{e^{-i \omega(\widehat{H}-z^{\ell}_j(k))}S_{k,j}^M(\omega)}{\rho}\right|\leq \|S_{k,j}^M(\omega)\|\leq 2\,,
	\end{align}
	and that:
	\begin{align}
		&\left|\average{\gamma_j^{2}(\omega)}{\rho}\right|\leq \int_{0}^{\omega}\de\omega_1 \;\norbra{\|\widehat{H}_j^\ell(1)S_{2,j}^M(\omega)\|+\|(\partial_{\tilde{\omega}}S^{M}_{2,j}(\tilde{\omega}))\big|_{\tilde{\omega}=\omega_1}\|}\leq 
		\\
		&\;\;\leq \frac{2c_D\norbra{\sum_{r=0}^{2R\ell}\,r^{D-1}}E}{\sigma_H}\,\omega + \norbra{\frac{2\,E\,\Gamma \ell^{\frac{D}{2}} 2^{\frac{D-1}{2}}}{\sigma_H} }\omega\leq  \norbra{2c_D (2R)^{D}+\frac{2^{\frac{D+1}{2}}\Gamma}{\ell^{\frac{D}{2}}}}\norbra{\frac{\ell^D }{\sigma_H}}(E\omega)\leq\norbra{\frac{B_3\ell^D }{\sigma_H}}(E\omega)\,,
	\end{align}
	where we set $B_3:= 2\norbra{c_D (2R)^{D}+2^{\frac{D-1}{2}}\Gamma}\geq B_2$.
	Then, wrapping everything up we obtain:
	\begin{align}
		|\eta_{3,j}(\omega,K)|&\leq \left|\average{\widehat{h}_j\xi_j^1(\omega)}{\rho}\average{\gamma_j^{2}(\omega)}{\rho}\right|+\sum_{m=2}^{K-1}\;\left|\average{\widehat{h}_j\, \xi_j^1(\omega)\dots\xi_j^m(\omega)}{\rho}\average{\idO+\gamma_j^{m+1}(\omega)}{\rho}\right|\leq
		\\
		&\leq E \norbra{\norbra{\frac{B_3^2\,\ell^{2D} }{\sigma_H^2}}(E\omega)^2 + 2\sum_{m=2}^{K-1}\; (m!)^{D-1}\norbra{\frac{B_2\,\ell^D}{\sigma_H}}^m(E\omega)^m} \leq
		\\
		&\leq E \norbra{\frac{B_3^2\,\ell^{2D} }{\sigma_H^2}}(E\omega)^2\norbra{1 + 2\sum_{m=0}^{K-3}\; ((m+2)!)^{D-1}\norbra{\frac{B_2\,\ell^D}{\sigma_H}}^m(E\omega)^{m}} \leq
		\\
		&\leq E \norbra{\frac{B_3^2\,\ell^{2D} }{\sigma_H^2}}(E\omega)^2\norbra{1 + 2\sum_{m=0}^{K-3}\; (m+2)^{2(D-1)}\norbra{\frac{B_2\,\ell^D K^{D-1}}{\sigma_H}}^m(E\omega)^{m}} \,.
	\end{align}
	where we repeatedly made use of the fact that $B_3\geq B_2$, and in the last step we applied the inequality $(m+2)! = (m+2)(m+1)(m!) \leq (m+2)^2 K^m$ (since $m\leq K-3$).
	
	\subsection{Bound on $\eta(\omega, K) $}\label{app:boundEta}
	
	In this section, we put together the results from Sec.~\ref{app:eta1},~\ref{app:eta2} and~\ref{app:eta3}. First, it should be noticed that multiplying the results of Sec.~\ref{app:eta1} and~\ref{app:eta3}  by $N/\sigma_H$ gives an estimate over the whole space. On the other hand, the term in Sec.~\ref{app:eta2} needs a bit more care. In this case by summing over $j$ we obtain: 
	\begin{align}
		|\eta_{2}(\omega,K)|&\leq\frac{1}{\sigma_H}\sum_{j\in\mathcal{X}}\;|\eta_{2,j}(\omega,K)|\leq\frac{c_DN\, E^2R}{\sigma_H^2}\,\norbra{\sum_{r=2 R \ell+1}^\infty\;  \alpha(r-2R)\, r^{D-1}}\,\omega =
		\\
		&=\frac{N\, E^2R(2R)^{D-1}}{\sigma_H^2}\,\norbra{c_D\sum_{r=2R(\ell-1)+1}^\infty\;  \alpha(r)\, \frac{(r+2R)^{D-1}}{(2R)^{D-1}}}\,\omega \leq\frac{ C_\alpha(2R(\ell-1))N\, E(2R)^{D}}{\sigma_H^2}\,(E\omega) \,,
	\end{align}
	where we used the definition of the function $C_\alpha(\ell)$ from Eq.~\eqref{eq:decayOfCorrConst}. Then, putting everything together, we can upper bound $\eta(\omega, K) $ as:
	\begin{align}
		|\eta(\omega, K)| &\leq\frac{N\,E}{\sigma_H^2}
		\Bigg(C_\alpha(2R(\ell-1))\, (2R)^{D}\,(E\omega)+\nonumber
		\\
		&\qquad\qquad\qquad\;\;\;\;\;+\frac{\ell^{2D}}{\sigma_H}(E\omega)^2\norbra{B_1+B_3^2\norbra{1 + 2\sum_{m=0}^{K-3}\; (m+2)^{2(D-1)}\norbra{\frac{B_2\,\ell^D K^{D-1}}{\sigma_H}}^m(E\omega)^{m}}}\Bigg)\,.
	\end{align}
	At this point, let us assume that $\norbra{\frac{B_2\,\ell^D K^{D-1}E\omega}{\sigma_H}}\leq \frac{1}{2}$, which holds for $\omega \leq \Omega_2 $, where $\Omega_2:=\frac{\sigma_H}{2B_2\,E\ell^D K^{D-1}}$. Then, we can upper bound the sum as:
	\begin{align}
		\sum_{m=0}^{K-3}\; (m+2)^{2(D-1)}\norbra{\frac{B_2\,\ell^D K^{D-1}}{\sigma_H}}^m(E\omega)^{m}\leq \sum_{m=0}^{K-3}\; \frac{(m+2)^{2(D-1)}}{2^m}\leq\sum_{m=0}^{\infty}\; \frac{(m+2)^{2(D-1)}}{2^m}:= s_{2(D-1)}\,,
	\end{align}
	where we introduced the finite constant $s_D := \sum_{m=0}^{\infty}\; \frac{(m+2)^{D}}{2^m}$. Then, we finally have:
	\begin{align}
		|\eta(\omega, K)| &\leq (E^2 (2R)^{D})\norbra{\frac{N\,C_\alpha(2R(\ell-1))}{\sigma_H^2}}
		\omega +(E^3\norbra{B_1+B_3^2(1+2s_{2(D-1)})})\norbra{\frac{N\,\ell^{2D}}{\sigma_H^3}}\omega^2\leq
		\\
		&\leq\norbra{\norbra{\frac{(2R)^{D}}{c_0}}C_\alpha(2R(\ell-1))}
		\omega +\norbra{\frac{B_4 }{c_0^{3/2}}\norbra{\frac{\ell^{2D}}{ \sqrt{N}}}} \,\omega^2	\,,
	\end{align}
	where we introduced the constant $B_4 := \norbra{B_1+B_3^2(1+2s_{2(D-1)})}$, and in the second line we applied the assumption on the variance from Eq.~\eqref{eq:varianceScaling}. This proves Eq.~\eqref{eq:etaConstants}.

	\subsection{Bounding $\nu_{1,j}(\omega,K)$}\label{app:nu1}
	We can bound $\nu_{1,j}(\omega,K)$ through a direct application of Eq.~\eqref{eq:B26}. Indeed, we have:
	\begin{align}
		|\nu_{1,j}(\omega,K)|=\left|\average{\widehat{h}_j\, \xi_j^0(\omega)\dots\xi_j^K(\omega)e^{i\omega z_j^\ell(K)}}{\rho}\right| \leq   E (K!)^{D-1}\norbra{\frac{B_2\,\ell^D}{\sigma_H}}^K(E\omega)^K
	\end{align}
	where we implicitly used the fact that $\|e^{i\omega z_j^\ell(K)}\| = 1$.
	
	\subsection{Bounding $\nu_{2,j}^{nc}(\omega,K)$ and $\nu^{nc}_{4,j}(\omega,K)$}\label{app:nu2}
	The corrections $\nu_{2,j}^{nc}(\omega,K)$ and $\nu^{nc}_{4,j}(\omega,K)$ arise from the non-commutativity of $\widehat{H}$. Indeed, both $\Xi_j^m(\omega)$ and $\Gamma_j^{m}(\omega)$ are zero for commuting models. Let us rewrite them here for concreteness:
	\begin{align}
		\Xi_j^m(\omega) :&= e^{i\omega (z_j^\ell(m-1)-z_j^\ell(m))}\norbra{e^{-i\omega (z_j^\ell(m-1)-z_j^\ell(m))}e^{i\omega z_j^\ell(m-1)}e^{-i\omega z_j^\ell(m) }-R^M_{m,j}(\omega)}e^{i\omega z_j^\ell(m) } = 
		\\
		&=e^{i\omega (z_j^\ell(m-1)-z_j^\ell(m))}\norbra{R^\infty_{m,j}(\omega)-R^M_{m,j}(\omega)}e^{i\omega z_j^\ell(m) }\,,
	\end{align}
	and:
	\begin{align}
		\Gamma_j^{m}(\omega) :&= e^{-i \omega(\widehat{H}-z^{\ell}_j(m))} \norbra{e^{i \omega(\widehat{H}-z^{\ell}_j(m))} e^{i \omega z^{\ell}_j(m)}e^{-i \omega\widehat{H}}-S^M_{m,j}(\omega)}e^{i \omega\widehat{H}}=
		\\
		&= e^{-i \omega(\widehat{H}-z^{\ell}_j(m))} \norbra{S^\infty_{m,j}(\omega)-S^M_{m,j}(\omega)}e^{i \omega\widehat{H}}\,,
	\end{align}
	where we used the definition in Eq.~\eqref{eq:RSinfty}. Then, since for commuting models $R^\infty_{m,j}(\omega)=R^M_{m,j}(\omega)=S^M_{m,j}(\omega)=S^\infty_{m,j}(\omega)=\idO$, this proves that these corrections are zero in this case.
	
	On the other hand, in the general case we can apply Lemma~\ref{lemma:clusterExp}, to obtain:
	\begin{align}
		&|\nu_{2,j}^{nc}(\omega,K)| \leq \sum_{m=1}^{K-1}\;\left|\average{\widehat{h}_j\, \xi_j^0(\omega)\dots\xi_j^{m-1}(\omega)\Xi_j^{m}(\omega)}{\rho}\right|\leq
		\\
		&\leq E \norbra{\frac{2\Gamma^{2} \ell^{D}(E\omega)^{2}}{\sigma_H^{2}}}\;\sum_{m=0}^{K-2}\norbra{ (m!)^{D-1}\norbra{\frac{B_2\,\ell^D}{\sigma_H}}^m(E\omega)^m}(m+1)^{D-1}\norbra{\frac{\Gamma \ell^{\frac{D}{2}} (m+1)^{\frac{D-1}{2}}}{\sigma_H} (E\omega)}^{M-1}\leq
		\\
		&\leq E \norbra{\frac{2\Gamma^{2} \ell^{D}(E\omega)^{2}}{\sigma_H^{2}}}\frac{1}{(2\ell^{\frac{D}{2}})^{M-1}}\norbra{\sum_{m=0}^{K-2}\frac{(m+1)^{D-1}}{2^{m}}}\leq  \norbra{\frac{E\,\Gamma^{2} s_{D-1}}{2^{M}\ell^{\frac{D}{2}(M-3)}\sigma_H^{2}}}(E\omega)^{2}\,.\label{eq:94}
	\end{align}
	where we used the bound in Eq.~\eqref{eq:B26}, together with the estimate in Lemma~\ref{lemma:clusterExp}, and in the last line we assumed $\omega \leq \Omega_2 $, where $\Omega_2:=\frac{\sigma_H}{2B_2\,E\ell^D K^{D-1}}$ to show that:
	\begin{align}
		\norbra{\frac{\Gamma \ell^{\frac{D}{2}} (m+1)^{\frac{D-1}{2}}}{ \sigma_H} (E\omega)}\leq\norbra{\frac{\Gamma \ell^{D} K^{\frac{D-1}{2}}}{\ell^{\frac{D}{2}} \sigma_H} (E\omega)}\leq \frac{\Gamma}{2B_2\ell^{\frac{D}{2}}}\leq\frac{1}{2\ell^{\frac{D}{2}}}\,,
	\end{align}
	where we also used the fact that $\Gamma<B_2$. Similarly, we can also bound:
	\begin{align}
		|\nu^{nc}_{4,j}(\omega&,K)| \leq \sum_{m=1}^{K-1}\;\left|\average{\widehat{h}_j\, \xi_j^0(\omega)\dots\xi_j^m(\omega)}{\rho}\average{\Gamma_j^{m+1}(\omega)}{\rho}\right|\leq
		\\
		&\leq E \norbra{\frac{2\Gamma^{2} \ell^{D}(E\omega)^{2}}{\sigma_H^{2}}}\;\sum_{m=1}^{K-1}\norbra{ (m!)^{D-1}\norbra{\frac{B_2\,\ell^D}{\sigma_H}}^m(E\omega)^m}(m+1)^{D-1}\norbra{\frac{\Gamma \ell^{\frac{D}{2}} (m+1)^{\frac{D-1}{2}}}{\sigma_H} (E\omega)}^{M-1}\leq
		\\
		&\leq\norbra{\frac{E\,\Gamma^{2} s_{D-1}}{2^{M}\ell^{\frac{D}{2}(M-3)}\sigma_H^{2}}}(E\omega)^{2}\,,
	\end{align}
	where the final result is obtained by carrying out the same steps as in Eq.~\eqref{eq:94}
	\subsection{Bounding $\nu_{3,j}(\omega,K)$}\label{app:nu3}
	First, we recall that:
	\begin{align}
		&{\rm supp}(z_j^\ell(m))\subseteq \{i| d(j,i)\in [2R (\ell m-1),N]\}\,;
		\\
		&{\rm supp}(\xi_j^m(\omega)) \subseteq{\rm supp}(R_{m,j}^M(\omega) ) \subseteq\{i| d(j,i)\in [2R (\ell (m-1)-M),2R (\ell\, m+M)]\}\,,
	\end{align}
	so that the distance between the support of $(\widehat{h}_j\, \xi_j^0(\omega)\dots\xi_j^m(\omega))$ and $e^{i\omega z_j^\ell(m+1)}$ is at least $2R(\ell - M-1)$, which we assume to be greater than $1$. Then, we can directly use Eq.~\eqref{eq:decayOfCorrelations}, together with the bound in Eq.~\eqref{eq:B26}, to obtain:
	\begin{align}
		&\left|\average{\widehat{h}_j\, \xi_j^0(\omega)\dots\xi_j^m(\omega)\norbra{e^{i\omega z_j^\ell(m+1)}-\average{e^{i\omega z_j^\ell(m+1)}}{\rho}}}{\rho}\right|\leq
		\\
		&\;\;\;\;\;\;\;\leq E (m!)^{D-1}\norbra{\frac{B_2\,\ell^D}{\sigma_H}}^m(E\omega)^m\min\{|\supp((\widehat{h}_j\, \xi_j^0(\omega)\dots\xi_j^m(\omega)))|,|\supp(e^{i\omega z_j^\ell(m+1)})|\}\alpha(2R(\ell - M-1))\,.
	\end{align}
	Plugging this in the definition of $\nu_{3,j}(\omega, K)$ we get:
	\begin{align}
		|\nu_{3,j}(\omega, K)|\leq  E\sum_{m=0}^{K-1}\;\norbra{  (m!)^{D-1}\norbra{\frac{B_2\,\ell^D}{\sigma_H}}^m(E\omega)^m(2R \ell\, (m+1))}\alpha(2R(\ell - M-1))\,,\label{eq:123}
	\end{align}
	where we used the fact that $|\supp((\widehat{h}_j\, \xi_j^0(\omega)\dots\xi_j^m(\omega)))|\leq|\supp(e^{i\omega z_j^\ell(m+1)})|$ and $|\supp((\widehat{h}_j\, \xi_j^0(\omega)\dots\xi_j^m(\omega)))|\leq 2R (\ell\, m+M) \leq 2R (\ell\, m+\ell)$.
	
	It should be noticed that, assuming the decay rate in Eq.~\eqref{eq:decayOfCorrelations2}, we can bring down the estimate in Eq.~\eqref{eq:123} to:
	\begin{align}
		|\nu_{3,j}(\omega, K)|\leq  E\sum_{m=0}^{K-1}\;\norbra{  (m!)^{D-1}\norbra{\frac{B_2\,\ell^D}{\sigma_H}}^m(E\omega)^m}\alpha(2R(\ell - M-1))\,.
	\end{align}
	
	\subsection{Bounding $\nu_{5,j}(\omega,K)$}\label{app:nu5}
	In this case, let us directly bound $\nu_{5}(\omega,K) := \frac{i}{\sigma_H}\,\sum_{j\in\mathcal{X}}\nu_{5,j}(\omega,K)$, that is:
	\begin{align}
		&|\nu_{5}(\omega,K) |=\frac{1}{\sigma_H}\sum_{m=1}^{K-1}\;\left|\average{\norbra{\sum_{j\in\mathcal{X}}\average{\widehat{h}_j\, \xi_j^0(\omega)\dots\xi_j^m(\omega)}{\rho}\norbra{\gamma_j^{m+1}(\omega) - \average{\gamma_j^{m+1}(\omega) }{\rho}}} e^{i \omega\widehat{H}}}{\rho}\right|\leq
		\\
		&\quad\leq\frac{1}{\sigma_H}\sum_{m=1}^{K-1}\;\left|\average{\norbra{\sum_{j\in\mathcal{X}}\average{\widehat{h}_j\, \xi_j^0(\omega)\dots\xi_j^m(\omega)}{\rho}\norbra{\gamma_j^{m+1}(\omega) - \average{\gamma_j^{m+1}(\omega) }{\rho}}}^{\!\!2\,} }{\rho}\right|^{1/2}\leq
		\\
		&\quad\leq \sum_{m=1}^{K-1}\norbra{ \frac{E}{\sigma_H} (m!)^{D-1}\norbra{\frac{B_2\,\ell^DE\omega}{\sigma_H}}^m }\!\!\norbra{\sum_{i,j\in\mathcal{X}}\;\average{(\gamma_i^{m+1}(\omega) - \average{\gamma_i^{m+1}(\omega) }{\rho}) (\gamma_j^{m+1}(\omega) - \average{\gamma_j^{m+1}(\omega) }{\rho})}{\rho}}^{1/2},\label{eq:appB50}
	\end{align}
	where in the first step we applied Cauchy-Schwarz inequality, and then used the estimate in Eq.~\eqref{eq:B26}. 
	
	Let us focus now on the term inside the last parenthesis. We can rewrite it as:
	\begin{align}
		\gamma_j^{m}(\omega) - &\average{\gamma_j^{m}(\omega) }{\rho} = \norbra{\norbra{e^{-i \omega(\widehat{H}-z^{\ell}_j(m))} S^M_{m,j}(\omega)- \idO} -\average{e^{-i \omega(\widehat{H}-z^{\ell}_j(m))} S^M_{m,j}(\omega)- \idO}{\rho}}=
		\\
		&=\norbra{e^{-i \omega(\widehat{H}-z^{\ell}_j(m))} S^M_{m,j}(\omega) - \average{e^{-i \omega(\widehat{H}-z^{\ell}_j(m))} S^M_{m,j}(\omega)}{\rho}} 
	\end{align}
	It should be noticed that $\widehat{H}-z^{\ell}_j(m) = \sum_{r=1}^{m}\,\widehat{H}_j^\ell(r)$ has support on sites $i$ satisfying $d(i,j)\leq 2R(\ell m+1)$, so the same holds for $e^{-i \omega(\widehat{H}-z^{\ell}_j(m))}$ as well. On the other hand, $S^M_{m,j}(\omega)$ has support on sites for which $d(i,j)\leq 2R(\ell m+M)$. This means that the sites in the support of $\gamma_j^{m}(\omega)$ satisfy $d(i,j)\leq 2R(\ell m+M)$. This also implies that the distance $d({\rm supp}(\gamma_i^{m}(\omega)),{\rm supp}(\gamma_j^{m}(\omega)))\geq d(i,j) - (4R(\ell m+M) +2R)$. 
	
	To keep the notation compact, let us introduce the constant $r_{m} := (4R(\ell m+M) +2R$. It should be noticed that $\supp(\gamma_i^{m}(\omega))\leq r_m$. Then, assuming $d(i,j)\geq r_{m+1}+1$, we can apply Eq.~\eqref{eq:decayOfCorrelations} to obtain:
	\begin{align}
		&\left|\average{(\gamma_i^{m+1}(\omega) - \average{\gamma_i^{m+1}(\omega) }{\rho}) (\gamma_j^{m+1}(\omega) - \average{\gamma_j^{m+1}(\omega) }{\rho})}{\rho}\right|\leq \nonumber
		\\
		&\qquad\qquad\qquad\qquad\qquad\leq \norbra{\frac{B_2\,\ell^D\,(m+1)^{D-1}E\omega}{\sigma_H}}^2r_{m+1}\,\alpha(d(i,j)-r_{m+1})\,.
	\end{align}
	where we used $\|\gamma_j^{m}(\omega) \|\leq \norbra{\frac{B_2\,\ell^D\,(m+1)^{D-1}E\omega}{\sigma_H}}$, which can be proved with methods akin to Eq.~\eqref{eq:70} and Eq.~\eqref{eq:73}. 
	
	Then, we can estimate the sum in Eq.~\eqref{eq:appB50} as:
	\begin{align}
		&\sum_{i,j\in\mathcal{X}}\;\left|\average{(\gamma_i^{m+1}(\omega) - \average{\gamma_i^{m+1}(\omega) }{\rho}) (\gamma_j^{m+1}(\omega) - \average{\gamma_j^{m+1}(\omega) }{\rho})}{\rho}\right|=
		\\
		&\qquad=\sum_{i\in\mathcal{X}}\norbra{\sum_{\substack{j\in\mathcal{X}\\
					d(i,j)<r_{m+1}+1}}+\sum_{\substack{j\in\mathcal{X}\\
					d(i,j)\geq r_{m+1}+1}}}\;\left|\average{(\gamma_i^{m+1}(\omega) - \average{\gamma_i^{m+1}(\omega) }{\rho}) (\gamma_j^{m+1}(\omega) - \average{\gamma_j^{m+1}(\omega) }{\rho})}{\rho}\right|\leq
		\\
		&\qquad\leq   \norbra{\frac{B_2\,\ell^D\,(m+1)^{D-1}E\omega}{\sigma_H}}^2\,\norbra{4(r_{m+1}+1)^{D} + c_D\sum_{r=r_{m+1}+1} \, r_{m+1}\alpha(r-r_{m+1})\, r^{D-1}}N\,,
	\end{align}
	where in the last step we implicitly used the fact that $\|\gamma_j^{m}(\omega) -\average{\gamma_j^{m}(\omega)}{\rho}\|\leq 2\|\gamma_j^{m}(\omega)\|$, and the estimate in Eq.~\eqref{eq:dimDef}.
	For conciseness, let us denote the term inside of the parenthesis by $c_{r_{m+1}}[\alpha]$. This can be estimated as:
	\begin{align}
		c_{r_{m+1}}[\alpha]  &= 4(r_{m+1}+1)^{D}\norbra{1 + \frac{c_D}{4}\sum_{r=1} \, \alpha(r)\, \frac{(r+r_{m+1})^{D-1}}{(r_{m+1}+1)^{D-1}}} \leq
		\\
		&\leq 4(r_{m+1}+1)^{D}\norbra{1 + c_D\sum_{r=1} \, \alpha(r)\, {(r+1)^{D-1}}} = 4(1+C_\alpha(1))\,(r_{m+1}+1)^{D}\,,
	\end{align}
	where we used the definition of the function $C_\alpha(\ell)$ from Eq.~\eqref{eq:decayOfCorrConst}. Moreover, under the assumption $M\leq \ell$, we can upper-bound $r_m\leq 6R\ell (m+1)$.
	Wrapping everything up, we finally obtain:
	\begin{align}
		|\nu_{5}(\omega,K) | \leq \sum_{m=1}^{K-1}\norbra{ \frac{E}{\sigma_H} (m!)^{D-1}\norbra{\frac{2B_2\,\ell^DE\omega}{\sigma_H}}^m }\!\!\norbra{\frac{6B_2 R^{\frac{D}{2}}\,\ell^{\frac{3D}{2}}\,(m+2)^{2(D-1)}E\omega}{\sigma_H}}\sqrt{(1+C_\alpha(1)) N}\,.\label{eq:113}
	\end{align}
	
	\subsection{Bound on $\nu(\omega, K)$}\label{app:boundNu}
	In this section we put together the results from Sec.~\ref{app:nu1},~\ref{app:nu2},~\ref{app:nu3} and~\ref{app:nu5}. Before doing so, though, they need some additional manipulations. Throughout the section we are assuming that $\omega \leq \Omega_2$.
	Let us begin with an estimate of $\nu_{1}(\omega,K) := \frac{i}{\sigma_H}\sum_{j} \nu_{1,j}(\omega,K)$ :
	\begin{align}
		|\nu_{1}(\omega,K)| &\leq \frac{1}{\sigma_H}\sum_{j\in\mathcal{X}}|\nu_{1,j}(\omega,K)| \leq \frac{1}{\sigma_H}\sum_{j\in\mathcal{X}}  E\, (K!)^{D-1}\norbra{\frac{B_2\,\ell^D}{\sigma_H}}^K(E\omega)^K\leq
		\\
		&\leq (E^2B_2)\norbra{\frac{\,\ell^DK^{D-1}N}{2^{K-1}\sigma_H^2}}\,\omega\,.
	\end{align}		
	Similarly, we also have that:
	\begin{align}
		|\nu_{3}(\omega,K)|&\leq \frac{1}{\sigma_H}\sum_{j\in\mathcal{X}}|\nu_{3,j}(\omega,K)| \leq \frac{2R \ell EN\alpha(2R(\ell - M-1))}{\sigma_H}\sum_{m=0}^{K-1}\;\norbra{  (m+1)(m!)^{D-1}\norbra{\frac{B_2\,\ell^D}{\sigma_H}}^m(E\omega)^m }\leq
		\\
		& \leq \frac{2R \ell EN\alpha(2R(\ell - M-1))}{\sigma_H}\sum_{m=0}^{\infty}\,\frac{(m+1)}{2^m} = (4E)\norbra{\frac{2R \ell N\alpha(2R(\ell - M-1))}{\sigma_H}}\;\label{eq:140}
	\end{align}
	Finally, we can manipulate Eq.~\eqref{eq:113} to give:
	\begin{align}
		|\nu_{5}(\omega,K) | &\leq  (12B_2^2R^{\frac{D}{2}}\sqrt{(1+C_\alpha(1))}\,)\frac{\ell^{\frac{5D}{2}}E^3\sqrt{N}\omega^2}{\sigma_H^3}\sum_{m=1}^{K-1}\norbra{ \frac{(m+2)^{3(D-1)}}{2^{m}}}\,\leq
		\\
		&\leq (B_5E^3)\frac{\ell^{\frac{5D}{2}}\sqrt{N}}{\sigma_H^3}\omega^2\,,
	\end{align}
	where we introduced the constant $B_5:=(12B_2^2R^{\frac{D}{2}}\sqrt{(1+C_\alpha(1))}\,s_{3(D-1)})$.

	Finally, we can deduce directly from Sec.~\ref{app:nu2}, that:
	\begin{align}
		|\nu_{2}^{nc}(\omega,K)| \leq E^3B_6\norbra{\frac{N}{2^{M}\ell^{\frac{D}{2}(M-3)}\sigma_H^{3}}}\omega^2 \,;
		\qquad\;\;
		|\nu_{4}^{nc}(\omega,K)| \leq E^3B_6\norbra{\frac{N}{2^{M}\ell^{\frac{D}{2}(M-3)}\sigma_H^{3}}}\omega^2 \,.
	\end{align}
	where we introduced the constant $B_6$  defined as:
	\begin{align}
		B_6:=
		\begin{cases}
			0 & {\text{for commuting models}}\\
			(\Gamma^{2} s_{D-1}) & {\text{otherwise}}
		\end{cases}\,.
	\end{align}
	
	Putting everything together, and applying the estimate on $\sigma_H$ from Eq.~\eqref{eq:varianceScaling}, gives us:
	\begin{align}
		|\nu(\omega, K)| \leq& \frac{4}{\sqrt{c_0}}\norbra{\sqrt{N}\,2R\ell \alpha(2R(\ell - M-1))}+\frac{B_2}{c_0}\norbra{\frac{\,\ell^DK^{D-1}}{2^{K-1}}}\omega + \nonumber
		\\
		&\qquad\qquad\qquad\qquad\qquad\qquad\qquad+ \frac{ 1}{c_0^{\frac{3}{2}}}\norbra{B_5\norbra{\frac{\ell^{2D+\frac{D}{2}}}{N}}+B_6\norbra{\frac{1}{2^{M}\ell^{\frac{D}{2}(M-3)}\sqrt{N}}}}\omega^2\,.\label{eq:145}
	\end{align}
	This proves Eq.~\eqref{eq:nuConstants}. It should be noticed that for a decay of the type in Eq.~\eqref{eq:decayOfCorrelations2}, we can repeat the same steps in Eq.~\eqref{eq:140} to get:
	\begin{align}
		|\nu_{3}(\omega,K)|\leq (2E)\norbra{\frac{N\alpha(2R(\ell - M-1))}{\sigma_H}}\,.
	\end{align}
	This implies that Eq.~\eqref{eq:145} changes only in the first term, where now we have $\tilde{c}_3 := c_3/(4R\ell)$.


	\section{Properties of $R_{m,j}^M(\omega) $ and $S_{m,j}^M(\omega) $} \label{app:clusterExp}
	
	In this section we give a general treatment of the properties of the function:
	\begin{align}
		\zeta(\omega) =e^{i\omega (X+Y)}e^{-i\omega X}e^{-i\omega Y}\,,\label{eq:d1}
	\end{align}
	before moving on to specializing to $R_{m,j}^M(\omega) $ and $S_{m,j}^M(\omega) $. In particular, let us introduce the $M$-truncated Taylor expansion of $\zeta(\omega)$:
	\begin{align}
		\zeta^M(\omega) := \sum_{n=0}^M \; \frac{\omega^n}{n!} \zeta^{(n)}(0) \,,\label{eq:zetaM}
	\end{align}
	where $\zeta^{(n)}(\omega)$ denotes the $n$-th derivative of $\zeta(\omega)$. We prove that:
	\begin{lemma}\label{lemma:clusterExp2}
		Let $X = \sum_{i\in\mathcal{X}} X_i$ and $Y = \sum_{i\in\mathcal{X}} Y_i$ be two operators, such that each of $X_i,\, Y_i$ is $R$-local, and  $\|X_i\|,\,\|Y_i\|\leq B$. Define $\zeta(\omega)$ and $\zeta^M(\omega)$ as in Eq.~\eqref{eq:d1} and Eq.~\eqref{eq:zetaM}. Let $\mathcal{S}$ be the support of $[X,Y]$, and let $C_{X\cap Y} := |\mathcal{S}|$. Then, the following properties hold:
		\begin{enumerate}
			\item $\zeta(0)=\idO$ and $\zeta^{(1)}(0)=0$;\label{it:zeros}
			\item the support of $\zeta^{M}(\omega)$ satisfies:\label{it:supp}
			\begin{align}
				{\rm supp}(\zeta^{M}(\omega))\subseteq\{i|\,d(i, \mathcal{S})\leq 2R(M-2)\}\,;
			\end{align}
			\item the norm of the $n$-th derivative of $\zeta(\omega)$ is upper bounded by:\label{it:norm}
			\begin{align}
				\|\zeta^{(n)}(0)\|\leq ((2\lambda)\sqrt{\gamma})^{n} (n)!\,.\label{eq:l4e31}
			\end{align}
			where $\lambda:= (2c_D (2R)^D B)$ and $\gamma := \max\{1, \frac{  C_{X\cap Y}}{(4c_D (2R)^D )}\}$. This implies that:
			\begin{align}
				\|\zeta^M(\omega)\|\leq  \frac{1-((2\lambda)  \sqrt{\gamma}|\omega|)^{M+1}}{1-((2\lambda)  \sqrt{\gamma}|\omega|)}\,;\label{eq:l4e32}
			\end{align}
		\end{enumerate}
		Now, assume $\omega\in[0, \,\frac{1}{2(2\lambda)\sqrt{\gamma}})$. Then, it also holds that:
		\begin{align}
			\|\zeta^M(\omega)\|\leq 2\norbra{1 - \frac{1}{2^{M+1}}}\,;
			\;\;\;\;\;\;\;
			\|\zeta(\omega)-\zeta^M(\omega)\|\leq 2((2\lambda)  \sqrt{\gamma}\,\omega)^{M+1}\,; 
			\;\;\;\;\;\;\;
			\|\partial^{(k)}_\omega \zeta^M(\omega)\|\leq 2\,(k!) \,((2\lambda)  \sqrt{\gamma}\,)^k\,.\label{it:convNorm}
		\end{align}
	\end{lemma}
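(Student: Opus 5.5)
The plan is to write $\zeta$ as a product of Heisenberg-type conjugations and expand it in nested commutators. The key identity is
\begin{align}
\zeta(\omega)=e^{i\omega(X+Y)}e^{-i\omega X}e^{-i\omega Y},\qquad \zeta'(\omega)=e^{i\omega(X+Y)}\,i\norbra{Y-e^{-i\omega X}Ye^{i\omega X}}e^{-i\omega X}e^{-i\omega Y}.
\end{align}
Item~\ref{it:zeros} follows at once: $\zeta(0)=\idO$, and the bracket vanishes at $\omega=0$, so $\zeta^{(1)}(0)=0$.

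For the higher derivatives we expand every exponential in its Taylor series. This gives $\zeta^{(n)}(0)=\sum_{a+b+c=n}\frac{n!}{a!b!c!}\, i^a(-i)^{b+c}(X+Y)^aX^bY^c$. We reorganise this by the Baker--Campbell--Hausdorff/Dyson structure into a sum of nested commutators of the local terms $X_i,Y_i$. Every monomial that survives must contain at least one commutator $[X_i,Y_k]$ with nonzero value, which forces $\supp([X_i,Y_k])$ into $\mathcal{S}$. All other local operators in the word must be linked to it through a connected chain of overlapping supports, since otherwise they commute out and cancel between the three exponentials.

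This cancellation is the cluster expansion. I would formalise it by writing $\zeta^{(n)}(0)$ as a sum over ordered sequences $(i_1,\dots,i_n)$ of local terms whose support graph is connected and touches $\mathcal{S}$. Disconnected clusters cancel because the generating function factorises, exactly as in the standard proof that cumulants are connected.

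Item~\ref{it:supp} then follows from connectivity. A connected cluster of $n\le M$ $R$-local terms reaching $\mathcal{S}$ has diameter at most $2Rn$ from $\mathcal{S}$. Refining by noting that one commutator pair is already inside $\mathcal{S}$ and the first derivative vanishes gives the radius $2R(M-2)$.

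For item~\ref{it:norm} we count connected clusters. We start from a site in $\mathcal{S}$, giving $C_{X\cap Y}$ choices. Each subsequent term must overlap the union of the previous ones, giving at most $c_D(2R)^D\cdot(\text{current size})$ choices. Each term carries norm $B$, and there are two operator types $X,Y$. Each commutator contributes a factor $2$. The growth-factor product behaves like $n!$, which produces $(2\lambda)^n n!$ with $\lambda=2c_D(2R)^DB$. The factor $\sqrt{\gamma}^{\,n}$ absorbs the initial $C_{X\cap Y}$: for $n\ge2$ we have $C_{X\cap Y}\le \gamma^{n/2}(4c_D(2R)^D)^{n/2}$, and the spare powers of $4c_D(2R)^D$ are absorbed into $\lambda^n$.

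Equation~\eqref{eq:l4e32} follows by summing the geometric series in the truncated Taylor expansion. The final bounds are then immediate for $x:=2\lambda\sqrt\gamma\,\omega<1/2$:
\begin{itemize}
\item $\|\zeta^M\|\le\sum_{n\le M}2^{-n}=2(1-2^{-M-1})$;
\item $\zeta-\zeta^M$ is the Taylor tail, bounded by $\sum_{n>M}x^n\le2x^{M+1}$, using analyticity of $\zeta$ and the same coefficient bound;
\item $\|\partial^k\zeta^M\|\le\sum_n \frac{n!}{(n-k)!}(2\lambda\sqrt\gamma)^n\omega^{n-k}\le k!(2\lambda\sqrt\gamma)^k\sum_m\binom{m+k}{k}x^m$. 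Here I would tighten to the stated constant $2$ by using the $x<1/2$ margin, or accept a slightly weaker constant if needed.
\end{itemize}

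The main obstacle is the combinatorial step: showing rigorously that only connected words survive in $\zeta^{(n)}(0)$ and counting them with factorial rather than worse growth. I would first try the explicit formula $\zeta^{(n)}(0)=\sum$ over iterated commutators $\mathrm{ad}_{X+Y}^{a}$ and $\mathrm{ad}_X^{b}$ applied to $Y$, via $\frac{d}{d\omega}$ of conjugations. Each $\mathrm{ad}$ acting on an operator supported on a set $A$ only picks terms overlapping $A$, with at most $c_D(2R)^D|A|$ of them. This gives the lattice-animal count directly, and I would use it to check the precise $\sqrt\gamma$ bookkeeping.
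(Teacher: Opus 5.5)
\textbf{The gap: the surviving words are not connected.} You claim that disconnected clusters cancel ``because the generating function factorises, exactly as in the standard proof that cumulants are connected.'' This is false for $\zeta$ itself. Factorisation gives connectedness for $\log\zeta$; for $\zeta$ it gives products. If the $A$- and $B$-parts of $X,Y$ mutually commute, then $\zeta=\zeta_A\zeta_B$, and the product of the two nontrivial Taylor tails survives.

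Concretely, take $X=\sum_{a=1}^p\sigma^x_a$ and $Y=\sum_{a=1}^p\sigma^y_a$ on $p$ distinct sites. Then $C_{X\cap Y}=p$ and $\zeta=\prod_a\zeta_a$ with $\zeta_a(\omega)=\idO-i\omega^2\sigma^z_a+O(\omega^3)$. Hence $\zeta^{(4)}(0)$ contains $-24\sum_{a<b}\sigma^z_a\sigma^z_b$, of norm $24\binom{p}{2}$, while all remaining contributions are $O(p)$.

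What actually holds is weaker: $\zeta^{(n)}(0)$ is a sum of products of blocks, each connected and each anchored on $\mathcal{S}$. Your count, which starts from a single anchor in $\mathcal{S}$, therefore yields a bound linear in $C_{X\cap Y}$, and the example violates it. This also means your explanation of $\sqrt\gamma^{\,n}$ as ``absorbing the initial $C_{X\cap Y}$'' misses the mechanism. The factor $\gamma^{n/2}$ is there because an order-$n$ term can contain up to $n/2$ independently anchored blocks. Each block costs a factor of order $C_{X\cap Y}$ and uses at least two local operators, since $\zeta^{(1)}(0)=0$.

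\textbf{How the paper handles the products.} You could repair your cluster sum by summing over families of anchored blocks, but you would then have to control the multinomial weights of interleaving them. The paper does this bookkeeping with a recursion.

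\begin{itemize}
\item It writes $\zeta'(\omega)=\zeta(\omega)\Theta(\omega)$, up to an irrelevant sign, where $\Theta(\omega)=e^{i\omega Y}\theta(\omega)e^{-i\omega Y}$ and $\theta(\omega)=\int_0^\omega\de\omega_1\,e^{i\omega_1X}[X,Y]e^{-i\omega_1X}$. Thus $\Theta(0)=0$, and each Taylor coefficient of $\Theta$ is a single nested commutator $[Y,[X,Y]_m]_{k-m}$ anchored on $\mathcal{S}$.
\item Leibniz then expresses $\zeta^{(n+1)}(0)$ as a sum over $1\le m\le k\le n$ of $\binom{n}{n-k,\,k-m,\,m}\,\zeta^{(n-k)}(0)\,[Y,[X,Y]_m]_{k-m}$, up to phases. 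This is exactly the product-of-blocks structure.
\item Your lattice-animal count is the right tool for a single block: $\|[Y,[X,Y]_m]_{k-m}\|\lesssim B\,C_{X\cap Y}\,k!\,\lambda^k$.
\item With $a_n:=\|\zeta^{(n)}(0)\|/(n!(2\lambda)^n)$, the paper obtains $a_{n+1}\le\frac{\gamma}{n+1}\sum_{j=0}^{n-1}a_j$ with $a_0=1$, $a_1=0$. Induction then gives $a_n\le\gamma^{n/2}$.
\item The same recursion, with induction on the lower-order factor, gives the support claim. A block with $k+1$ local operators lies within $2R(k-1)$ of $\mathcal{S}$, and $k\le M-1$ in $\zeta^{(M)}(0)$.
\end{itemize}

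\textbf{The derivative constant.} You cannot tighten the derivative bound to the constant $2$ using the margin $x<1/2$. Indeed $\sum_{m\ge0}\binom{m+k}{k}x^m=(1-x)^{-(k+1)}$, which tends to $2^{k+1}$ as $x\to 1/2$. The paper's last step uses $\sum_{n\ge k}\binom{n}{k}2^{-(n-k)}=2^{k+1}$, so it too only justifies $2^{k+1}k!\,(2\lambda\sqrt\gamma)^k$.
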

	The next sections are dedicated to the proof of this Lemma. First, in Sec.~\ref{subsec:expTaylor} we give a recursive expression of $\zeta^{(n)}(\omega)$; then, in Sec.~\ref{subsec:support} we characterize its support, and in Sec.~\ref{subsec:opnorm} its operator norm; finally, in Sec.~\ref{subsec:lemma3} we specialize the results obtained to $R_{m,j}^M(\omega) $ and $S_{m,j}^M(\omega)$ so to prove Lemma~\ref{lemma:clusterExp}.
	
	\subsection{Recursive expression of the derivatives of $\zeta(\omega)$}\label{subsec:expTaylor}
	Our starting point is the derivative of Eq.~\eqref{eq:d1}, which takes the form:
	\begin{align}
		\zeta^{(1)}(\omega) &= -i \, e^{i\omega (X+Y)}\norbra{Y e^{-i\omega X}e^{-i\omega Y} - e^{-i\omega X} Y e^{-i\omega Y}} = -i\zeta(\omega) \,e^{i\omega Y}\norbra{e^{i\omega X}Ye^{-i\omega X} -Y}e^{-i\omega Y}\,
		\\
		&=\zeta(\omega) \,e^{i\omega Y}\norbra{\int_{0}^{\omega}\de\omega_1\;e^{i\omega_1 X}[X,Y]e^{-i\omega_1 X}}e^{-i\omega Y} = \zeta(\omega) e^{i\omega Y} \theta(\omega) e^{-i\omega Y}  = \zeta(\omega) \,\Theta(\omega)\,,\label{eq:d3}
	\end{align}
	where we implicitly defined $\theta(\omega)$ to be the integral inside of the parenthesis, and $\Theta(\omega):= e^{i\omega Y} \theta(\omega) e^{-i\omega Y}$. The latter satisfies:
	\begin{align}
		\Theta^{(1)}(\omega) = i [Y, e^{i\omega Y}\theta(\omega) e^{-i\omega Y}] +  e^{i\omega Y} \theta^{(1)}(\omega) e^{-i\omega Y}\,.
	\end{align}
	This implies the recursive formula for the $k$-th derivative:
	\begin{align}
		\Theta^{(k)}(\omega) = \sum_{m=0}^{k}\binom{k}{m} \, (i)^{k-m} [Y, (e^{i\omega Y}\theta^{(m)}(\omega) e^{-i\omega Y})]_{k-m}\,,\label{eq:d5}
	\end{align}
	where we introduced the notation $[A,B]_n$ for the nested commutator, inductively defined as:
	\begin{align}
		\begin{cases}
			[A,B]_{0} = B\\
			[A,B]_{n} =[A,[A,B]_{n-1}]
		\end{cases}\,.
	\end{align}
	From its explicit expression in Eq.~\eqref{eq:d3} it can be noticed that $\theta(0)=0$, which implies condition~\ref{it:zeros}, i.e., $\zeta^{(1)}(0) =0$, Moreover, for $m\geq1$ one has:
	\begin{align}
		\theta^{(m)}(\omega) = (i)^{m-1}e^{i\omega X} [X,Y]_{m}e^{-i\omega X}\,.
	\end{align}
	Then, by deriving Eq.~\eqref{eq:d3}, we get an expression for the $(n+1)$-th derivative of $\zeta(\omega)$:
	\begin{align}
		\zeta^{(n+1)}(\omega) &= \sum_{k=0}^{n}\binom{n}{k} \, \zeta^{(n-k)}(\omega) \Theta^{(k)}(\omega) = 
		\\
		&=\zeta^{(n)}\Theta(\omega)+\sum_{k=1}^{n}\binom{n}{k} (i)^{k} \zeta^{(n-k)}(\omega)   [Y, (e^{i\omega Y}\theta(\omega) e^{-i\omega Y})]_{k}+\nonumber
		\\
		&\qquad\qquad\qquad\qquad+\sum_{k=1}^{n}\sum_{m=1}^k\binom{n}{k} \binom{k}{m} \, (i)^{k-1}\zeta^{(n-k)} (\omega)  [Y, (e^{i\omega Y}e^{i\omega X} [X,Y]_{m}e^{-i\omega X} e^{-i\omega Y})]_{k-m}\,,
	\end{align}
	This relation allows us to recursively compute $\zeta^{(n+1)}(\omega)$ in terms of derivatives of at most order $n$. In particular, the $(n+1)$-th derivative evaluated in $0$ is given by:
	\begin{align}
		\zeta^{(n+1)}(0) &=  \sum_{k=1}^{n}\sum_{m=1}^k\binom{n}{n-k, k-m, m} \, (i)^{k-1}\zeta^{(n-k)} (0) \; [Y, [X,Y]_{m}]_{k-m}\,.\label{eq:d11}
	\end{align}
	where we used the fact that $\Theta(0)=\theta(0)=0$. This expression is the main result of this section.

    
	\subsection{Support of $\zeta^{M}(\omega)$}\label{subsec:support}
	In this section, we use Eq.~\eqref{eq:d11} to characterize the support of $\zeta^{M}(\omega)$ as:
	\begin{align}
		{\rm supp}(\zeta^{M}(\omega))\subseteq\{i|\,d(i, \mathcal{S})\leq 2R(M-2)\}\,;\label{eq:suppProof}
	\end{align}
	where $\mathcal{S}$ is the support of $[X,\,Y]$. This is done by induction. First, it should be noticed that for $M=2$, we have:
	\begin{align}
		\zeta^{M=2}(\omega) = \idO + \frac{\omega^2}{2} \zeta^{(2)}(0) = \idO + \frac{\omega^2}{2}\,[X,Y]\,,
	\end{align}
	which proves the induction basis. Then, suppose that Eq.~\eqref{eq:suppProof} holds for some $M$. Then, for $M+1$, since $\zeta^{M+1}(\omega)= \zeta^{M}(\omega)+ \frac{\omega^{M+1}}{(M+1)!} \zeta^{(M+1)}(0)$  we have:
	\begin{align}
		{\rm supp}(\zeta^{M+1}(\omega))\subseteq{\rm supp}(\zeta^{M}(\omega))\cup {\rm supp}(\zeta^{(M+1)}(\omega))\,.
	\end{align}
	Then, the claim would follow if we could prove that ${\rm supp}(\zeta^{(M+1)}(\omega))\subseteq\{i|\,d(i, \mathcal{S})\leq 2(M-1)R\}$. To this end, it is sufficient to show that:
	\begin{align}
		{\rm supp}\norbra{[Y, [X,Y]_{m}]_{n-m} } \subseteq \{i |\, d(i, \mathcal{S})\leq 2R(n-1)\}\,.\label{eq:nestCommSupp}
	\end{align}
	Then, using Eq.~\eqref{eq:d11} gives the claim in~\ref{it:supp}, since:
	\begin{align}
		\zeta^{(M+1)}(0) &=  \sum_{k=1}^{M}\sum_{m=1}^k\binom{n}{n-k, k-m, m} \, (i)^{k-1}\zeta^{(n-k)} (0) \; [Y, [X,Y]_{m}]_{k-m}
	\end{align}
	contains nested commutator of order at most $M$.
	
	It remains to prove Eq.~\eqref{eq:nestCommSupp}, which we do by induction. In particular, let $Z=\sum_{i\in \mathcal{X}} Z_i$ and $H=\sum_{i\in \mathcal{X}} H_i$ be a sum of $R$-local observables. Then, it holds that:
	\begin{align}
		{\rm supp}\norbra{[Z,H]_{n}}  \subseteq \{i |\, d(i, {\rm supp}\norbra{H})\leq 2R\,n\}\,.\label{eq:zhInd}
	\end{align}
	Indeed, for $n=0$ this is trivial. Then, suppose it holds for some integer $n$, and we want to prove it for $n+1$. It should be noticed that we can rewrite $[Z,H]_{n+1}$ as:
	\begin{align}
		[Z,H]_{n+1}  = [Z,[Z,H]_{n}] = \sum_{\substack{i\in\mathcal{X}\\d(i,\mathcal{S}_{n}^H)\leq 2R}} [Z_i, [Z,H]_{n}]\,,
	\end{align}
	where we introduced the notation $\mathcal{S}_{n}^H$ for the support of $[Z,H]_{n}$, and we used the fact that the only non-commuting terms can arise when $Z_i$ is at most at distance $2R$ from $\mathcal{S}_{n}^H$ (each of the terms in $\mathcal{S}_{n}^H$ acts on a ball of at most radius $R$ around its site, and the same holds for $Z_i$). Then, we obtain:
	\begin{align}
		{\rm supp}\norbra{[Z,H]_{n+1}}  \subseteq \{i |\, d(i, \mathcal{S}_{n}^H)\leq 2\,R\}\subseteq \{i |\, d(i, {\rm supp}\norbra{H})\leq 2R\,(n+1)\}\,,
	\end{align}
	where in the last step we applied the triangle inequality. This proves Eq.~\eqref{eq:zhInd}. With this result in hand, it directly follows that:
	\begin{align}
		{\rm supp}\norbra{[X,Y]_{m}} =  {\rm supp}\norbra{[X,[X,Y]]_{m-1}}\subseteq \{i |\, d(i, \mathcal{S})\leq 2R(m-1)\}\,,
	\end{align}
	and that:
	\begin{align}
		{\rm supp}\norbra{[Y, [X,Y]_{m}]_{n-m} } \subseteq \{i |\, d(i,{\rm supp}\norbra{[X,Y]_{m}})\leq 2R(n-m)\}\subseteq \{i |\, d(i, \mathcal{S})\leq 2R(n-1)\}\,,
	\end{align}
	where, once again, we used the triangle inequality to prove the last implication. 
	
	\subsection{Operator norm}\label{subsec:opnorm}
	The main ingredient of this section is the following estimate:
	\begin{align}
		\|[Y,[X,Y]_{m}]_{n-m}\| \leq 2BC_{X\cap Y} \, (n! \,\lambda^n)\,.\label{eq:uppBNormComm}
	\end{align}
	where $\lambda := (2c_D (2R)^D B)$. This allows us to recursively estimate the norm of $\zeta^{(n)}(0)$. Indeed, starting from Eq.~\eqref{eq:d11}, we have:
	\begin{align}
		\|\zeta^{(n+1)}(0)\| &\leq \sum_{k=1}^{n}\sum_{m=1}^k\binom{n}{n-k, k-m, m} \,\|\zeta^{(n-k)} (0)\| \;\| [Y, [X,Y]_{m}]_{k-m}\| \leq
		\\
		&\leq2BC_{X\cap Y}\,\sum_{k=1}^{n}\frac{n!}{(n-k)!} \|\zeta^{(n-k)} (0)\| \,\lambda^k\norbra{\sum_{m=1}^k\binom{k}{m} }\leq
		\\
		&\leq ( (n+1)!(2\lambda)^{n+1})\,\norbra{\frac{ BC_{X\cap Y}}{(2\lambda) (n+1)}\sum_{k=1}^{n}\frac{\|\zeta^{(n-k)} (0)\| }{((n-k)! (2\lambda)^{n-k})} }\,,\label{eq:d26}
	\end{align}
	which can be rearranged as:
	\begin{align}
		\frac{\|\zeta^{(n+1)}(0)\|}{( (n+1)!(2\lambda)^{n+1})}\leq \frac{ BC_{X\cap Y}}{(2\lambda) (n+1)}\norbra{\sum_{k=1}^{n}\frac{\|\zeta^{(n-k)} (0)\| }{((n-k)! (2\lambda)^{n-k})}}\,.\label{eq:intStep}
	\end{align}
	Let us introduce the constant $\gamma := \max\{1, \frac{  C_{X\cap Y}}{(4c_D (2R)^D )}\}$. The expression just obtained allows us to derive the upper bound:
	\begin{align}
		\|\zeta^{(n)}(0)\|\leq ((2\lambda) \sqrt{\gamma})^{n} (n!)\,.
	\end{align}
	First, notice that $\|\zeta^{(0)}(0)\|=1$ and $\|\zeta^{(1)}(0)\|=0$. Then, by induction on Eq.~\eqref{eq:intStep} we have:
	\begin{align}
		\frac{\|\zeta^{(n+1)}(0)\| }{( (n+1)!(2\lambda)^{n+1})}  \leq \frac{\gamma}{n+1} \sum_{k=1}^{n}\frac{\|\zeta^{(n-k)} (0)\| }{( (n-k)!(2\lambda)^{n-k})} \leq \frac{\gamma}{n+1} \norbra{1+\sum_{k=1}^{n-2} \,\gamma^{\frac{n-k}{2}}} \leq \frac{\gamma^{1+\frac{n-1}{2}}  (n-1)}{n+1}\leq \gamma^{\frac{n+1}{2}}\,,
	\end{align}
	where we implicitly used the fact that $1\leq \gamma\implies \gamma^{\frac{n-k}{2}}\leq \gamma^{\frac{n-1}{2}}$. This proves Eq.~\eqref{eq:l4e31}. Then, as a direct consequence we also have:
	\begin{align}
		\|\zeta^M(\omega)\|\leq \sum_{n=0}^M \; \frac{|\omega|^n}{n!} \|\zeta^{(n)}(0) \| \leq  \sum_{n=0}^M \; ((2\lambda)  \sqrt{\gamma}|\omega|)^n  \leq \frac{1-((2\lambda)  \sqrt{\gamma}|\omega|)^{M+1}}{1-((2\lambda)  \sqrt{\gamma}|\omega|)}\,,\label{eq:162in}
	\end{align}
	which proves Eq.~\eqref{eq:l4e32}.
	
	At this point, let us assume that $\omega\in[0, \,\frac{1}{2(2\lambda)\, \sqrt{\gamma}})$. In this regime, $((2\lambda)  \sqrt{\gamma}\omega)\leq \frac{1}{2}$. Then, it follows from Eq.~\eqref{eq:162in} that:
	\begin{align}
		\|\zeta^M(\omega)\| \leq 2\norbra{1-\frac{1}{2^{M+1}}}\,,
	\end{align}
	and that:
	\begin{align}
		\|\zeta(\omega)-\zeta^M(\omega)\|\leq\sum_{n=M+1}^\infty \; \frac{|\omega|^n}{n!} \|\zeta^{(n)}(0) \| \leq  \frac{((2\lambda)  \sqrt{\gamma}\omega)^{M+1}}{1-((2\lambda)  \sqrt{\gamma}\,\omega)}\leq 2((2\lambda)  \sqrt{\gamma}\,\omega)^{M+1}\,.
	\end{align}
	This proves the first two items of Eq.~\eqref{it:convNorm}. Moreover, consider now the $k$-th derivative of $\zeta^M(\omega)$. Then, in the same range, it also holds that:
	\begin{align}
		\|\partial_{\omega}^{(k)} \,(\zeta^M(\omega) )\|\leq \sum_{n=k}^M \frac{|\omega|^{n-k}}{(n-k)!} \|\zeta^{(n)}(0) \|\leq k! ((2\lambda)  \sqrt{\gamma}\,)^k\,\sum_{n=k}^M \binom{n}{k}  ((2\lambda)  \sqrt{\gamma}\omega)^{n-k}\leq 2\,(k!) \,((2\lambda)  \sqrt{\gamma}\,)^k\,
	\end{align}
	where in the last step we used the identity $\sum_{n=k}^\infty \binom{n}{k} (1/2)^{n-k} = 2^{k+1}$. This completes the proof of Lemma~\ref{lemma:clusterExp2}.
	
	Then, it remains to prove Eq.~\eqref{eq:uppBNormComm}. 
	As in the previous section, let $Z^1=\sum_{i\in \mathcal{X}} Z_i$, $Z^2=\sum_{i\in \mathcal{X}} Z_i$ and $H=\sum_{i\in \mathcal{X}} H_i$ to be three arbitrary operators that are sum of $R$-local observables, which satisfy $\|Z^{1}_i\|,\,\|Z^{2}_i\|\leq B$ and $\|H_i\|\leq \tilde{B}$. Then, we can give a first estimate:
	\begin{align}
		\|[Z^1,[Z^2,H]_{\tilde{m}}]_{\tilde{n}}\| \leq \sum_{i_0,\dots, i_n} \|[Z^1_{i_{\tilde{m}+\tilde{n}}}, [Z^1_{i_{\tilde{m}+\tilde{n}-1}},\dots[Z^1_{i_{\tilde{m}+1}},[Z^2_{i_{\tilde{m}}},\dots,[Z^2_{i_1},H_{i_0}]]]\| \leq  2^{n+1} \tilde{B} \,B^{\tilde{m}+\tilde{n}}\#_{\tilde{m}+\tilde{n}}\,.
	\end{align}
	where we denote by $\#_n$ is the number of non-zero commutators of local terms at order $n$. This number can be estimated by using the locality of $Z^1$, $Z^2$ and $H$. Let $\tilde{Z}_{i_k}$ be either $Z^1_{i_k}$ or $Z^2_{i_k}$, depending on whether $k> \tilde{m}$ or not. Then, it should be noticed that the terms in the sum are non-zero if and only if for all $k\in\{1,\dots,\tilde{m}+\tilde{n}\}$, the support of $\tilde{Z}_{i_k}$ intersects the support of $[\tilde{Z}_{i_{k-1}},\dots [\tilde{Z}_{i_1},H_{i_0}]]$. This allows to estimate the number of terms in the sum by reducing it to a combinatorial problem. Indeed, suppose that we have a sequence of indices $\{i_{k},\dots, i_{0} \}$ corresponding to a non-zero commutator. This corresponds to at most $(k+1)$ different points over the lattice. When adding an additional term, say $i_{k+1}$, in order for the total commutator not to be zero, $i_{k+1}$ must be inside one of the $2R$-ball around any of the (at most) $k+1$ different points already present in the lattice. Then, this means that:
	\begin{align}
		\#_{k+1} \leq (k+1) c_D (2R)^D \, \#_{k}\,\qquad\implies\qquad\#_{k+1}\leq ((k+1)!) (c_D (2R)^D)^k \, \#_{0}\,.
	\end{align}
	Finally, it should be noticed that $\#_{0}$ simply corresponds to the number of sites in the support of $H$. This prove that:
	\begin{align}
		\|[Z,H]_{n}\| \leq  2\tilde{B} (({\tilde{m}+\tilde{n}})!) (2c_D (2R)^DB)^{\tilde{m}+\tilde{n}}  |{\rm supp}\norbra{H}|\,.\label{eq:173}
	\end{align}
	Applying this expression to Eq.~\eqref{eq:uppBNormComm}, we obtain:
	\begin{align}
		\|[Y,[X,Y]_{m}]_{n-m}\| &=\|[Y,[X,([X,Y])]_{m-1}]_{n-m}\|\leq \|[X,Y]\| ((n-1)!)\lambda^{n-1} |{\rm supp}\norbra{[X,Y]}|\leq
		\\
		&\leq 2BC_{X\cap Y} ((n-1)!)\lambda^{n}\leq2BC_{X\cap Y} (n!)\lambda^{n}\,.
	\end{align}
	This concludes the proof.
	
	\subsection{Proof of Lemma~\ref{lemma:clusterExp}}\label{subsec:lemma3}
	We are now ready to show how Lemma~\ref{lemma:clusterExp} is a direct consequence of Lemma~\ref{lemma:clusterExp2}. Indeed, both  $R_{m,j}^{\infty}(\omega) $ and $S_{m,j}^{\infty}(\omega) $ are of the form in Eq.~\eqref{eq:d1}, since:
	\begin{align}
		R^{\infty}_{m,j}(\omega):=e^{-i\omega (z_j^\ell(m-1)-z_j^\ell(m))}e^{i\omega z_j^\ell(m-1) }e^{-i\omega z_j^\ell(m) }\,;\qquad S^{\infty}_{m,j}(\omega):=e^{i \omega(\widehat{H}-z^{\ell}_j(m))} e^{i \omega z^{\ell}_j(m)}e^{-i \omega\widehat{H}}\,.
	\end{align}
	Then, let us define $X_R:=-z_j^\ell(m-1)$ and $Y_R = z_j^\ell(m)$, together with $X_S:=-z_j^\ell(m)$ and $Y_S = \widehat{H}$. In order to apply Lemma~\ref{lemma:clusterExp2}, we first need to match the various constants. First, notice that in both cases we can set $B = \frac{E}{\sigma_H}$. Then, define the two sets $\mathcal{S}_R := {\rm supp}([X_R, Y_R])$ and $\mathcal{S}_S := {\rm supp}([X_S, Y_S])$.
	In the first case we have:
	\begin{align}
		\mathcal{S}_R&=  {\rm supp}([\widehat{H}_j^\ell(m)+z_j^\ell(m),  z_j^\ell(m)]) ={\rm supp}([\widehat{H}_j^\ell(m),  z_j^\ell(m)]) \subseteq
		\\
		&\qquad\qquad\qquad\subseteq\{i|\,\exists k,\, d(j,k)\in [2R\ell (m-1),2R\ell m], \,d(i, k)\leq 2R\}\,,\label{eq:178} 
	\end{align} 
	where we used the fact that $z_j^\ell(m-1)=\widehat{H}_j^\ell(m)+z_j^\ell(m)$. In other words, the set in Eq.~\eqref{eq:178} comprises all the points that are at a distance at most $2R$ from the support of $\widehat{H}_j^\ell(m)$. This expression can be further simplified using the triangle inequality:
	\begin{align}
		\mathcal{S}_R&\subseteq\{i| d(j,i)\in [2R(\ell (m-1)-1),2R(\ell m+1)]\}\,,
	\end{align}
	Then, following Sec.~\ref{subsec:support}, we obtain:
	\begin{align}
		{\rm supp}(R_{m,j}^M(\omega) ) \subseteq\{i| d(j,i)\in [2R (\ell (m-1)-M),2R (\ell\, m+M)]\}\,.
	\end{align}
	A similar discussion also gives:
	\begin{align}
		\mathcal{S}_S&=  {\rm supp}\norbra{\sqrbra{z_j^\ell(m), \sum_{k=1}^{m}\,\widehat{H}_j^\ell(k)+z_j^\ell(m)}} ={\rm supp}\norbra{\sqrbra{z_j^\ell(m), \widehat{H}_j^\ell(m)}}\subseteq
		\\
		&\qquad\qquad\qquad\subseteq\{i| d(j,i)\in [2R(\ell (m-1)-1),2R(\ell m+1)]\}\,,
	\end{align}
	where we implicitly used the fact that $d({\rm supp}(\widehat{H}_j^\ell(k)),{\rm supp}(z_j^\ell(m)))> 2R\ell(m-k) $ to commute the elements in the sum with $k\neq m$. This shows that $\mathcal{S}_S = \mathcal{S}_R$. Then, this also means that $C_{X\cap Y}= C_{X_R\cap Y_R}= C_{X_S\cap Y_S}$. Using  directly  Eq.~\eqref{eq:173}, we get $C_{X\cap Y} = |\widehat{H}_j^\ell(m)| \leq c_D(2R\ell)^{D} m^{D-1}$.
	
	We are now ready to use Lemma~\ref{lemma:clusterExp2}. First, it should be noticed that:
	\begin{align}
		\|R^{(m)}_{m,j}(0)\|,\,\|S^{(m)}_{m,j}(0)\|\leq \norbra{\frac{\Gamma E\,\ell^{\frac{D}{2}} m^{\frac{D-1}{2}}}{\sigma_H}}^{n} (n!)
	\end{align}
	where we introduced the constant $\Gamma= \max\{(4c_D (2R)^D ),  (2c_D^2(2R)^{D})\}$. This expression directly implies that:
	\begin{align}
		\|R_{m,j}^M(\omega)\|,\,\|S_{m,j}^M(\omega)\| \leq \frac{1-\norbra{\frac{\Gamma \ell^{\frac{D}{2}} m^{\frac{D-1}{2}}}{\sigma_H} (E|\omega|)}^{M+1}}{1-\norbra{\frac{\Gamma \ell^{\frac{D}{2}} m^{\frac{D-1}{2}}}{\sigma_H} (E|\omega|)}}\,.
	\end{align}
	
	Moreover, let $\omega\leq \Omega_1 = \norbra{\frac{\sigma_H}{2E\,\Gamma \ell^{\frac{D}{2}} K^{\frac{D-1}{2}}}}$. Carrying out the same computations as in Sec.~\ref{subsec:opnorm}, in this regime we have:
	\begin{align}
		\|R_{m,j}^M(\omega)\|\leq 2\norbra{1 - \frac{1}{2^{M+1}}}\,;\;\;&\;\;\|S_{m,j}^M(\omega)\|\leq 2\norbra{1 - \frac{1}{2^{M+1}}}\,;
		\\
		\|R_{m,j}^\infty(\omega)-R_{m,j}^M(\omega)\|\leq 2\norbra{\frac{\Gamma \ell^{\frac{D}{2}} m^{\frac{D-1}{2}}}{\sigma_H} (E|\omega|)}^{M+1}\,;\;\;&\;\;\|S_{m,j}^\infty(\omega)-R_{m,j}^M(\omega)\|\leq 2\norbra{\frac{\Gamma \ell^{\frac{D}{2}} m^{\frac{D-1}{2}}}{\sigma_H} (E|\omega|)}^{M+1}\,;
		\\
		\|\partial^{(k)}_\omega R_{m,j}^M(\omega)\|\leq 2\,(k!) \,\norbra{\frac{E\,\Gamma \ell^{\frac{D}{2}} m^{\frac{D-1}{2}}}{\sigma_H} }^k\,;\;\;&\;\;\|\partial^{(k)}_\omega S_{m,j}^M(\omega)\|\leq 2\,(k!) \,\norbra{\frac{E\,\Gamma \ell^{\frac{D}{2}} m^{\frac{D-1}{2}}}{\sigma_H} }^k\,.
	\end{align}
	This concludes the proof of Lemma~\ref{lemma:clusterExp}.

\end{document}